\documentclass[11pt,reqno]{amsart}
\usepackage{amsthm, amsmath,amsfonts,amssymb,euscript,hyperref,graphics,color}
\usepackage{graphicx}
\usepackage{import}
\usepackage{tikz}
\usepackage{latexsym}
\usepackage{cite}%order of citation
\usepackage{verbatim}%comment
%\usepackage{ulem}%emph-underline
%\usepackage{todonotes}% enables todo notes \todo{blablabla}
%things to be added to the context
\newcommand{\C}[1]{\textcolor{red}{#1}}%specify the area to be changed

\newcommand{\nnb}{\nonumber}

\newcommand{\p}{\partial}

\newcommand{\di}{\mathrm{d}} %{\text{d}}
\newcommand{\dt}{\partial_t}
\newcommand{\tr}{\text{tr}}
\newcommand{\dive}{\text{div}}
\newcommand{\curl}{\text{curl}}
\newcommand{\lie}{\mathcal{L}} 
\newcommand{\dtau}{\partial_{ \tau}}

\newcommand{\D}{\mathcal{D}}

\newcommand{\W}{\mathcal{W}}
\newcommand{\R}{\bar{R}}
\newcommand{\Ew}{\mathcal{E}}
\newcommand{\Hw}{\mathcal{H}}
\newcommand{\tE}{\tilde{E}}
\newcommand{\Tr}{ {}^R{\breve T} }

\newcommand{\be}{\begin{equation}}
\newcommand{\ee}{\end{equation}}

\newcommand{\al}[2]{
	\begin{align}\label{E:#1}
		#2
	\end{align}
}
\newcommand{\als}[1]{
	\begin{align*}
		#1
	\end{align*}
}

\newtheorem{theorem}{Theorem}[section]
\newtheorem{lemma}[theorem]{Lemma}
\newtheorem{proposition}[theorem]{Proposition}
\newtheorem{corollary}[theorem]{Corollary}

\newtheorem{remark}[theorem]{Remark}

\numberwithin{equation}{section}

\allowdisplaybreaks[4]

\begin{document}
\title[Global stability of the $1+3$ open Milne]{Global stability of the open Milne spacetime}

\author[J. Wang]{Jinhua Wang} 
\email{wangjinhua@xmu.edu.cn}
\address{School of Mathematical Sciences, Xiamen University, Xiamen 361005, China}

\author[W. Yuan]{Wei Yuan} 
\email{yuanw9@mail.sysu.edu.cn}
\address{Department of Mathematics, Sun Yat-sen University, Guangzhou, China}
%\keywords{Open Milne model, Einstein scalar field equations, Nonlinear stability; MSC2010: 35Q76, 83C05.}

\begin{abstract}
The open Milne cosmological spacetime has a 3-dimensional Cauchy surface isometric to the (non-compact) hyperbolic space. We prove the globally nonlinear stability of the open Milne spacetime for both  massive and massless Einstein-scalar field equations and show that as time goes to infinity, the spatial metric tends to the hyperbolic metric. 
%We are able to establish the energy decay estimates in geodesic polar gauge with lower regularity for the trace part of the second fundamental form $\tr k$ than that in the local theory \cite{Fournodavlos-Luk-Kasner}. This allows us to retrieve the top order of $\tr k$ with uniform bound (without decay) when the scalar field is massive. Meanwhile,  we derive almost $t^{-1}$ decay estimates for the geometry and massless scalar field, using the spectrum theory for Laplacian on ${\bf H^3}$. 
%The approach we proposed is independent of the CMC foliation and lower bound for the eigenvalues of Einstein operator. Instead, 
The proof is based on the Gaussian normal coordinates, in which the decay rates of gravity are determined by the expanding geometry of Milne spacetime. 
\end{abstract}
\maketitle
%\tableofcontents

\section{Introduction}\label{sec-intro}

%The backgrounds that we consider are the following family of cosmological vacuum spacetimes. 
Let $M^3$ be a complete Riemannian manifold differmorphic to $\mathbb{R}^3$ and suppose it admits an Einstein metric $\gamma$ with negative Einstein constant. In 3-dimensional case, $\gamma$ is a hyperbolic metric with constant sectional curvature.  Without losing of generality, we can assume $\gamma$ has sectional curvature $-1$ simply by rescaling the metric.
In particular, we consider $(M^3, \gamma)$ is given by a hyperboloid ${\bf H^3}$ with the induced metric in Minkowski spacetime. The hyperbolic space is unique up to isometry due to the classification theorem of space forms. %(since we know that there is only one simply-connected Riemannian manifold up to isometry which has constant sectional curvature $-1$). 

Let $\mathcal{M}^4$ be a $4$-manifold of the form $\mathbb{R} \times M^3$. %where $I \subset \mathbb{R}$ is an interval,   
Then $(\mathcal{M}^4, \bar\gamma)$ with $\bar\gamma$ $$\bar\gamma = -dt^2 + t^2 \gamma$$ is a solution to the vacuum Einstein equations and known as the $(1+3)$-dimensional \emph{Milne model}. The Milne model is also known as the $\kappa=-1$ vacuum Friedmann-Robertson-Walker (FRW) model. We  call it the \emph{open Milne model} to distinguish it from the \emph{closed Milne spacetime} mentioned below.

By passing to a quotient of the open Milne spacetime, one obtains the closed Milne model, which is a cosmological vacuum spacetime admitting a Cauchy surface isometric to a 3-dimensional closed (compact without boundary) hyperbolic manifold. Andersson and Moncrief \cite{A-M-04} first proved the nonlinear stability of closed Milne model based on the constant mean curvature, spatially harmonic (CMCSH) gauge \cite{A-M-03-local}. Later, they \cite{A-M-11-cmc} generalized the stability result to a family of $(1+n)$--dimensional, spatially compact spacetimes whose spatial manifolds are stable, compact, negative Einstein spaces. The global stability of closed Milne spacetime was further investigated in variously non-vacuum context \cite{Andersson-Fajman-20, Fajman-Wyatt-EKG, Wang-J-EKG-2019, Barzegar-Fajman-charged, Faj-Urban-CC}. In addition, there were results considering the Kaluza-Klein spacetime built on the closed Milne model \cite{Branding-Fajman-Kroencke-18, Wang-KK-21}.

Related to the open Milne spacetime is the Minkowski spacetime, a flat solution of the vacuum Einstein equations. When restricted on the future of a hyperboloid in Minkowski spacetime, % Note that a hyperboloid is conformally compact. In general, it is technically useful to work with the conformal compactification of an asymptotically hyperbolic manifold. 
Friedrich \cite{Friedrich-91} used a conformal method to show that asymptotically simple hyperboloidal initial data close to Minkowskian hyperboloidal initial data evolve into a global solution of the $(1+3)$-dimensional Einstein-Maxwell-Yang-Mills equations. Furthermore, this solution has a similar asymptotic behavior as the Minkowski space.  The existence of such smooth hyperboloidal initial data was shown in \cite{hyper-data-LCF}. As a remark, the conformal method proof  \cite{Friedrich-91} relies on the conformal invariance of the $(1+3)$-dimensional Maxwell-Yang-Mills equations (i.e. the trace-free feature of energy momentum tensor in a 4-dimensional spacetime). 
The global stability of Minkowski spacetime with asymptotically flat data was proved by Christodoulou and Klainerman \cite{Christodoulou-K-93} through a covariant proof based on Bel-Robinson energy. Later, alternative proofs using wave coordinates were given by Lindblad and Rodnianski \cite{Lind-Rod-05, Lind-Rod-10}. Along with these works, more proofs for the Einstein equations with matter fields had been developed recently, see for instance \cite{Ionescu-EKG, Ma-Lefloch-EKG, Taylor-17, Lind-Taylor-17, F-J-S-17, Fajman-vlasov-2}.

In this paper, we concern the globally nonlinear stability of the open Milne spacetime %(or in other words, stability of the 3-dimensional non-compact hyperbolic space) 
for the Einstein-scalar field equations. %Note that, the scalar field equation is non-conformal invariant. %and hence it would be difficult to apply the conformal method of \cite{Friedrich-91}.
Stability problem for general $n$-dimensional non-compact, negative Einstein space will be considered in another article.

\subsection{Main result}
The $(1+3)$-dimensional Einstein-scalar field equations on the spacetime manifold $(\mathcal M, \, \breve g)$ take the form of
\begin{subequations}
\begin{equation}
\breve R_{\alpha \beta} - \frac{1}{2} \breve R \breve g_{\alpha \beta} = \breve{T}_{\alpha \beta}(\phi),   \label{eq-Einstein-source}
\end{equation}
\begin{equation}
 \breve{T}_{\alpha \beta}(\phi)= \breve D_\alpha \phi \breve D_\beta \phi - \frac{1}{2} \breve g_{\alpha \beta} \left( \breve D^\mu \phi  \breve D_\mu \phi + m^2 \phi^2 \right), \label{def-energy-Mom-kg}
 \end{equation}
 \end{subequations}
where $\breve R_{\alpha \beta}$ and $\breve R$ denote the Ricci and scalar curvature of the spacetime
metric $\breve g_{\alpha \beta}$ respectively, $\breve D$ is the covariant derivative associated to $\breve g_{\alpha \beta}$ and $\breve{T}_{\alpha \beta}(\phi)$ is the energy momentum tensor for the scalar field $\phi$. 
The Bianchi identities 
imply that the scalar field $\phi$ satisfies 
\begin{equation}\label{eq-kg}
\Box_{\breve g} \phi - m^2 \phi = 0,
\end{equation}
with the Laplacian operator given by $\Box_{\breve g} = \breve D^\alpha \breve D_\alpha$.  For $m^2>0$, $\phi$ satisfying \eqref{eq-kg} is called the massive scalar field (or Klein--Gordon field), otherwise $m^2=0$, it is referred as a massless scalar field. We remark that the energy momentum tensor \eqref{def-energy-Mom-kg} is not trace-free and hence the scalar field equation is not conformal invariant.

Before the statement of our main results, we introduce some notations. 
Throughout the paper, Greek indices $\alpha, \beta \cdots, \mu, \nu \cdots$ run over $0,\cdots,3$.  Latin indices $i,j, \cdots$ run over $1,\cdots, 3$. On the spacetime manifold $(\mathcal{M}^4 = \mathbb{R} \times M^3, \, \breve g)$, the spacetime metric $\breve{g}$ in geodesic polar coordinates (or Gaussian normal coordinates) takes the form of
\begin{equation}\label{metric-form-tau}
\breve g_{\mu\nu} = - \di t^2 + \tilde g_{ij} \di x^i \di x^j.
\end{equation}
The spatial metric $\tilde g_{ij}$ is the induced metric of $\breve g_{\mu \nu}$ on the spatial manifold $M_t := \{t\} \times M^3$. Let \[ \tilde k_{ij}=-\frac{1}{2} \mathcal{L}_{\p_t} \tilde g_{ij} \] be the corresponding second fundamental form. We define the normalized variables
\begin{equation}\label{rescale-metric}
g_{ij} =t^{-2} \tilde g_{ij},  \quad k_{ij}  =t^{-1} \tilde k_{ij},
\end{equation}
and denote the traceless and (perturbed) trace parts of $k_{ij}$ to be
\be\label{rescale-2}
 \Sigma_{ij} = k_{ij} - \frac{\tr_g k }{3} g_{ij}, \quad \eta = \frac{\tr_g k}{3} + 1,
\ee
where $\tr_g k=g^{ij} k_{ij} = t  \tilde g^{ij} \tilde k_{ij}$, and $\tr_g$ refers to the operation of taking trace with respect to the metric $g$. Throughout this paper, we denote $\nabla$ to be the connection corresponding to $g$, and $R_{ij}$ be its Ricci curvature.
The notation $x \lesssim y$ means $x \leq Cy$ for some universal constant $C$, and $x \sim y$ means both $x \lesssim y$ and $y \lesssim x$ hold. We also use the notation $\lesssim_N$ to point out the dependence of the constant $C$ on some fixed constant $N$. The Sobolev space with respect to $g$ on $M$, $H_{k} (M, g)$, is abbreviated as $H_k$, and $\| \cdot \|$ means the norm of $L^2(M,g)$.

Let $(M^3, \, g_0)$ be a smooth complete Riemannian manifold diffeomorphic to $\mathbb{R}^3$ with positive injectivity radius. 
\begin{theorem}\label{thm-global-existence}
Assume that $(M^3,  \, g_0, \,  k_0, \, \phi_0, \, \phi_1)$ is a rescaled data set for the Einstein scalar field equations % \eqref{eq-Einstein-source}-\eqref{def-energy-Mom-kg}, \eqref{eq-kg}, 
where $\phi_0 = \phi|_{t=t_0}, \, \phi_1 = t \p_t \phi|_{t =t_0}$. There is a constant $\varepsilon > 0,$ so that if for some fixed integer $N \geq 2$,
\begin{align} 
& \|g_0-\gamma\|_{H_{N+1}(M, \,g_0)} + \|  k_{0} + g_{0} \|^2_{H_{N+1}(M, \,g_0)} +   \| \tr_{g_0} k_{0} + 3 \|^2_{H_{N+2}(M, \, g_0)}   \nnb \\
&  +  \| R_{0ij} + 2 g_{0ij} \|^2_{H_N(M, \, g_0)} +  \|\phi_0\|^2_{H_{N+2}(M, \, g_0)} +  \|\phi_1\|^2_{H_{N+1}(M,  \,g_0)}  \leq \varepsilon^2, \label{intro-initial-data}
\end{align}
then $(\mathcal{M}^4, \, \breve g)$ with $\breve g = -dt^2 + t^2 g$, $t \geq t_0$ is a global solution of the Einstein scalar field system, where $ R_{0ij}$ denotes the Ricci curvature of $ g_0$. Moreover, the spatial metric $g(t)$ tends to the hyperbolic metric $\gamma$, as $t$ goes to infinity.

In the case of massive Einstein scalar field $(m^2>0)$, we have 
\als
{
& t^{1-\delta} ( \|\Sigma_{ij} \|_{H_{N+1}} + \|R_{ij} + 2g_{ij}\|_{H_N}  ) +  t^{1-C_N \varepsilon} \|\eta \|_{H_{N+1}}+ \|\eta \|_{H_{N+2}}  \nnb \\
&\qquad + \|g_{ij} - \gamma_{ij} \|_{H_{N+1}} + t^{\frac{1}{2}} ( \|\phi\|_{H_{N+2} } +  \|t \p_t \phi\|_{H_{N+1} } ) \lesssim_N \varepsilon.
}
In the case of massless Einstein scalar field $(m^2=0)$, 
\als
{
& t^{1-\delta} (  \|\Sigma_{ij} \|_{H_{N+1}} + \|R_{ij} + 2g_{ij}\|_{H_N}  ) + t \|\eta \|_{H_{N+2}} \nnb \\
&\qquad + \|g_{ij} - \gamma_{ij} \|_{H_{N+1}} + t^{1-\sqrt{\sigma}} ( \|\phi\|_{H_{N+2} } +  \|t \p_t \phi\|_{H_{N+1} } ) \lesssim_N \varepsilon.
}
Here $0<\delta, \, \sigma<1/6$ are two fixed, independent constants, and the constant $C_N$ depends on $N$ (in particular, not on $\varepsilon$ and $t_0$).

\end{theorem}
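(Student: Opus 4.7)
The plan is to reduce Theorem \ref{thm-global-existence} to a bootstrap estimate for the rescaled variables $(g_{ij}-\gamma_{ij},\,\Sigma_{ij},\,\eta,\,\phi,\,t\p_t\phi)$ on the constant-$t$ slices $M_t$, combined with the standard local existence and continuation principle for the Einstein--scalar field system in Gaussian normal coordinates. Concretely, I would fix a small $\varepsilon_0>0$ and assume on a maximal time interval $[t_0,T^*)$ that each quantity is bounded by $\varepsilon_0$ in the $t$-weighted norm suggested by the conclusion; the goal is then to improve every bound by a factor $1/2$, which forces $T^*=\infty$ and yields the decay rates stated in the theorem.

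The first step is to derive the evolution system for the rescaled variables. Substituting $\breve g = -dt^2 + t^2 g_{ij}\,dx^i dx^j$ and the definitions \eqref{rescale-metric}--\eqref{rescale-2} into the Gauss--Codazzi equations, the Ricci decomposition, and the Klein--Gordon equation \eqref{eq-kg} produces a coupled first-order system of schematic form
\als{
\p_t g_{ij} &= -\tfrac{2}{t}\bigl(\Sigma_{ij} + \eta\, g_{ij}\bigr),\\
\p_t \Sigma_{ij} &= -\tfrac{3}{t}\Sigma_{ij} - \tfrac{1}{t}\bigl(R_{ij}+2g_{ij}\bigr)_{\mathrm{tf}} + \tfrac{1}{t}\,\mathcal N^{\Sigma}_{ij},\\
\p_t \eta &= -\tfrac{3}{t}\eta + \tfrac{1}{t}\,\mathcal N^{\eta},
}
supplemented by a rescaled Klein--Gordon equation for $\phi$ and the Hamiltonian/momentum constraints. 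The linear damping terms $-3\Sigma/t$ and $-3\eta/t$ are precisely the dissipation provided by Milne expansion, while the constant $-2$ in $R_{ij}+2g_{ij}$ encodes that $\gamma$ has sectional curvature $-1$, so that the open Milne background is a stationary point in these variables.

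Next I would construct an $H_N$-hierarchy of weighted energies: one controlling $\|\Sigma\|_{H_{N+1}}+\|R+2g\|_{H_N}$ at the rate $t^{-1+\delta}$; one controlling $\|\eta\|_{H_{N+2}}$ uniformly, together with a lower-order bound on $\|\eta\|_{H_{N+1}}$ at rate $t^{-1+C_N\varepsilon}$; and one for the scalar field built from a multiplier of the form $t\p_t\phi + \alpha\,\phi$. Commuting with $\nabla^k$ and integrating by parts, while exploiting the positivity of operators of type $-\Delta_g + c$ with $c>0$ on the hyperbolic background, should produce differential inequalities of the schematic form $\tfrac{d}{dt}\mathcal E \le -\tfrac{2c}{t}\mathcal E + \tfrac{1}{t}\,(\text{bootstrap-small nonlinearity})$, whose integration yields the claimed polynomial decay. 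The top-order bound on $\eta$ is recovered from the Hamiltonian constraint, which gains one derivative relative to the evolution equation. In the massive case the Klein--Gordon mass produces the $t^{-1/2}$ decay for $\phi$, while in the massless case the absence of that dissipation is responsible for the slower rate $t^{-1+\sqrt{\sigma}}$ and for the necessity of carrying the small loss parameter $\sigma$ through the estimates. Finally the bound $\|g-\gamma\|_{H_{N+1}}\lesssim \varepsilon$ is closed by integrating $\p_t g_{ij}$ in $t$ against the improved bounds on $\Sigma$ and $\eta$, which are integrable at rates strictly better than $t^{-1}$.

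The principal obstacle I anticipate is the borderline estimate for $\eta$ in the massive case: because the Klein--Gordon stress-energy sources $\eta$ with rate only $t^{-1}$ from a $\phi$ that itself decays only as $t^{-1/2}$, and because $\eta$ is fed back into virtually every equation in the system, one has to split the analysis of $\eta$ into two separate energies---a top-order $H_{N+2}$ energy controlled via the Hamiltonian constraint, where no $\varepsilon$-loss is permitted, and a lower-order $H_{N+1}$ energy whose Gr\"onwall coupling to the scalar field produces the unavoidable loss $t^{C_N\varepsilon}$. A secondary technical point is to ensure that the Gaussian normal foliation remains smooth for all $t\ge t_0$ by ruling out focal points of the $\p_t$-geodesic congruence; this is automatic once $\Sigma$ and $\eta$ are shown to be small via the bootstrap, since these bounds control $k_{ij}+g_{ij}$ pointwise.
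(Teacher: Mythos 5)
There are two genuine gaps, both at the level of the mechanism that closes the top-order estimates. First, your plan to control $\|\Sigma\|_{H_{N+1}}+\|R_{ij}+2g_{ij}\|_{H_N}$ by commuting $\nabla^k$ through the first-order transport equation for $\Sigma$ cannot close: at top order that equation (see \eqref{E:eq-trans-Sigma}) is sourced by $\nabla^{N+1}(R_{ij}+2g_{ij})$, i.e.\ $N+3$ derivatives of the metric, while the bootstrap only carries $g-\gamma\in H_{N+1}$. The paper avoids this loss by running the energy estimates on the rescaled Bianchi equations \eqref{eq-1+3-bianchi-T-E}--\eqref{eq-1+3-bianchi-T-H} for the electric and magnetic parts of the Weyl tensor (a Bel--Robinson-type energy, using the div-curl cancellation of Lemma \ref{lemma-div-curl}), estimating $\|\Sigma\|_{L^2}$ from the transport equation, and then upgrading to $\|\Sigma\|_{H_{N+1}}$ through the elliptic div-curl system \eqref{E:div-curl-k-hat}, whose sources $\nabla\eta$ and $\Hw$ are already controlled. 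This in turn rests on the observation that the transport equation \eqref{eq-evolution-xi} for $\eta$ ``saves regularity'' (its right-hand side is at the same differentiability level as $\eta$ itself), so that $\|\eta\|_{H_{N+1}}$ can be estimated a priori and then fed into the elliptic system; none of this hierarchy appears in your outline, and without it the derivative count does not close.

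Second, your claim that the top-order bound $\|\eta\|_{H_{N+2}}$ ``is recovered from the Hamiltonian constraint, which gains one derivative'' is backwards. The Hamiltonian constraint \eqref{Gauss-Ricci-trace-hat -k} expresses $\eta$ in terms of the scalar curvature, which is two derivatives of $g$; with $g-\gamma$ only in $H_{N+1}$ this yields $\eta$ in $H_{N-1}$, far below $H_{N+2}$. The paper instead recovers $\nabla^{N+2}\eta$ by a renormalization in the spirit of Fournodavlos--Luk (Section \ref{sec-improve-regularity}, Proposition \ref{pro-improve-eta}): the unbounded terms $\nabla_{I_N}\Delta\Sigma_{ij}\cdot\Sigma^{ij}$ and $\nabla_{I_N}\Delta\bar R_{\tau\tau}(\phi)$ in the differentiated $\eta$-equation are rewritten, via the wave equation \eqref{E:wave-Sigma} for $\Sigma$ and the Klein--Gordon equation, as exact $\partial_\tau$-derivatives plus controllable remainders, and in the massive case this yields only a uniform (non-decaying) bound. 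A minor further point: the damping coefficients in your schematic system are off (the actual equations read $\partial_\tau\eta+\eta=\cdots$ and $\mathcal L_{\partial_\tau}\Sigma+\Sigma=\cdots$, not $-3\eta/t$, $-3\Sigma/t$), and since the decay rates are read off precisely from these coefficients, this would propagate into incorrect rates.
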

\begin{remark}
As suggested in the proof, Theorem \ref{thm-global-existence} holds automatically if $M$ is a compact manifold without boundary. 
\end{remark}

\begin{remark}
As shown in the proof, it holds that $\| t\p_t g_{ij} \|_{H_{N+1}}  \lesssim_N \varepsilon  t^{\delta-1}$.  Together with $\|R_{ij} + 2g_{ij}\|_{H_N } \lesssim \varepsilon t^{-1+\delta}$, it implies $g(t)$ tends to an Einstein metric (which is the hyperbolic metric $\gamma$ in dimension three) as $t \rightarrow +\infty$. In fact, we will see it in the proof that $\| R_{imjn} + g_{ij } g_{m n} - g_{in} g_{jm} \|_{H_{N} } \leq \varepsilon t^{-1+\delta}$ holds. 
Note that on a 3-dimensional Riemannian manifold, the Ricci curvature determines the full curvature tensor and hence it is Einstein if and only if it has constant sectional curvature \cite{Besse-Einstein}.
More elaborate rigidity results for the class of asymptotically hyperbolic manifolds could be found for instance in \cite{Shi-Tian-AH,H-Qing-Shi-AH}.
\end{remark}

\begin{remark}
The Milne geometry exhibits $t^3$ volume growth, which suppresses the formation of singularity. As a consequence, it suffices to carry out the energy estimates for the long time dynamics in the standard (unweighted) Sobolev space.  Note that, solutions of the constraint equations were constructed in some weighted Sobolev spaces \cite{hyper-data-LCF,Andersson-Ch-AH,Gi-Sa-AH,Isen-Park-AH} for asymptotically hyperbolic manifolds. Therefore, it would be interesting to work out the energy estimates in some weighted Sobolev space to give more control on the behavior of the solution at infinity.
\end{remark}

\subsection{Related works}
Let us begin with reviewing some relevant works and approaches on the closed Milne model. In the proof for general $(1+n)$-dimensional case with the spatial manifold $(M^3, \gamma)$ being a stable, negative Einstein space of compact type, Andersson-Moncrief \cite{A-M-11-cmc} introduced the CMCSH gauge and developed energy estimates through a wave type energy  for the perturbed metric. Moreover, it was shown in \cite{A-M-11-cmc} that the decay rates of gravity depend on the stability properties of the Einstein geometry. Combined with a sharp estimate on the lower bound of eigenvalues $\lambda$ of the Einstein operator \cite{Kroncke-15} which yields $\lambda \geq 1$ when restricted in the $3$-dimensional negative Einstein space, the method of \cite{A-M-11-cmc} provided $t^{-1+\varepsilon}$ decay estimates in $(1+3)$-dimensional case and had prompted more works in non-vacuum context \cite{Andersson-Fajman-20, Branding-Fajman-Kroencke-18,Fajman-Wyatt-EKG, Barzegar-Fajman-charged, F-O-O-W-fluid23}. 
\begin{comment}
We remark that in views of the literature \cite{A-M-11-cmc, Gerhardt-CMC-open} and \cite{LY-2010}, where a Poincar\'{e} type inequality was derived on the non-compact hyperbolic manifold, it is possible to give a proof for the stability of open Milne spacetime (or stability of hyperbolic metric $\gamma$ for the Einstein flow) using the CMCSH gauge. However, this might be difficult because in the noncompact case, we are not restricted to the TT--tensor, which is important in the proof of Andersson-Moncrief.
\end{comment}

Alternatively, when considering the Einstein Klein--Gordon system, the proof of \cite{Wang-J-EKG-2019} is based on the CMC with zero shift gauge and Bel-Robinson energy (cf. \cite{A-M-04,Christodoulou-K-93}). In this framework \cite{Wang-J-EKG-2019}, energy estimates are built on the Bianchi equations complemented with an elliptic equation for the lapse, and an elliptic, div-curl system together with a transport equation for the second fundamental form.  In particular, the $t^{-1+\varepsilon }$ decay estimate for the gravity in \cite{Wang-J-EKG-2019} was due to the expanding geometry of Milne spacetime which had been reflected on the structure of rescaled Bianchi equations. More specifically, there are linear terms with favourable signs on the left hand side of the rescaled Bianchi equations \eqref{eq-1+3-bianchi-T-E}--\eqref{eq-1+3-bianchi-T-H}, and the coefficients of these linear terms determine the decay rates. Similarly, the expanding geometry is reflected on the transport equation for the second fundamental form $\Sigma$ as well, see \eqref{eq-evolution-2}. 
We know that, with the CMC gauge,  the main difficulty in \cite{Wang-J-EKG-2019} comes from the borderline terms arising from the massive term in the Klein--Gordon equation which, if taken higher derivatives, is coupled with commutators between the spatial derivatives $\nabla$ and lapse. As a consequence, \cite{Wang-J-EKG-2019} had made significant efforts to linearize these nonlinear borderline terms, including refined estimates for the lapse, $L^\infty-L^\infty$ estimate and hierarchies of energy estimates for the Klein--Gordon field. Motivated by these facts, it is natural to work in a gauge with constant lapse (for instance, the Gaussian time coordinate where the lapse is fixed to be constantly one), so that there would be no such borderline terms and the proof of \cite{Wang-J-EKG-2019} can be significantly simplified. 

On the other hand, continued with the CMC foliations, the existence of CMC foliations in spacetimes with non-compact spatial manifold is not known until recently Gerhardt \cite{Gerhardt-CMC-open} gave the proof in a spacetime asymptotic to the open FRW spacetime with spatial curvature $\kappa=0, \, -1$.
However, we prefer the Gaussian gauge for the above reason of simplifications.
Gaussian normal coordinate is known as the geodesic polar coordinate \eqref{metric-form-tau} or synchronous coordinate.  Local existence theorem for the vacuum Einstein equations in geodesic polar gauge can be implied by the work of Andersson-Rendall \cite{Andersson-Rendall-quiescent} in the analytic category, while for data with bounded energy, it was recently addressed by Fournodavlos-Luk \cite{Fournodavlos:2021aa, Fournodavlos-Luk-Kasner}. In particular, Fournodavlos-Luk  \cite{Fournodavlos-Luk-Kasner} successfully established an energy argument through the equivalent, reduced system  \eqref{E:reduced-sys}: a second order equation for the second fundamental form $\tilde k$ coupled with transport equations for $h:=\tr_{\tilde g} \tilde k$ (viewed as an independent variable) and the metric $\tilde g$. Since the second order equation for $\tilde k$ contains the second derivative of $h$, this reduced system has an apparent issue of regularity. Remarkably, it was resolved by means of renormalizations together with elliptic estimates \cite{Fournodavlos-Luk-Kasner} (referring to Section \ref{sec-improve-regularity} as well). Hence the local existence theorem in \cite{Fournodavlos-Luk-Kasner} was achieved in the space $(\tilde g, h, \tilde k, \p_t \tilde k) \in H_{N+2} \times H_{N+2} \times H_{N+1} \times H_{N}$, $N\geq 2$.  

\subsection{Comments on the proof}
Based on \cite{Wang-J-EKG-2019, Fournodavlos-Luk-Kasner}, we will establish an approach that is independent of the CMC foliation and the theory of infinitesimal Einstein deformations, so that the stability theorem holds regardless of whether the spatial manifold is compact or not. In addition, the analysis is independent of lower bound for the eigenvalues of Einstein operator or certain Poincar\'{e} type inequality. While it only depends on the expanding geometry of the spacetime and thus is promising to be extended to higher dimensional case. 

In this article, we mainly adopt the Gaussian normal coordinates (or geodesic polar coordinates). It is known that such a gauge leads to the equation $\p_t \tr \tilde k = |\tilde k|^2 \geq \frac{1}{3} (\tr \tilde k)^2$. In contrast to the asymptotically flat case \cite{Christodoulou-K-93}, nevertheless, $\tr \tilde k$ will not blow up in finite time since it has negative sign on the initial hypersurface close to the hyperbolic manifold. This implies the geodesic polar gauge will be non-singular during the evolution. Moreover, it allows us to take non-CMC data in the non-compact case straightforwardly\footnote{In the compact case, one removes the CMC restriction on data through the strategy of taking advantage of the existence of local development for data close to CMC \cite{Ringstrom-09} and finding a CMC surface in such a local development \cite{Fajman-Kroencke-15}. For asymptotically flat case, one can refer \cite{Bartnik-Maximal, Christodoulou-K-93} for similar ideas. %The existence of CMC surface is stated in \cite{Andersson-Iri} for a class of asymptotically Schwarzschild spacetime.   
}.

In the approach of geodesic polar gauge, it is difficult for us to work out the long time energy estimates based on the second order equation for $\tilde k$ (which equivalently gives a second order equation for $\Sigma$ \eqref{E:wave-Sigma}). %Compared to \cite{Fournodavlos-Luk-Kasner} where the spatial manifold of the background spacetime is a flat torus, we have negative spatial curvature, which gives rise to a linear term $2\Sigma$ with an unfavourable sign on the right hand side of \eqref{E:wave-Sigma}. Since we prefer our analysis to be general and thus independent of some Poincar\'{e} type inequality, this unfavourable linear term requires us to derive decay estimate for $\|\Sigma\|_{L^2}$ ahead. However, in this gauge, the transport equation for $\Sigma$ has shown its dependence on the spacetime curvature, which means that we are unable to avoid estimating the spacetime curvature. Therefore it will be more straightforward to work with the Bianchi equations to estimate the Bel-Robinson energy for the curvature.
Since when the massive scalar field is present, we can only obtain a uniform bound (without decay) for the top order derivative term $\nabla^{N+2} \eta$ (Section \ref{sec-improve-regularity}), which will eventually leads to a logarithmic growth for the top order derivative metric $\nabla^{N+2} (g-\gamma)$. In view of the wave equation of $\Sigma$ \eqref{E:wave-Sigma}, the non-decaying $\nabla^{N+2} \eta$ and growing $\nabla^{N+2} (g-\gamma)$ will be the main obstructions to close the energy argument for this reduced system \eqref{E:reduced-sys} in the long time dynamics. 

In view of the above observations, we follow the approach of Bel-Robinson energy, working on the Bianchi equations coupled with equations for $\eta$ and $\Sigma$. %which is relatively easier to be adapted to the case of non-compact spatial manifolds. 
Compared to the reduced equations \eqref{E:reduced-sys}, this original system of Einstein equations have the advantages of ``decoupling'' lower order derivatives of $\eta$ from the top order one. In other words, due to a structure of saving regularity for the transport equation of $\eta$, we are able to carry out the main energy estimates for the original system with the regularity $(\eta,\, \Sigma) \in H_{N+1} \times H_{N+1}$ (which will be explained in the next paragraph). On the contrary, in the local existence theory \cite{Fournodavlos-Luk-Kasner}, one needs $(\eta, \,\Sigma) \in H_{N+2} \times H_{N+1}$ instead when focusing on the reduced system. Once the main energy argument is established, we can improve the regularity of $\eta$ by retrieving the estimate for $\nabla^{N+2}\eta$ in the same manner as \cite{Fournodavlos-Luk-Kasner} (referring to Section \ref{sec-improve-regularity} as well). With this strategy, although the term $\nabla^{N+2}\eta$ is merely uniformly bounded without decay in the massive case, it will not interrupt the main energy estimates and we can still obtain proper decay estimates for other variables. 

Meanwhile, unlike the CMC gauge \cite{Wang-J-EKG-2019}, the geodesic polar gauge has an obvious difficulty: the term $\nabla \tr_g k = \nabla \eta$ does not vanish, and thus the constraint equations for the second fundamental form $k$ \eqref{E:div-curl-k-hat} fail to be elliptic. The resolution of this difficulty lies in an observation that the transport equation \eqref{eq-evolution-xi} of $\eta$,
\be\label{eq-eta-into}
t \p_{t} \eta + \eta = \eta^2 + \frac{1}{3} |\Sigma|^2 + \frac{1}{3} t^2 \breve R_{t t},
\ee
admits a structure of saving regularity. That is, as the second derivative of metric, $\p_t \eta$  is identical to terms on the first derivative of metric, such as $\eta^2$, $\Sigma^2$, and $\breve R_{t t} = (\p_t \phi)^2 - m^2 \phi^2$ due to the Einstein-scalar field equations \eqref{eq-Einstein-source}-\eqref{def-energy-Mom-kg}. Note that,  $\p_t \phi$ has the same regularity as the first derivative of metric when the field equations being viewed roughly as a wave system for the metric and $\phi$.
Thus \eqref{eq-eta-into} helps us deriving a decay estimate for $\|\eta\|_{H_{N+1}}$ a priori. After that, with $\nabla \eta$ being viewed as a bounded source term, the div-curl system becomes \eqref{E:div-curl-k-hat} is an elliptic system for the traceless part $\Sigma$. These elliptic estimates together with the decay estimate for $\|\Sigma\|_{L^2}$, which follows from the transport equation of $\Sigma$, give rise to the decay estimate for $\|\Sigma\|_{H_{N+1}}$. Roughly speaking, the most difficult part in \cite{Wang-J-EKG-2019}, hierarchies of estimates between the lapse and Klein--Gordon field, are replaced by hierarchies between the estimates for $\eta$ and $\Sigma$, which technically simplifies the proof of \cite{Wang-J-EKG-2019} as well. Moreover, since the $L^\infty-L^\infty$ estimate for the Klein--Gordon field is no longer needed now, we are allowed to complete the proof with one order lower regularity with $N \geq 2$ than $N = 3$ in \cite{Wang-J-EKG-2019}.

%Along with the long time energy estimates, the local existence theorem has to be reformulated, since we intend to get rid of the top order $\nabla^{N+2} (g-\gamma)$ which will have energy growth in the massive case. %In fact, the $H_{N+2}$ norm of the metric can be replaced by $H_N$ norm of the Ricci curvature \cite{Christodoulou-K-93}. 
%Motivated by \cite{Christodoulou-K-93} and \cite{Fournodavlos-Luk-Kasner}, we have modified local existence theorem in Appendix \ref{sec-local}  for practice.

Next, we turn to the massless case. When the geometry of  spacetime is (or close to) the closed Milne spacetime, the massless scalar field rarely decays \cite{Branding-Fajman-Kroencke-18, Wang-KK-21}, since the lowest eigenvalue of Laplacian on a closed hyperbolic manifold can be arbitrarily small. Note that, with a spectral assumption $\lambda (-\Delta_{\gamma}) >1$ for Laplacian operator on the compact spatial manifold, Fajman and Urban \cite[Section 9]{Faj-Urban-CC} had obtained some decay rates for the scalar field. Fortunately, for non-compact hyperbolic manifold ${\bf H^3}$, the eigenvalue of Laplacian operator has a positive lower bound ($\lambda (-\Delta_{\gamma}) \geq 1$) \cite[Chapter II, Theorem 5]{Chavel}. Based on this fact, and followed the idea of modified, wave type energy \cite{A-M-11-cmc}, we can derive almost $t^{-1}$ decay for both the gravity and massless scalar field. 
As a remark, unlike the massive case, one is able to prove the $t^{-1}$ decay estimate for $\|\eta\|_{H_{N+2}}$  in the massless case. 

Compared with the closed Milne model, where the spatial manifold is compact without boundary, one of the differences in the non-compact case lies in that we need to establish density theorems based on proper regularity of the metric (see Proposition \ref{pro-density} and its applications, corollaries \ref{coro-density}--\ref{prop-elliptic-Delta-1}). In fact, we only have uniform bound for lower regularity ($N+1$ derivatives) of the metric instead of $N+2$ derivatives as usual.

The article is organized as follows. In Section  \ref{preliminaries}, we introduce some relevant notations, and geometric equations. In the sections \ref{sec-EKG} and \ref{sec-EW}, we establish the energy argument for the massive and massless Einstein-scalar field equations respectively. In the end, we collect the local existence theorem, the density theorem and some geometric identities in the appendix.  

{\bf Acknowledgement} J.W. is supported by NSFC (Grant No. 12271450 and No. 11701482). W.Y. is supported by NSFC (Grant No. 12071489 and No. 12025109).

\section{Preliminary}\label{preliminaries}

\subsection{Lorenzian geometric equations}\label{sec-einstein-eq}

Recall the spacetime metric $\breve{g}_{\mu \nu}$ in geodesic polar coordinate \eqref{metric-form-tau} and the rescaled variables \eqref{rescale-metric}--\eqref{rescale-2} on the spatial manifold $M_t:= \{t\} \times M^3$. In particular, $g_{ij}=t^{-2} \tilde g_{ij}$ is the rescaled spatial metric, and $\nabla$ is the corresponding connection. It implies
\[g^{ij}=t^{2} \tilde g^{ij}, \quad   \di \mu_g =t^{-3} \di \mu_{\tilde g},\]
and the rescaled curvatures are given by
\begin{align*}
 R_{imjn} &= t^{-2} \tilde R_{imjn}, \quad  R_{ij } = \tilde R_{ij}, \quad   R = t^2 \tilde R.
\end{align*}
The notations $R_{imjn}$ and $ \tilde R_{imjn}$ denote the Rieman tensors with respect to $g$ and $\tilde g$ respectively.
We  introduce the logarithmic time 
\begin{equation}\label{def-tau}
\tau := \ln t,
\end{equation}
so that $\p_{\tau} = t\p_t$.
For notational convenience, we introduce the rescaled spacetime metric \[\bar g_{\mu \nu} =t^{-2} \breve{g}_{\mu \nu} = - \di \tau^2 + g_{ij} \di x^i \di x^j,  \quad \bar g^{\mu \nu}=t^{2} \breve{g}^{\mu \nu}, \]
 and define
\begin{align*}
%\bar g_{\mu \nu} &=t^{-2} \breve g_{\mu \nu}, \quad \quad \,\,\, \bar g^{\mu \nu}=t^{2} \breve g^{\mu \nu}, \\
\bar R_{\mu \alpha \nu \beta} &= t^{-2}  \breve R_{\mu \alpha \nu \beta}, \quad \bar R_{\mu  \nu } = \breve R_{\mu  \nu}, %\quad  \,\, \bar R = t^2 \breve R.
\end{align*}
where $\breve R_{\mu \alpha \nu \beta}$,  $\breve R_{\mu  \nu}$ denote the Rieman and Ricci tensors with respect to $\breve g$.

The geometric structure equations are given by
\begin{subequations}
\begin{align}
\mathcal{L}_{\dtau} g_{ij} &=-2 \eta g_{ij} - 2 \Sigma_{ij},  \label{eq-evolution-1} \\
\p_{\tau} \eta + \eta &= \eta^2 + \frac{1}{3} |\Sigma|^2 + \frac{1}{3} \R_{\tau \tau}, \label{eq-evolution-xi} \\
\mathcal{L}_{\dtau} \Sigma_{ij} + \Sigma_{ij} &=  \R_{i \tau j \tau} - \Sigma_{ip} \Sigma_{j}^p -  \frac{1}{3} (\R_{\tau \tau} + |\Sigma|^2) g_{ij}. \label{eq-evolution-2}
\end{align}
\end{subequations}
We also have the Gauss equations
\al{Gauss-Riem-hat-k}
{
R_{imjn} &= -  \frac{1}{2} ( g \odot g )_{imjn} + ( \eta -  \frac{\eta^2}{2} ) ( g \odot g )_{imjn}  \nnb \\
& + ( 1- \eta ) ( g \odot \Sigma )_{imjn} - \frac{1}{2} ( \Sigma \odot \Sigma )_{imjn}   + \R_{imjn},
}
and the Codazzi equations
\begin{equation}
\nabla_i \Sigma_{jm} -\nabla_j \Sigma_{im} + \nabla_i \eta g_{jm} - \nabla_j \eta g_{i m}=\R_{\tau mij}, \label{Codazzi-curl-k}
\end{equation}
where $\odot$ is  the \emph{Kulkarni-Nomizu product} 
\begin{equation*}%\label{def-odot}
( \xi \odot \zeta )_{imjn} = \xi_{ij} \zeta_{mn} - \xi_{in} \zeta_{jm} + \zeta_{ij} \xi_{mn} - \zeta_{in} \xi_{jm},
\end{equation*}
for any symmetric $(0, 2)$-tensors $\xi$ and $\zeta$.
Taking contractions on \eqref{E:Gauss-Riem-hat-k} and \eqref{Codazzi-curl-k} leads to
\begin{subequations}
\begin{align}
R_{ij} +  2 g_{ij} &= ( 4 \eta - 2 \eta^2 ) g_{ij} + \Sigma_{ip} \Sigma_{j}^p + ( 1 - \eta ) \Sigma_{ij} + \R_{i \tau j \tau} + \R_{ij},\label{Gauss-Ricci-hat-k}\\
R +  6 &= 12 \eta - 6 \eta^2  + \Sigma_{ij} \Sigma^{ij}  + 2 \R_{\tau \tau} + \R, \label{Gauss-Ricci-trace-hat -k} \\
\nabla^i \Sigma_{ij} - 2 \nabla_j \eta &= - \R_{\tau j}. \label{Codazzi-div-k} 
\end{align}
\end{subequations}
In view of \eqref{Gauss-Ricci-hat-k}, \eqref{eq-evolution-2} can be rewritten  alternatively as
\al{eq-trans-Sigma}{
\dtau \Sigma_{ij} +2 \Sigma_{ij} &= R_{ij} +  2 g_{ij}  + \eta \Sigma_{ij}
- 2 \Sigma_{ip} \Sigma_{j}^p \nnb \\
& - \R_{ij} -  \frac{1}{3} (\R_{\tau \tau} + |\Sigma|^2) g_{ij} - ( 4 \eta - 2 \eta^2 ) g_{ij}.
} 

We next discuss the Bianchi equations.
Let $\mathcal{W}_{\alpha \beta \gamma \delta}$ be the Weyl tensor of $\breve R_{\alpha \beta \gamma \delta}$, and
\begin{equation}\label{eq-Bianchi-J}
{}^{*}\mathcal{W}_{\alpha \beta \gamma \delta} = \frac{1}{2} \epsilon_{\alpha \beta \mu \nu} \mathcal{W}^{\mu \nu}{}_{\! \gamma \delta}.
%\quad  \mathcal{W}^*_{\alpha \beta \gamma \delta} = \frac{1}{2} \mathcal{W}_{\alpha \beta}{}^{\! \mu \nu} \epsilon_{\mu \nu \gamma \delta}.
\end{equation}
be the Hodge dual of $\mathcal{W}$.
The electric and magnetic parts $\Ew, \, \Hw$ of the Weyl field $\mathcal{W}$, with respect to the foliation $\{M_t\}$ are defined by
\begin{equation}\label{def-electric-magnetic}
\Ew_{\alpha \beta} = \mathcal{W}_{\alpha \mu \beta \nu} \p_{t}^\mu \otimes \p_{t}^\nu, \quad \Hw_{\alpha \beta} = {}^{*}\mathcal{W}_{\alpha \mu \beta \nu} \p_{t}^\mu \otimes \p_{t}^\nu.
\end{equation} 
We shall allow ourselves to use $\W$ to represent the electric part $\Ew$ or the magnetic part $\Hw$.
The $1+3$ rescaled Bianchi equations then read as follows \cite[Proposition 7.2.1]{Christodoulou-K-93}, \cite[Corollary 3.2]{A-M-04},  
\begin{subequations}
\begin{equation}
\begin{split}
 & \lie_{\dtau} \Ew_{ij} - \curl \Hw_{ij} + \Ew_{ij} \\
   = {} & \eta \Ew_{ij}  - \frac{5}{2} \left( \Ew \times \Sigma \right)_{ij} - \frac{2}{3} \left( \Ew \cdot \Sigma \right) g_{ij}  - J_{i \tau j},  \label{eq-1+3-bianchi-T-E}
 \end{split}
\end{equation}
\begin{equation}
\begin{split}
 & \lie_{\dtau} \Hw_{ij}  +  \curl \Ew_{ij} + \Hw_{ij}   \\
 = {} & \eta \Hw_{ij} - \frac{5}{2} \left( \Hw \times \Sigma \right)_{ij} - \frac{2}{3} \left( \Hw \cdot \Sigma \right) g_{ij}  - J^*_{i \tau j}, \label{eq-1+3-bianchi-T-H}
 \end{split}
\end{equation}
\end{subequations}
 where\footnote{For any symmetric $(0,2)$-tensors $A, B$ on $M$, $A \cdot B =A_{ij} B^{ij}$, $\curl A_{ij} = \frac{1}{2} \left( \epsilon_i{}^{\! pq}\nabla_q A_{pj} +\epsilon_j{}^{\! pq} \nabla_q A_{pi} \right)$,  
$(A \times B)_{ij} = \epsilon_i{}^{\! ab} \epsilon_j{}^{\! pq}  A_{ap} B_{bq} + \frac{1}{3} A \cdot B g_{ij} - \frac{1}{3} \tr A \tr B g_{ij}$. Note that, $\epsilon_{ijk}$ is the Kronecker symbol.}
 \begin{align*}
 J_{\beta \gamma \delta} &= \frac{1}{2} \left( \breve D_\gamma \breve R_{\delta \beta} - \breve D_{\delta} \breve R_{\gamma \beta}\right) \\
 &- \frac{1}{12} \left(\breve g_{\beta \delta} \breve D_\gamma ( \breve g^{\mu \nu} \breve R_{\mu \nu} ) - \breve g_{\beta \gamma}  \breve D_\delta (\breve g^{\mu \nu} \breve R_{\mu \nu} ) \right), \\
 J^*_{\beta \gamma \delta} &= \frac{1}{2} J_\beta{}^{\! \mu \nu} \epsilon_{\mu \nu \gamma \delta}. 
\end{align*}
In terms of $\Ew$ and $\Hw$, 
\al{Riem-Weyl}{
\R_{imjn} &= -\epsilon_{imp} \epsilon_{jnq} \Ew^{pq} + \frac{1}{2} (g_{ij} \R_{mn} + g_{mn} \R_{ij} - g_{mj} \R_{in} - g_{in} \R_{mj}) \nnb \\
&- \frac{1}{6} (g_{ij} g_{m n} - g_{in} g_{jn}) (\R_{pq} g^{pq} - \R_{\tau \tau}),
}
and \eqref{Gauss-Ricci-hat-k} becomes
\begin{align}\label{eq-Gauss-E} 
R_{ij} + 2 g_{ij} ={}  & \Ew_{ij} + ( 4 \eta - 2 \eta^2 ) g_{ij} + \Sigma_{ip} \Sigma_{j}^p + ( 1 - \eta ) \Sigma_{ij} \nnb \\
& + \frac{1}{2} \R_{ij} + \frac{1}{6}(2 \R_{\tau \tau} + \R) g_{ij}.
\end{align}
Now the Codazzi equation \eqref{Codazzi-curl-k} is equivalent to the system
\al{div-curl-k-hat}
{
(\dive \Sigma)_i &= 2 \nabla_i \eta - \R_{\tau i}, \nnb \\
(\curl \Sigma)_{ij} &= - \epsilon_{ij}{}^{\!m} \nabla_m \eta - \Hw_{ij}.  
}

In the end, we note an identity that is used throughout the paper \[\dtau \int_{M} f \di \mu_g = \int_M ( \dtau f - 3 \eta f ) \di \mu_g. \]

\subsection{The $1+3$ Einstein scalar field equations}
The Einstein scalar field equations \eqref{eq-Einstein-source}-\eqref{def-energy-Mom-kg} are reformulated as the $1+3$ decomposition \eqref{eq-evolution-1}-\eqref{eq-evolution-2} coupled with the constraints \eqref{Gauss-Ricci-trace-hat -k}-\eqref{Codazzi-div-k} and the scalar field \eqref{eq-kg} which can be decomposed in the following $1+3$ form,
\begin{equation}\label{eq-rescale-kg-1+3-0}
 \dtau^2 \phi +2\dtau \phi - \Delta \phi + m^2 t^2  \phi + 3 \eta \dtau \phi =0.
\end{equation}
The spacetime Ricci tensor $\R_{\alpha \beta}$ in the above geometric equations \eqref{eq-evolution-1}-\eqref{eq-evolution-2} is given by 
\begin{equation}\label{eq-ricci-phi}
\R_{\alpha \beta} = \breve{T}_{\alpha \beta} -  \frac{\tr_{\breve g} \breve{T}}{2} \breve g_{\alpha \beta} =\breve D_\alpha \phi \breve D_\beta \phi + \frac{m^2 t^2}{2}\phi^2 \bar g_{\alpha \beta}.
\end{equation}

Let $\xi_{ij} = \dtau g_{ij}$ and $\Delta_L$ be the Lichnerowicz Laplacian given by \[\Delta_L \xi_{ij} = \Delta \xi_{ij} + 2 R_i{}^m{}_j{}^l \xi_{ml} - R_i^l \xi_{j l} - R_j^l \xi_{i l}, \] then
\als{
\dtau R_{ij} = -\frac{1}{2} ( \Delta_L \xi_{ij} + \nabla_i \nabla_j (\xi^l_l) - \nabla_i (\dive \xi)_j - \nabla_j (\dive \xi)_i ).
}
From this, we obtain a second order equation for $\Sigma$\footnote{In fact, $\dtau^2 \Sigma_{ij} - \Delta \Sigma_{ij} + 2 \dtau \Sigma_{ij} - 2 \Sigma_{ij}$ is the principle part of $-\Box_{\bar g} \Sigma_{i j} - \dtau \Sigma_{i j} - 2 R_{ipjq} \Sigma^{p q}$, and the extra $-\dtau \Sigma$ is due to the rescaling of time function $\tau$.} by taking $\dtau$ derivative on \eqref{E:eq-trans-Sigma},
\al{wave-Sigma}{
& \quad \dtau^2 \Sigma_{ij} - \Delta \Sigma_{ij} + 2 \dtau \Sigma_{ij} - 2 \Sigma_{ij} \nnb \\
& =  \Delta \eta g_{i j} - 3 \nabla_i \nabla_j \eta  + (\dtau \Sigma, \Sigma, \eta) * (\Sigma, \eta)  +  (\Sigma, \eta) * (\Sigma, \eta) * (\Sigma, \eta)  \nnb \\
&- \dtau \R_{ij}(\phi)  + \nabla_i (\dtau \phi \nabla_j \phi) + \nabla_j (\dtau \phi \nabla_i \phi) + g_{ij} * \dtau \R_{\tau \tau} (\phi)  \nnb\\
%& + (\dtau \Sigma, \Sigma, \eta) * (\Sigma, \eta)  +  (\Sigma, \eta) * (\Sigma, \eta) * (\Sigma, \eta) \nnb \\
& + (1+ \eta) g_{ij} * \R_{\tau \tau} (\phi) +  \Sigma * (\R_{ m n}(\phi), \R_{\tau \tau}(\phi)),
}
where $\R_{ m n}(\phi)$, $\R_{\tau \tau}(\phi)$ are given by \eqref{eq-ricci-phi}, and $\ast$ is the contraction defined in subsection \ref{sec-contraction}.

\begin{comment}
It is worthy to note that $\Sigma$ satisfies the following wave equation
\al{wave-Sigma}
{& \dtau^2 \Sigma_{ij} - \Delta \Sigma_{ij} + 2 \dtau \Sigma_{ij} \nnb \\
={}&-3 \nabla_i \nabla_j \eta + \Delta \eta g_{i j} \nnb\\
&+ \dtau \R_{ij} (\phi) + \nabla_i (\dtau \phi \nabla_i \phi) + \nabla_j (\dtau \phi \nabla_i \phi) \nnb\\
& + \dtau \Sigma * (\Sigma, \eta) +  (\Sigma, \eta) * (\Sigma, \eta)  \nnb \\
& +  (\Sigma, \eta) * (\Sigma, \eta) * (\Sigma, \eta) \nnb \\
& +  (\Sigma, \eta) * (\Sigma, \eta) * (\R_{ m n} (\phi), \R_{\tau \tau} (\phi))  \pm \R_{m n} \pm \R_{\tau \tau}.
}
\end{comment}

%\subsection{Hyperbolic manifold}%\label{sec-sec-background}
\subsection{Sobolev norms}\label{sec-Sobolev}

For any $(p, q)$-tensor $\Psi \in T^p_q(M)$, we define  \[|\Psi|_g^2 : =  g^{i_1 j_1} \cdots g^{i_q j_q} g_{i^\prime_1 j^\prime_1} \cdots g_{i^\prime_p j^\prime_p}  \Psi_{i_1\cdots i_q}^{i^\prime_1 \cdots i^\prime_p} \Psi_{j_1 \cdots j_q}^{j^\prime_1 \cdots j^\prime_p}. \] 
In what follows, we also use $|\Psi|$ to denote $|\Psi|_g$ for simplicity. 

Let $H^p_k(M)$ be the Sobolev space of tensors with respect to the norm \[\|\Psi\|_{H^p_k} = \sum_{j=0}^k \left( \int_M |\nabla^j \Psi|^p \di \mu_g \right)^{\frac{1}{p}}. \]
Let $H_{0,k}^p(M)$ be the closure of the space of smooth tensors with compact support in $M$. 
Let $m$ be an integer and denote by $C_B^m(M)$ the space of function of class $C^m$ for which the norm $$\|\Psi\|_{C^m} = \sum_{j=0}^m \sup_{x \in M} |\nabla^j \Psi (x)|$$ is finite. 

For simplicity, we denote $H^2_k(M)$ by $H_k(M)$, and $\|\cdot\|_{H^2_k}$ by $\|\cdot\|_{H_k}$.
In addition, we denote $H_{N+2} (C^\infty(M))$ to be the completion of the space of smooth functions with respect to the $H_{N+2}$ norm, and $H_{0, N+2} (C^\infty(M))$ denotes the closure of the space  $C_0^\infty(M)$  in $H_{N+2}(C^\infty(M))$.

We recall the following Sobolev embedding theorem \cite{Hebey}\footnote{The theorem holds for any tensors because of Kato's inequality.}.
\begin{proposition}\label{prop-Sobolev}
Let $(M, g)$ be a smooth, complete Riemannian $n$-manifold with Ricci curvature bounded from below. Assume that for any $x \in M$, $$\text{Vol}_g (B_x(1)) >\kappa,$$ where $\kappa$ is a positive constant, and  $\text{Vol}_g (B_x(1))$ stands for the volume of unit ball centred at $x$, $B_x(1)$, with respect to $g$. 

Let $k >m$ be two integers.

\begin{itemize}
\item  For any $1 \leq q <n$ and $q\leq p$, $\frac{1}{p} \geq \frac{1}{q} - \frac{k-m}{n}$, $H_k^q (M) \subset H_m^p(M)$.
\item For any $q \geq 1$, if $\frac{1}{q} < \frac{k-m}{n}$, then $H_k^q (M) \subset C^m_B(M)$.
\end{itemize}
\end{proposition}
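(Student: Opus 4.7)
The plan is to establish the embedding via a \emph{bounded-geometry localization} strategy: first reduce to scalar functions, then transfer the Euclidean Sobolev inequality through uniform harmonic charts, and finally globalize with a partition of unity whose constants are controlled by the Ricci and volume hypotheses.

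First I would dispose of the tensorial nature of the estimate by invoking Kato's inequality $|\nabla |\Psi|| \leq |\nabla \Psi|$ pointwise where $\Psi \neq 0$. Together with the obvious equality $\|\nabla^j |\Psi|\|_{L^p} \leq \|\nabla^j \Psi\|_{L^p}$ in the appropriate iterated sense (applied to $|\nabla^{j-1}\Psi|$, $|\nabla^{j-2}\Psi|$, \ldots), this reduces the claim to the scalar case, for which one works directly with $u = |\nabla^{j}\Psi|$ and eventually rebuilds the tensorial Sobolev norm by summing over $0 \leq j \leq k$.

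Next, using the hypotheses (Ricci curvature bounded from below and uniform positive lower bound on the volume of unit balls), I would quote the standard fact that $(M,g)$ has \emph{bounded geometry} in the sense of Anderson--Cheeger: there exist constants $r_{0}, \Lambda > 0$ so that for every $x \in M$ there is a harmonic coordinate chart $\varphi_{x} : B_{x}(r_{0}) \to \mathbb{R}^{n}$ in which the metric components satisfy $\Lambda^{-1} \delta_{ij} \leq g_{ij} \leq \Lambda\, \delta_{ij}$ and $\|g_{ij}\|_{C^{0,\alpha}} \leq \Lambda$; moreover (from the volume lower bound plus Bishop--Gromov) one can choose a countable cover $\{B_{x_{\ell}}(r_{0}/2)\}$ of $M$ with uniformly bounded multiplicity and a subordinate partition of unity $\{\chi_{\ell}\}$ satisfying $\sum_{j \leq k}|\nabla^{j}\chi_{\ell}| \leq C(k)$ uniformly in $\ell$.

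On each chart the Riemannian $W^{k,p}$ norm of $\chi_{\ell} u$ is equivalent (with constants depending only on $\Lambda$ and $k$) to the Euclidean $W^{k,p}$ norm of $(\chi_{\ell} u) \circ \varphi_{x_{\ell}}^{-1}$, which is compactly supported in a Euclidean ball. I would then apply the classical Euclidean Sobolev embeddings: the Gagliardo--Nirenberg--Sobolev inequality $\|f\|_{L^{p}(\mathbb{R}^{n})} \lesssim \|\nabla f\|_{L^{q}(\mathbb{R}^{n})}$ whenever $1 \leq q < n$ and $\tfrac{1}{p} = \tfrac{1}{q} - \tfrac{1}{n}$ for the first bullet (iterated $k-m$ times to hit $H^{p}_{m}$), and Morrey's inequality $\|f\|_{C^{m}} \lesssim \|f\|_{W^{k,q}}$ when $\tfrac{1}{q} < \tfrac{k-m}{n}$ for the second bullet. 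Raising both sides of the local inequality to the $p$-th power, summing over $\ell$, and exploiting the bounded multiplicity of the cover together with the uniform bound on $\nabla^{j}\chi_{\ell}$ (which produces only lower-order terms already controlled by $\|u\|_{H^{q}_{k}}$) then yields the desired global embedding.

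The main obstacle in this plan is genuinely the \emph{uniformity} step: obtaining a uniform harmonic radius $r_{0}$ and a partition of unity with uniform derivative bounds from only a one-sided Ricci bound and a $\kappa$-noncollapsing condition. This is where the hypotheses of the proposition are essential, as Sobolev embeddings on complete non-compact manifolds can fail without such a bounded-geometry assumption; once the uniform harmonic radius and cover are in hand, the remaining steps are mechanical. I would also remark that since the proposition is quoted from Hebey's book, in practice one simply cites that reference rather than reproducing the full argument.
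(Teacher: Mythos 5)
The paper does not actually prove this proposition: it is quoted verbatim from Hebey's book \cite{Hebey}, with only the footnote remark that the tensorial case follows from the scalar case by Kato's inequality. Your opening reduction (Kato applied to $|\nabla^{j}\Psi|$ and summation over $j$) is exactly the paper's footnote, and your closing remark that one simply cites Hebey is what the authors in fact do. So there is nothing to compare at the level of the paper's own argument; the question is whether your sketched proof of the cited result is sound.

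It is not, and the gap is precisely at the step you flag as the ``main obstacle.'' A lower Ricci bound together with $\mathrm{Vol}_g(B_x(1))>\kappa$ does \emph{not} yield bounded geometry in the Anderson--Cheeger sense. Their harmonic-radius estimate requires a lower bound on the injectivity radius (not just non-collapsed unit balls), and even then it only delivers $C^{0,\alpha}$ control of the metric components, which is insufficient to transfer a $W^{k,q}$ norm with $k\geq 2$ to Euclidean charts. One can have a complete Ricci-flat manifold with non-collapsed volume in which the curvature blows up along a divergent sequence of points, so that no uniform harmonic radius exists; the proposition is nevertheless true there. Relatedly, under only a Ricci lower bound one cannot construct cutoff functions with uniformly bounded covariant derivatives of order $\geq 2$, so the partition-of-unity bookkeeping for the higher-order embedding also breaks down. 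The proof that actually works under these hypotheses (Hebey, following Varopoulos and Coulhon--Saloff-Coste) avoids coordinate charts entirely: Gromov's volume comparison upgrades the unit-ball lower bound to $\mathrm{Vol}_g(B_x(t))\geq c\,t^{n}$ for $t\leq 1$, a Calabi-type covering lemma with bounded multiplicity localizes the problem, and the local first-order inequality is obtained from isoperimetric or heat-kernel estimates rather than from flat charts; the higher-order and $C^m_B$ statements then follow by iterating the first-order embedding on the scalars $|\nabla^{j}\Psi|$ via Kato, which needs only first derivatives of the cutoffs. Your reduction steps at the beginning and end are fine; the middle of the argument has to be replaced.
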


The next proposition which passes from Laplacian $\Delta$ to general covariant derivate $\nabla$ will be used throughout the paper.
\begin{proposition}\label{prop-elliptic-Delta}
Let $(M, g)$ satisfy the assumption of proposition \ref{prop-Sobolev}.
Fix an integer $N \geq 2$. Suppose\footnote{In the case that $M$ is 3-dimensional Riemanian manifold, the Riemman tensor here can be replaced by Ricc tensor.} \[ \|R_{imjn} \|_{L^\infty} \quad \text{and} \quad \|\nabla R_{ipjq} \|_{H_{N-1}}, \quad \quad N \geq 2 \] are bounded.
  %\[  \| R_{ipjq} \|_{H_{K}}, \quad \quad K \geq 2 \] are bounded. 
  For any $\Psi \in T^p_q (M)$ with compact support, we have
\[\|\Psi\|^2_{H_k} \lesssim   \sum_{l \leq k} \|\nabla^{\mathring{l}} \Delta^{ [\frac{l}{2}] } \Psi \|_{L^2}^2, \quad k \leq N+2. \]
%In addition, for any $\psi \in C^\infty(M)$ with compact support, \[\|\psi\|^2_{H_k} \lesssim  \sum_{l \leq k} \|\nabla^{\mathring{l}} \Delta^{ [\frac{l}{2}] } \psi \|_{L^2}^2, \quad k \leq K+2. \]
Here $\mathring{l}$ is an integer such that 
$\mathring{l} =
\begin{cases}
0, &\text{if} \,\, l \,\, \text{is even}, \\
1, &\text{if} \,\, l \,\, \text{is odd}.
\end{cases}$
%when $k$ is even, $\mathring{k}=0$; when $k$ is odd, $\mathring{k}=1$.
\end{proposition}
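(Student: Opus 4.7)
The plan is to prove the inequality by induction on $k$, with the engine being integration by parts: one successively pairs covariant derivatives $\nabla_i$ with $\nabla^i$ to form the rough Laplacian $\Delta$, paying for each such pairing with a commutator term involving the Riemann tensor acting on strictly fewer derivatives of $\Psi$. The base cases $k = 0$ and $k = 1$ are immediate since the right-hand side already contains $\|\Psi\|_{L^2}^2$ and $\|\nabla\Psi\|_{L^2}^2$. Compact support of $\Psi$ is used throughout to discard all boundary terms that arise.

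For the inductive step with $2 \leq k \leq N+2$, I would start from
\begin{align*}
\|\nabla^k \Psi\|_{L^2}^2 = \int \nabla_{i_1}\!\cdots\nabla_{i_k}\Psi \cdot \nabla^{i_1}\!\cdots\nabla^{i_k}\Psi \, \di\mu_g
\end{align*}
and integrate by parts one derivative at a time, using $[\nabla^a,\nabla^b] = R \ast (\cdot)$ to commute derivatives whenever needed to bring matching indices adjacent. After all $k$ indices have been paired into $[k/2]$ copies of $\Delta$ (possibly leaving one unpaired $\nabla$ when $k$ is odd), the leading term is exactly $\|\nabla^{\mathring k}\Delta^{[k/2]}\Psi\|_{L^2}^2$, and the remainder is a finite sum of schematic terms $\int (\nabla^a R)\ast(\nabla^b\Psi)\ast(\nabla^c\Psi)\,\di\mu_g$ with $a+b+c \leq 2k-2$, $b,c \leq k-1$, and $a \leq k-2$. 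Each such term is then bounded by Cauchy-Schwarz, placing whichever factor has the fewest derivatives into $L^\infty$ via the Sobolev embedding of Proposition \ref{prop-Sobolev}; the curvature factor $\nabla^a R$ is controlled by $\|R\|_{L^\infty} + \|\nabla R\|_{H_{N-1}}$ precisely because $k - 2 \leq N$. This yields $|\mathcal{R}_k(\Psi)| \lesssim \|\Psi\|_{H_{k-1}}^2$, at which point the induction hypothesis closes the step.

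The main delicate point I expect is the combinatorial bookkeeping of the commutators: after every rearrangement required to pair non-adjacent indices, one must check that each Riemann correction genuinely carries strictly fewer than $k$ derivatives of $\Psi$ and at most $k-2$ derivatives of $R$, so that no fake $\|\nabla^k \Psi\|$-contribution reappears on the right and all curvature norms invoked lie within the stated hypotheses. A secondary, softer issue is to confirm that whenever Sobolev embedding is used to throw a factor into $L^\infty$, the required Sobolev index is in fact covered by the injectivity-radius / uniform volume assumption of Proposition \ref{prop-Sobolev}; this is where the threshold $k \leq N+2$ becomes sharp for the argument.
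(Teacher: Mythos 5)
Your proposal follows essentially the same route as the paper's proof in Section \ref{sec-comm}: repeated integration by parts to convert $\|\nabla_{I_k}\Psi\|_{L^2}^2$ into $\|\nabla^{\mathring{k}}\Delta^{[k/2]}\Psi\|_{L^2}^2$ plus curvature commutator errors, which are then estimated via Cauchy--Schwarz and the Sobolev embedding of Proposition \ref{prop-Sobolev} and closed by induction. One correction to your bookkeeping: you cannot in general arrange that every error term has at most $k-1$ derivatives on each copy of $\Psi$ \emph{and} at most $k-2$ derivatives on $R$ simultaneously --- the term $\nabla_{I_{k-1}}R\ast\Psi\ast\nabla_{I_{k-1}}\Psi$ has too many derivatives on $R$ when $k=N+2$, and after integrating by parts once more the paper is left with $\nabla_{I_{k-2}}R\ast\Psi\ast\nabla_{I_k}\Psi$, i.e.\ a genuine $\|\nabla_{I_k}\Psi\|_{L^2}$ contribution, which is then handled by Young's inequality with a small parameter $a$ and absorbed into the left-hand side. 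With that standard absorption step added, your argument matches the paper's.
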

The proof is collected in the appendix, Section \ref{sec-comm}.

\subsection{More conventions}

\subsubsection{$M$-tensor}
Let $\Psi_{\alpha_1 \cdots  \alpha_l}$ be a $(0, l)$-tensor on $\mathcal{M}^4$ satisfying
\begin{equation*}
\p_t^\beta \Psi_{\alpha_1 \cdots \alpha_{i-1} \beta \alpha_{i+1} \cdots \alpha_l} =0, \quad \forall \, i \in \{1, \cdots, l\},
\end{equation*}
where $\alpha_0 $ and $\alpha_{l+1}$ are interpreted as being absent if $i=1$ or $l$.
We can restrict $\Psi$ on $M_t:=\{t\} \times M^3$, and naturally interpret it as a tensor field on the Riemannian manifold $(M^3, \, g)$. Such a tensor $\Psi$ will be also called an \emph{$M$-tensor}.

\subsubsection{Multi index} 
For notational convenience,  we use $\nabla_{I_l} \psi$ to denote the $l^{\text{th}}$ order covariant derivative $\nabla_{i_1} \cdots \nabla_{i_l}\psi$, where $I_l=\{ i_1 \cdots i_l\}$ is the multi index.

\subsubsection{Simplified conventions for derivatives}
When $m \neq 0$, we use the notation \[\D \phi \in \{ \dtau \phi, \nabla \phi, m t \phi\}. \] When $m=0$, it reduces to  \[D \phi \in \{ \dtau \phi, \nabla \phi \}. \]

\subsubsection{Contraction}\label{sec-contraction} Unless indicated otherwise, we use the metric $g_{i j}$ and its inverse to raise and lower indices. Throughout, we use $A*B$ to denote a linear combination of products of $A$ and $B$, with each product being a contraction (with respect to $g$) between the two tensors $A$ and $B$ on $M$. We also use the notation $(A, B) \ast (D, E)$ to denote all possible linear combinations of products of $A, \,B$ and $D, \, E$.
In the estimates, we will only employ the formula $\|A * B\| \lesssim \|A\| \|B\| $ which allows ourselves to ignore the detailed product structure at this point. $\|\cdot \|$ denotes some Sobolev norms associated to $g$. 

\subsubsection{Universal constant} We use $C$ to denote some universal constant which may vary from line to line.

\section{Global existence for massive Einstein-scalar field}\label{sec-EKG}
In this section, we prove the global existence theorem for the massive Einstein-scalar field equations. We shall prove the following propagation estimates.
\begin{theorem}[Main estimates for massive Einstein-scalar field]\label{Thm-main-ee-EKG}
Fix a constant $\delta \in \left( 0, \,  \frac{1}{6} \right)$ and an integer $N \geq 2$.
For $\varepsilon>0$  small enough and all initial data of the massive Einstein-scalar field equations and $I_{N+2} \in \mathbb{R}^{+}$ which satisfy
\al{ME-Data-1}{
&  \| \D \phi \|_{H_{N+1}}  (t_0)  +  \|\eta \|_{H_{N+2}}  (t_0)  +  \|\Sigma \|_{H_{N+1}}  (t_0) \nnb  \\
& \quad  + \|\Ew \|_{H_{N}} (t_0) +  \|\Hw \|_{H_{N}}  (t_0)  +  \|g_{ij} - \gamma_{ij} \|_{H_{N+1}}  (t_0) \leq \varepsilon I_{N+2},
}
%and \[\frac{7}{8} \gamma_{ij} \leq g_{ij} (t_0) \leq \frac{9}{8} \gamma_{ij} \quad \text{ as bilinear forms}, \] 
there is a constant $C(I_{N+2})$ depending only on $I_{N+2}$ (in particular, not on $\varepsilon$ and $t_0$) such that for all $t\geq t_0$, we have
\al{energy-estimate}
{
& t^{1-C(I_{N+2})\varepsilon} ( \|\Ew \|_{H_{N}} +  \|\Hw \|_{H_{N}}  +  \|\eta \|_{H_{N+1}}   ) (t)  + t^{\frac{1}{2}} \| \D \phi \|_{H_{N+1}}  (t) \nnb \\
&\,\,    +  t^{1-\delta}  \|\Sigma \|_{H_{N+1}} (t)  + \|\eta \|_{H_{N+2}} (t) + \|g_{ij} - \gamma_{ij} \|_{H_{N+1}}(t)  \leq \varepsilon C(I_{N+2}),
}
and  
\als{
  t^{1-\delta} \| R_{ij} + 2 g_{ij } \|_{H_{N}} (t) &\leq \varepsilon C(I_{N+2}).
   % t^{1-\delta} \| R_{imjn} + g_{ij } g_{m n} - g_{in} g_{jm} \|_{H_{N}} &  \leq \varepsilon C(I_{N+2}).
  }
%\[ t^{1-\delta} (  \|\dtau \eta \|_{H_{N+1}}   +  \|\dtau \Sigma \|_{H_{N}}   ) (t) \leq \varepsilon C(I_{N+2}). \]
The subscript $N+2$ of $I_{N+2}$ denotes the number of derivatives involved in the norms.
\end{theorem}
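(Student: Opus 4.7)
The plan is to run a continuity/bootstrap argument on the weighted norms that appear on the left-hand side of \eqref{E:energy-estimate}. Fix $\delta\in(0,1/6)$ and $N\geq 2$; choose a large constant $K=K(I_{N+2})$ and, on a maximal interval $[t_0,T)$, assume every term in \eqref{E:energy-estimate} is bounded by $K\varepsilon$. Under these bootstrap assumptions, every product of two small quantities is $\tau$--integrable (with $\tau=\ln t$), possibly after losing a small factor $O(\varepsilon)$ in the exponent. The target is then to recover each of these bounds with $K$ replaced by $K/2$; continuity forces $T=+\infty$ and yields Theorem \ref{Thm-main-ee-EKG}.

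The improvements would be produced in a one-way hierarchy reflecting the save-one-derivative structure of \eqref{eq-eta-into}. (i) \emph{Klein--Gordon field.} I would use a standard Klein--Gordon energy adapted to the expanding geometry (cf. \cite{Wang-J-EKG-2019}): commuting \eqref{eq-rescale-kg-1+3-0} with $\nabla^l$ for $l\leq N+1$ and testing against an appropriate Klein--Gordon multiplier produces a coercive first-order energy $\mathcal{E}_\phi\sim\|\D\phi\|_{H_{N+1}}^2$ with effective friction $\int|\dtau\phi|^2$, hence $\|\D\phi\|_{H_{N+1}}\lesssim\varepsilon t^{-1/2}$. (ii) \emph{Weyl field.} Lie-commuting \eqref{eq-1+3-bianchi-T-E}--\eqref{eq-1+3-bianchi-T-H} with $\nabla^l$, $l\leq N$, and testing against $(\Ew,\Hw)$, the antisymmetric $\curl$--pair cancels at top order; combined with the linear $+\Ew+\Hw$ terms and the volume factor $-3\eta$, one obtains $\dtau\|\W\|_{H_N}^2+(2-C\varepsilon)\|\W\|_{H_N}^2\lesssim(\text{quadratic})$, with the matter source $J$ of the schematic form $\D\phi\ast\D\phi$ via \eqref{eq-ricci-phi}, already integrable by (i). Gr\"onwall returns $\|\W\|_{H_N}\lesssim\varepsilon t^{-1+C\varepsilon}$. (iii) \emph{The scalar $\eta$.} Commuting \eqref{eq-evolution-xi} with $\nabla^l$, $l\leq N+1$, and using that $\R_{\tau\tau}$ is polynomial in $\D\phi$ by \eqref{eq-ricci-phi}, Gr\"onwall yields $\|\eta\|_{H_{N+1}}\lesssim\varepsilon t^{-1+C\varepsilon}$; this is the decisive step, since $\eta$ is formally second-order in the metric yet its transport source is only first-order. (iv) \emph{The traceless $\Sigma$.} With $\nabla\eta$ now a controlled source, the Codazzi system \eqref{E:div-curl-k-hat} is an elliptic div-curl system for $\Sigma$; combined with an $L^2$ bound for $\Sigma$ from testing \eqref{E:eq-trans-Sigma} against itself (whose linear friction $+2\Sigma$ gives the baseline decay), Proposition \ref{prop-elliptic-Delta} produces $\|\Sigma\|_{H_{N+1}}\lesssim\varepsilon t^{-1+\delta}$, the fixed $\delta$--loss cleanly absorbing the $C\varepsilon$--loss from (iii). (v) \emph{Curvature and metric.} The Gauss equation \eqref{eq-Gauss-E} then bounds $\|R_{ij}+2g_{ij}\|_{H_N}\lesssim\varepsilon t^{-1+\delta}$, and integrating \eqref{eq-evolution-1} in $\tau$ yields the uniform bound $\|g-\gamma\|_{H_{N+1}}\lesssim\varepsilon$ because $\|\eta g+\Sigma\|_{H_{N+1}}$ is $\tau$--integrable.

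The principal obstacle is the top-order bound $\|\eta\|_{H_{N+2}}$: the transport equation for $\eta$ alone would reproduce it only at the cost of placing $\Sigma$ in $H_{N+2}$, a regularity not reached by the hierarchy (i)--(v). Following the renormalisation scheme of \cite{Fournodavlos-Luk-Kasner}, I would use the trace Gauss equation \eqref{Gauss-Ricci-trace-hat -k} to express $\Delta\eta$ algebraically in terms of $R+6$, $|\Sigma|^2$, $\eta^2$, and the matter source, and then invoke elliptic regularity for $\Delta$ (Proposition \ref{prop-elliptic-Delta}) to upgrade $\|\eta\|_{H_{N+1}}\lesssim\varepsilon t^{-1+C\varepsilon}$ into the merely uniform bound $\|\eta\|_{H_{N+2}}\lesssim\varepsilon$. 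Because this top-order norm is never used as a source in any of the steps (i)--(v), its failure to decay does not propagate into the rest of the hierarchy. Assembling (i)--(v) with the top-order improvement for $\eta$ closes the bootstrap with constants depending only on $I_{N+2}$ (in particular, independent of $\varepsilon$ and $t_0$), which finishes the proof of Theorem \ref{Thm-main-ee-EKG}. The absence of the borderline lapse--Klein--Gordon commutators that dominated the difficulty in \cite{Wang-J-EKG-2019} is precisely what the geodesic gauge buys us, and the only remaining hazard--the interplay of $\|\eta\|_{H_{N+2}}$ with lower-order estimates--is resolved by the save-one-derivative structure highlighted above.
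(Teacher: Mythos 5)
Your steps (i)--(v) reproduce the paper's architecture almost exactly: modified Klein--Gordon energy giving $t^{-1/2}$ decay, Bianchi energy for $(\Ew,\Hw)$ with the $\curl$--antisymmetry cancellation, the save-one-derivative transport estimate for $\|\eta\|_{H_{N+1}}$, the div-curl elliptic system \eqref{E:div-curl-k-hat} for $\Sigma$ seeded by an $L^2$ transport estimate, and then the Gauss equation and integration of \eqref{eq-evolution-1}. (One small inaccuracy there: the matter current $J$ in the Bianchi equations is schematically $\nabla\D\phi\ast\D\phi$, not $\D\phi\ast\D\phi$, and at top order $l=N$ the resulting term $\varepsilon^2 t^{-1}E_{[N]}^{1/2}(\W)$ is \emph{not} integrable --- it is exactly this borderline term that produces the $t^{C\varepsilon}$ loss in $\|\W\|_{H_N}$, so "already integrable by (i)" is not the right justification for the bound you then state.)

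The genuine gap is your treatment of $\|\eta\|_{H_{N+2}}$. You propose to "use the trace Gauss equation \eqref{Gauss-Ricci-trace-hat -k} to express $\Delta\eta$ algebraically" and then apply elliptic regularity. But \eqref{Gauss-Ricci-trace-hat -k} reads $R+6 = 12\eta - 6\eta^2 + |\Sigma|^2 + 2\R_{\tau\tau} + \R$: it contains no Laplacian of $\eta$ whatsoever --- in the Gaussian gauge the lapse is constant and there is no elliptic equation for $\eta$ (you appear to be importing the CMC lapse equation, which does not exist here). Read as an algebraic identity, bounding $\|\eta\|_{H_{N+2}}$ would require $\|R+6\|_{H_{N+2}}$, i.e.\ $N+4$ derivatives of the metric, far beyond the available regularity (only $\|R_{ij}+2g_{ij}\|_{H_N}$ is controlled). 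The paper's actual mechanism is entirely different: apply $\nabla_{I_N}\Delta$ to the transport equation \eqref{eq-evolution-xi}; the over-differentiated sources $\nabla_{I_N}\Delta\Sigma_{ij}\cdot\Sigma^{ij}$ and $\nabla_{I_N}\Delta\R_{\tau\tau}(\phi)$ are rewritten, via the second-order equation \eqref{E:wave-Sigma} for $\Sigma$ and the Klein--Gordon equation, as exact $\dtau$-derivatives plus controlled remainders; these are absorbed into the renormalized variable $\tilde\eta_{N+2}$ of \eqref{def-ti-eta}, which satisfies a good transport equation. The leftover massive term $m^2t^2\,\nabla_{I_N}\Delta\phi\cdot\phi$ is what degrades the result to a merely uniform (non-decaying) bound $\|\nabla_{I_{N+2}}\eta\|_{L^2}\lesssim \varepsilon t^{-1}+\varepsilon^2$. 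Without this renormalization step your bootstrap cannot be closed at the level of regularity required by the local existence theorem, so the proof as written does not go through.
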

\begin{remark}\label{rk-data}
We remark that the data bound \eqref{E:ME-Data-1} is equivalent to \eqref{intro-initial-data} in Theorem \ref{thm-global-existence}, in view of the Gauss--Codazzi equations \eqref{eq-Gauss-E}--\eqref{E:div-curl-k-hat}. Therefore, the local existence theorem \ref{thm-local-existence} is available when we intend to prove Theorem \ref{Thm-main-ee-EKG} through the continuity argument. It holds similarly in the massless case.
\end{remark}

\subsection{Bootstrap assumptions}\label{sec-bt-ekg}
Recall the fixed numbers $0<\delta < \frac{1}{6}$ and $N \geq 2$.
We start with the following weak assumptions: Let $\Lambda$ be a large constant to be determined, and
\begin{subequations}
\begin{align}
t ( \|\Ew \|_{H_{N}} +  \|\Hw \|_{H_{N}} ) & \leq \varepsilon \Lambda t^{\delta}, \label{BT-E-H} \\
t ( \|\eta \|_{H_{N+1}} +  \|\Sigma \|_{H_{N+1}} )  & \leq  \varepsilon  \Lambda t^{\delta},  \label{BT-k-N-L-infty} \\
\|\Ew \|_{C^0} + \|\eta \|_{C^0} + \| \Sigma \|_{C^0} + \| g_{ij} - \gamma_{i j} \|_{C^1} + \|\D \phi \|_{C^0}  & \leq \varepsilon \Lambda, \label{BT-g} \\
%\| \nabla_{I_{N+2}} \eta \|_{L^2} & \leq  \varepsilon  \Lambda, \label{BT-eta-top} \\
  t^{\frac{1}{2}}  \|\D \phi \|_{H_{N+1}}   & \leq \varepsilon \Lambda. \label{BT-KG}
\end{align}
\end{subequations}
We will improve these bootstrap assumptions by showing that \eqref{BT-E-H}--\eqref{BT-KG} implies the same inequalities hold with the constant $\Lambda$ replaced by $\frac{1}{2} \Lambda$.
%In this process, we remark that the bootstrap assumption \eqref{BT-eta-top} is in fact redundant. It is added only to guarantee the local existence of the solution.

Since $\Lambda$ is independent of $\epsilon$, then for $\varepsilon>0$ sufficiently small, \eqref{BT-g} implies that $g$ and $\gamma$ are equivalent as bilinear forms, and $g$ is close to $\gamma$ in $C^1$ norm.
%\begin{align}\label{kbt-g} 
 %\frac{3}{4} \gamma_{ij} < g_{ij} < \frac{5}{4} \gamma_{ij}
%\end{align}
%as bilinear forms.
Since $M_t=\{t\} \times M^3$ is diffeomorphic to $\mathbb{R}^3$, and $g$ is close to $\gamma$, the spatial manifold $(M_t, g)$ has infinite injectivity radius. 
Moreover, under the bootstrap assumptions \eqref{BT-E-H}--\eqref{BT-KG}, we know from \eqref{eq-Gauss-E} and \eqref{eq-ricci-phi} that the Ricci tensor of $g_{ij}$ is bounded from below
\be\label{Ric-low-bound}
R_{ij} > -\frac{3}{2} g_{ij}.
\ee
Therefore, by Proposition \ref{prop-Sobolev}, under the bootstrap assumptions \eqref{BT-E-H}--\eqref{BT-KG}, the Sobolev inequalities on $(M^3, \, g)$ hold. Hence, we have
\al{pre-estimates}{
&  \|\Ew  \|_{C^{N-2}}+  \|\Hw  \|_{C^{N-2}}+ \|\Ew \|_{H^4_{N-1}} + \|\Hw \|_{H^4_{N-1}}  \lesssim \varepsilon \Lambda t^{\delta-1}, \nnb \\
&   \|\eta  \|_{C^{N-1}}+  \|\Sigma  \|_{C^{N-1}}+ \|\eta \|_{H^4_{N}} + \|\Sigma \|_{H^4_{N}}   \lesssim \varepsilon \Lambda t^{\delta-1}, \nnb \\
& \|\D \phi \|_{C^{N-1}} + \|\D \phi \|_{H^4_{N}} \lesssim \varepsilon \Lambda t^{-\frac{1}{2}} .
}

Furthermore, making use of the evolution equations \eqref{eq-evolution-1} and \eqref{BT-k-N-L-infty}, we deduce
\be\label{pre-g-gamma}
\|g_{ij} - \gamma_{i j}\|_{H_{N+1}} \lesssim \varepsilon \Lambda < \frac{1}{2},
\ee 
as $\varepsilon$ is small enough. Then, the density theorem follows from Proposition \ref{pro-density}.
\begin{corollary}\label{coro-density} 
Under the bootstrap assumptions \eqref{BT-E-H}--\eqref{BT-KG} 
\als{
H_{0, k}(M) &= H_{k}(M), \quad k \leq N + 1,\\
H_{0, l} (C^\infty(M)) &= H_{l} (C^\infty(M)), \quad l \leq N + 2,
} 
where $H_{0, k}(M)$, $H_{k}(M)$, $H_{l} (C^\infty(M))$ and $H_{0, l} (C^\infty(M))$ are defined in Section \ref{sec-Sobolev}.
\end{corollary}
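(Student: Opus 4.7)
The plan is to invoke Proposition~\ref{pro-density} from the appendix directly; the content of the corollary is simply to verify its hypotheses under the bootstrap assumptions \eqref{BT-E-H}--\eqref{BT-KG}. No independent analysis is needed beyond checking that $g$ is a sufficiently regular perturbation of the hyperbolic reference $\gamma$ and that $(M^3,g)$ carries the bounded-geometry features required for the density construction.

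First, I would collect the geometric prerequisites. Since $M_t\cong\mathbb{R}^3$, the $C^1$-closeness $\|g-\gamma\|_{C^1}\lesssim\varepsilon\Lambda$ furnished by \eqref{BT-g} forces $(M^3,g)$ to have infinite injectivity radius, with unit-ball volumes uniformly bounded below by those of $({\bf H^3},\gamma)$. Together with the lower Ricci bound \eqref{Ric-low-bound} obtained from \eqref{eq-Gauss-E} and \eqref{eq-ricci-phi}, this is precisely the setting in which Proposition~\ref{prop-Sobolev} applies, so all Sobolev embeddings on $(M^3,g)$ are available with constants independent of $t$.

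Second, the essential quantitative input is \eqref{pre-g-gamma}, namely $\|g_{ij}-\gamma_{ij}\|_{H_{N+1}}\lesssim\varepsilon\Lambda<\tfrac12$. Sobolev embedding upgrades this to $C^{N-1}$-closeness, which in turn bounds the Christoffel symbols and their covariant derivatives in $H^{N}$ and makes the $H_k$-norms with respect to $g$ and $\gamma$ uniformly comparable in the tensor range $k\le N+1$ and in the scalar range $l\le N+2$. This comparability is exactly what Proposition~\ref{pro-density} needs in order to transplant the standard density of $C_0^\infty$-tensors and $C_0^\infty$-functions from the hyperbolic background to the perturbed metric $g$.

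The one subtlety I would watch is the one-order regularity gap between the tensor statement ($H_{0,k}=H_k$ for $k\le N+1$) and the scalar statement ($H_{0,l}(C^\infty)=H_l(C^\infty)$ for $l\le N+2$): differentiating a scalar function $N+2$ times costs at most $N+1$ derivatives of the Christoffel symbols and hence $N+2$ derivatives of $g$, which is borderline but still covered by \eqref{pre-g-gamma}. This asymmetry is built into the appendix version of Proposition~\ref{pro-density}, so the verification is routine and the two equalities of the corollary follow by a direct application of that proposition. The main obstacle, if any, is purely bookkeeping in the scalar case at the top order $l=N+2$.
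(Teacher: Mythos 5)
Your proposal matches the paper's argument: the paper deduces $\|g-\gamma\|_{H_{N+1}}\lesssim\varepsilon\Lambda$ from \eqref{eq-evolution-1} and \eqref{BT-k-N-L-infty} (this is \eqref{pre-g-gamma}) and then invokes Proposition \ref{pro-density} directly, exactly as you do. One tiny bookkeeping slip: converting $\nabla^{N+2}\psi$ to $\nabla_\gamma^{N+2}\psi$ for a scalar costs at most $N$ derivatives of $\Gamma-\Gamma_\gamma$, i.e.\ $N+1$ derivatives of $g-\gamma$ (not $N+2$ derivatives of $g$), which is precisely why the $H_{N+1}$ bound \eqref{pre-g-gamma} suffices for the scalar statement at top order.
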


Under the bootstrap assumptions \eqref{BT-E-H}--\eqref{BT-KG}, we know from \eqref{E:Gauss-Riem-hat-k}, \eqref{E:Riem-Weyl}, \eqref{eq-ricci-phi} and \eqref{E:pre-estimates} that the Riemanian tensor of $g$ is bounded as follows
\al{Reim-bound}{
 \|R_{imjn}\|_{C^{N-2}} < C, \quad \|\nabla R_{imjn} \|_{H_{N-1}} < C, \quad N \geq 2.
}
%where $C_1, \, C_2 >0$ are both universal constants.
As a consequence of Corollary \ref{coro-density} and the estimates \eqref{E:Reim-bound}, we can improve Proposition  \ref{prop-elliptic-Delta} as follow.
\begin{corollary}\label{prop-elliptic-Delta-1}
Under the bootstrap assumptions \eqref{BT-E-H}--\eqref{BT-KG}, the conclusion of Proposition \ref{prop-elliptic-Delta} holds for $\Psi \in H_{k+1}(M)$, $\psi \in H_{k+2} (C^\infty(M))$, $k \leq N$ with $N\geq 2$.
\end{corollary}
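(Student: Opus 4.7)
The plan is to upgrade Proposition \ref{prop-elliptic-Delta}, stated for tensors of compact support, to the global Sobolev setting by a standard density argument, exploiting Corollary \ref{coro-density} and the uniform curvature bounds \eqref{E:Reim-bound} derived from the bootstrap assumptions. The extra degree of regularity demanded in the hypothesis ($H_{k+1}$ for tensors and $H_{k+2}$ for functions, one more than what appears on the right-hand side) is precisely what is needed to guarantee convergence of every differential expression under approximation.

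Fix $k \leq N$. Given $\Psi \in H_{k+1}(M)$, we have $k + 1 \leq N+1$, so Corollary \ref{coro-density} gives a sequence $\Psi_n \in C_0^\infty(M)$ with $\Psi_n \to \Psi$ in $H_{k+1}(M)$. Likewise, for $\psi \in H_{k+2}(C^\infty(M))$ we have $k + 2 \leq N+2$, and Corollary \ref{coro-density} yields $\psi_n \in C_0^\infty(M)$ with $\psi_n \to \psi$ in $H_{k+2}(C^\infty(M))$. Each $\Psi_n$ (resp. $\psi_n$) is compactly supported and smooth, so Proposition \ref{prop-elliptic-Delta} applies and yields the inequality
\[
\|\Psi_n\|_{H_k}^2 \lesssim \sum_{l \leq k} \|\nabla^{\mathring l} \Delta^{[l/2]} \Psi_n\|_{L^2}^2,
\]
with an implicit constant that depends only on the $L^\infty$ and $H_{N-1}$ bounds of the Riemann tensor and its derivatives, hence uniform in $n$ by \eqref{E:Reim-bound}.

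It remains to pass to the limit $n \to \infty$. The left-hand side converges to $\|\Psi\|_{H_k}^2$ through the continuous embedding $H_{k+1} \hookrightarrow H_k$. For each summand on the right, the operator $\nabla^{\mathring l} \Delta^{[l/2]}$ is a linear differential operator of order $l \leq k$ whose coefficients, after commuting $\nabla$ past $\Delta$, are polynomial expressions in $g^{-1}$, $g$, and covariant derivatives of the Ricci tensor of order at most $l - 2 \leq N - 1$. By \eqref{E:Reim-bound}, these coefficients are uniformly bounded in $L^\infty \cap H_{N-1}$, so the operator maps $H_l(M) \to L^2(M)$ continuously. Since $\Psi_n \to \Psi$ in $H_{k+1}$ implies convergence in $H_l$ for every $l \leq k$, each term on the right converges in $L^2$ to the corresponding expression evaluated on $\Psi$, and the limiting inequality is the desired bound. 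The analogous argument with $\psi_n \to \psi$ in $H_{k+2}$ handles the function case.

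The proof is essentially routine; the only care required is uniformity of the commutator coefficients across the approximating sequence, which is exactly the content of the pointwise and higher-order Sobolev bounds on the Riemann tensor collected in \eqref{E:Reim-bound}. There is no substantive analytic obstacle, but it is worth emphasising that the $+1$ and $+2$ shifts in the regularity hypothesis are used only at the approximation step and cannot be dropped without strengthening the density corollary, which in turn is limited by the available $H_{N+1}$ control of the metric under the bootstrap.
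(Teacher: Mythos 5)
Your proof is correct and follows exactly the route the paper intends: the paper gives no detailed argument, simply asserting the corollary "as a consequence of Corollary \ref{coro-density} and the estimates \eqref{E:Reim-bound}," i.e.\ density plus uniformity of the curvature-dependent constant, which is precisely what you spell out. (A minor simplification: no commuting of $\nabla$ past $\Delta$ is needed on the right-hand side, since $\nabla^{\mathring l}\Delta^{[l/2]}$ is already a composition of covariant derivatives contracted with $g^{-1}$ and hence maps $H_l\to L^2$ with norm $\leq 1$; but this does not affect the validity of your argument.)
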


\subsection{Energy estimates for the Einstein Klein--Gordon equations}
Define the energy for the Klein--Gordon field
\begin{align}\label{def-energy-kg}
E_{[k+1]} (\phi, t) & =  \int_{M_t} \left( |\nabla^{\mathring{k}} \Delta^{ [\frac{k}{2}] } \dtau \phi|^2 +  |\nabla^{ k^\prime } \Delta^{ [\frac{k+1}{2}] }  \phi|^2 +  m^2 t^2 |\nabla^{\mathring{k}} \Delta^{ [\frac{k}{2}] } \phi|^2 \right) \di \mu_g,
\end{align}
where $2 [\frac{k}{2}] + \mathring{k} =k$, $2 [\frac{k+1}{2}] + k^\prime =k+1$. Namely, 
\begin{equation}\label{def-k-prime} 
\mathring{k} =
\begin{cases}
0, &\text{if} \,\, k \,\, \text{is even}, \\
1, &\text{if} \,\, k \,\, \text{is odd};
\end{cases} 
\qquad
k^\prime =
\begin{cases}
1, &\text{if} \,\, k \,\, \text{is even}, \\
0, &\text{if} \,\, k \,\, \text{is odd}.
\end{cases}
\end{equation}

Define the $k^{\text{th}}$-order energy norm for the Weyl tensor
\begin{equation}\label{def-energy-l-homo-Weyl}
E_{[k]} (\W, t) =  \int_{M_t}  \left( |\nabla^{\mathring{k}} \Delta^{ [\frac{k}{2}] } \Ew|^2 + |\nabla^{\mathring{k}} \Delta^{ [\frac{k}{2}] } \Hw |^2  \right) \di \mu_g.
\end{equation}

We drop the bracket to denote energy norm up to $l$ orders,
\begin{equation}\label{def-energy-sum-inhomo}
E_l (\phi, t) = \sum_{k \leq l} E_{[k]}(\phi, t), \quad E_l (\W, t) = \sum_{k \leq l} E_{[k]}(\W, t).
\end{equation}

\subsubsection{Energy estimates for the Klein--Gordon equation}\label{sec-ee-kg-1+3}
The high order Klein--Gordon equation takes the form of
\al{eq-kg-N-l-1+3-general-simply}
 { &\lie_{\dtau} \nabla^{\mathring{l}} \Delta^{ [\frac{l}{2}] } \dtau \phi + 2 \nabla^{\mathring{l}} \Delta^{ [\frac{l}{2}] }  \dtau \phi -  \nabla^{\mathring{l}} \Delta^{ [\frac{l}{2}] +1 }  \phi + m^2 t^2  \nabla^{\mathring{l}} \Delta^{ [\frac{l}{2}] }  \phi \nnb \\
 = {} & \nabla^{\mathring{l}} \Delta^{ [\frac{l}{2}] } ( \eta * \dtau \phi) + \sum_{a+1 +b  = l}   \left(   \nabla_{I_a} \Sigma, \, \nabla_{I_a} \eta \right) * \nabla_{I_{b}} \nabla \dtau \phi.
   }

We use the modified energy \cite{Wang-J-EKG-2019} instead of $E_{[k+1]}$ 
\begin{equation}\label{def-tilde-energy-norm-l-kg}
\tilde{E}_{[k+1]}(\phi, t) =  E_{[k+1]} (\phi, t) +  \int_{M_t}  3\nabla^{\mathring{k}} \Delta^{ [\frac{k}{2}] } \dtau \phi  \cdot \nabla^{\mathring{k}} \Delta^{ [\frac{k}{2}] } \phi \, \di \mu_g,
\end{equation}
which will be helpful in the proof of energy estimates.
For $3 <  m t$ (or $t > 3 m^{-1}$), the two energy norms are equivalent, \[\frac{1}{2} E_{[k+1]} (\phi, t) <  \tilde E_{[k+1]} (\phi, t)< \frac{3}{2} E_{[k+1]} (\phi, t). \] For the moment, we  require $t\geq t_0 > 3 m^{-1}$. Nevertheless, we can eventually drop this requirement on initial time, since by the standard theory of wave equations, finite time existence with small data is always true for the Klein--Gordon equation.

\begin{proposition}\label{prop-ee-kg}
We have the following estimate for the Klein--Gordon field, 
\be\label{energy-B-kg} 
 \|\D \phi\|_{H_{N+1}} \lesssim \varepsilon I_{N+2} t^{-\frac{1}{2}}, \quad N \geq 2,
 \ee
 provided the bootstrap assumptions \eqref{BT-E-H}--\eqref{BT-KG}.
\end{proposition}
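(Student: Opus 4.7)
The plan is to derive, for each integer $k \in \{0, 1, \ldots, N+1\}$, a differential inequality of the form
\[ \dtau \tilde E_{[k+1]}(\phi,t) \leq -\tilde E_{[k+1]}(\phi,t) + R_k(t), \]
where the remainder $R_k$ is integrable in $\tau$ under the bootstrap assumptions. Summing over $k$ and applying Gronwall then yields $\sum_k \tilde E_{[k+1]}(t) \lesssim \varepsilon^2 I_{N+2}^2\, t^{-1}$; Corollary \ref{prop-elliptic-Delta-1} converts this $L^2$ quantity into $\|\D \phi\|_{H_{N+1}}^2$ up to controllable curvature commutators, producing \eqref{energy-B-kg}. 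Density on the noncompact $M$ (Corollary \ref{coro-density}) is used throughout to justify integration by parts.

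The mechanism is already visible at $k=0$. Testing \eqref{eq-rescale-kg-1+3-0} against $\dtau\phi$ and integrating by parts gives, modulo terms small under the bootstrap,
\[ \dtau E_{[1]} = -4 \int |\dtau\phi|^2 \, d\mu_g + 2 \int m^2 t^2 \phi^2 \, d\mu_g + (\text{errors}), \]
the mass contribution having the ``wrong'' sign; this is precisely why the naive energy cannot decay. The correction $3 \int \dtau\phi \cdot \phi \, d\mu_g$ in $\tilde E_{[1]}$, differentiated using the equation once more, contributes $3 \int |\dtau\phi|^2 - 3 \int |\nabla\phi|^2 - 3 \int m^2 t^2 \phi^2 - 6 \int \dtau\phi\, \phi$. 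The dangerous $m^2 t^2 \phi^2$ term cancels exactly and we arrive at
\[ \dtau \tilde E_{[1]} = -\tilde E_{[1]} - 2\int |\nabla\phi|^2 \, d\mu_g - 3\int \dtau\phi \cdot \phi \, d\mu_g + (\text{errors}). \]
Young's inequality with split parameter $(mt)^{-1}$ bounds $|3 \int \dtau\phi\, \phi| \lesssim (mt)^{-1} E_{[1]}$, and the condition $t_0 > 3m^{-1}$ (imposable here and removable at the end via the standard local theory for the Klein--Gordon equation) gives $\tilde E_{[1]} \sim E_{[1]}$. Hence $\dtau \tilde E_{[1]} \leq -(1 - C(mt)^{-1}) \tilde E_{[1]} + R_0$; since $\int_{\tau_0}^\infty (m e^s)^{-1} ds = (mt_0)^{-1}$ is finite, Gronwall delivers the $t^{-1}$ decay of $\tilde E_{[1]}$.

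For $k \geq 1$ the same cancellation scheme is applied to the commuted equation \eqref{E:eq-kg-N-l-1+3-general-simply}. The alternating $\nabla$--$\Delta$ structure of \eqref{def-energy-kg} is dictated exactly by the need to close these estimates, with Corollary \ref{prop-elliptic-Delta-1} retrieving the full $H_{N+1}$ norm at the end. Two families of errors appear: the commutators $[\dtau, \nabla^{\mathring k} \Delta^{[k/2]}]$ generate terms schematically of the form $\nabla(\eta g + \Sigma) \ast \nabla^{\,\cdot\,}\phi$ (since $\dtau g = -2\eta g - 2\Sigma$), while $[\Delta, \nabla_i]$ produces curvature contractions controlled by \eqref{E:Reim-bound}; and the explicit right-hand side of \eqref{E:eq-kg-N-l-1+3-general-simply}, namely $\nabla^{\mathring l}\Delta^{[l/2]}(\eta \dtau \phi)$ together with $\sum_{a+b+1=l}(\nabla_{I_a} \Sigma, \nabla_{I_a}\eta) \ast \nabla_{I_b} \nabla \dtau \phi$, is handled by Moser-type product estimates that place the top derivative on the $L^2$ factor and the other factor in $L^\infty$ via the Sobolev embedding \eqref{E:pre-estimates}. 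The bootstraps \eqref{BT-E-H}--\eqref{BT-KG} give $\|(\eta, \Sigma)\|_{H_{N+1}} \lesssim \varepsilon \Lambda t^{\delta - 1}$ and $\|\D\phi\|_{H_{N+1}} \lesssim \varepsilon \Lambda t^{-1/2}$, so each remainder $R_k$ is of size $\varepsilon \Lambda t^{\delta - 3/2} \tilde E^{1/2}$, whose Gronwall contribution is the subleading $\varepsilon^2 \Lambda^2 t^{-1 + 2\delta}$, absorbed by the main term for $\varepsilon$ small.

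The hardest point is closing the estimate at the top order $k = N+1$, where the commutator and source terms nominally request $N+2$ spatial derivatives of $\Sigma$, whereas only $\|\Sigma\|_{H_{N+1}}$ is a priori controlled in \eqref{BT-k-N-L-infty}. The saving is that $\nabla \eta$ and $\nabla \Sigma$ always multiply the factor $\nabla \dtau \phi$ (size $\varepsilon \Lambda t^{-1/2}$); by Moser splitting together with the Sobolev $L^\infty$ placement of $\eta$ and $\Sigma$ at one derivative below top (via \eqref{E:pre-estimates}), no $\nabla^{N+2} \Sigma$ in $L^2$ is ever needed, and only the uniformly bounded $\|\nabla^{N+2}\eta\|_{L^2}$ enters, paired with a decaying top-derivative factor. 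This is the Klein--Gordon analogue of the ``saving regularity'' structure around \eqref{eq-eta-into} emphasized in the introduction.
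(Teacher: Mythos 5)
Your proposal follows essentially the same route as the paper's proof: the same modified energy $\tilde E_{[k+1]}$ with the correction term $3\int \nabla^{\mathring{k}}\Delta^{[\frac{k}{2}]}\dtau\phi\cdot\nabla^{\mathring{k}}\Delta^{[\frac{k}{2}]}\phi\,\di\mu_g$, the same exact cancellation of the $m^{2}t^{2}\phi^{2}$ term obtained by substituting the equation back in, the same absorption of the cross term using the $m^{2}t^{2}$ weight for $t_0>3m^{-1}$ (removable by local theory), and the same passage to the full $H_{N+1}$ norm via Corollary \ref{prop-elliptic-Delta-1} and the density result. One small correction to your closing paragraph: $\nabla_{I_{N+2}}\eta$ never actually enters the commuted equation (at $l=N+1$ the source and commutator terms carry at most $N+1$ derivatives of $\eta$ and $\Sigma$, all covered by \eqref{BT-k-N-L-infty}), which is fortunate, since the uniform bound on $\|\nabla_{I_{N+2}}\eta\|_{L^2}$ you invoke is not available at this stage of the bootstrap --- it is only recovered afterwards, in Section \ref{sec-improve-regularity}.
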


\begin{proof}
The proof is a slight modification of the one in \cite{Wang-J-EKG-2019}. Here we use $\nabla^{\mathring{l}} \Delta^{ [\frac{l}{2}] }$  instead of general $\nabla_{I_l}$ to simplify the calculations.

Multiplying $2 \nabla^{\mathring{l}} \Delta^{ [\frac{l}{2}] } \dtau \phi$ on both sides of \eqref{E:eq-kg-N-l-1+3-general-simply}, and applying Lemma \ref{lemma-commuting-application}, we have
\als{
& \dtau |\nabla^{\mathring{l}} \Delta^{ [\frac{l}{2}] } \dtau \phi|^2 +  \dtau (\nabla^{ l^\prime} \Delta^{ [\frac{l+1}{2}]  }  \phi  \nabla^{ l^\prime} \Delta^{ [\frac{l+1}{2}] }  \phi) + \dtau ( m^2 t^2 | \nabla^{\mathring{l}} \Delta^{ [\frac{l}{2}] }  \phi|^2 )  \nnb \\
& - 2 \nabla (\nabla^{ l^\prime } \Delta^{ [\frac{l+1}{2}] }  \phi  \nabla^{\mathring{l}} \Delta^{ [\frac{l}{2}] } \dtau \phi)  + 4 |\nabla^{\mathring{l}} \Delta^{ [\frac{l}{2}] }  \dtau \phi|^2 - 2 m^2 t^2 | \nabla^{\mathring{l}} \Delta^{ [\frac{l}{2}] }  \phi|^2 \nnb \\
% = {} & \nabla^{I_l} ( \eta * \dtau \phi)  \nabla_{I_l} \dtau \phi + (\Sigma + \eta) * \nabla_{I_l} \D \phi  * \nabla_{I_l} \D \phi \\
% &+ \sum_{a+1 +b  = l}   \left(   \nabla_{I_a} \Sigma +   \nabla_{I_a} \eta \right)* \nabla_{I_{b}} \nabla \dtau \phi \nabla_{I_l} \dtau \phi \\
% & + \sum_{a+1 +b  = l+1}  \left(   \nabla_{I_a} \Sigma +   \nabla_{I_a} \eta \right) * \nabla_{I_{b}} \nabla  \phi \nabla_{I_{l+1}}  \phi \\
% & + \sum_{a+1 +b  = l} m^2 t^2  \left(   \nabla_{I_a} \Sigma +   \nabla_{I_a} \eta \right) * \nabla_{I_{b}} \nabla  \phi \nabla_{I_{l}}  \phi \\
  = {} &  \sum_{a +b  = l}  \left(   \nabla_{I_a} \Sigma, \,   \nabla_{I_a} \eta \right) * \nabla_{I_{b}} \D \phi \nabla^{\mathring{l}} \Delta^{ [\frac{l}{2}] } \D \phi,
   }
Note that
\als{
& 4 |\nabla^{\mathring{l}} \Delta^{ [\frac{l}{2}] }  \dtau \phi|^2 - 2 m^2 t^2 | \nabla^{\mathring{l}} \Delta^{ [\frac{l}{2}] }  \phi|^2 \\
={}&  |\nabla^{\mathring{l}} \Delta^{ [\frac{l}{2}] }  \dtau \phi|^2 + 3 \dtau( \nabla^{\mathring{l}} \Delta^{ [\frac{l}{2}] }  \dtau \phi  \nabla^{\mathring{l}} \Delta^{ [\frac{l}{2}] }   \phi)  - 2 m^2 t^2 | \nabla^{\mathring{l}} \Delta^{ [\frac{l}{2}] }  \phi|^2 \\
& - 3 \dtau \nabla^{\mathring{l}} \Delta^{ [\frac{l}{2}] }  \dtau \phi \cdot \nabla^{\mathring{l}} \Delta^{ [\frac{l}{2}] }   \phi  +  \sum_{a+1 +b  = l}  \left(   \nabla_{I_a} \Sigma, \,   \nabla_{I_a} \eta \right)* \nabla_{I_{b}}  \nabla \phi \nabla^{\mathring{l}} \Delta^{ [\frac{l}{2}] }  \dtau \phi,
}
and moreover
\als{
& - 3 \dtau \nabla^{\mathring{l}} \Delta^{ [\frac{l}{2}] }  \dtau \phi \cdot \nabla^{\mathring{l}} \Delta^{ [\frac{l}{2}] }   \phi    \\
={}& - 3 \nabla^{\mathring{l}} \Delta^{ [\frac{l}{2}] }  \dtau^2 \phi \cdot \nabla^{\mathring{l}} \Delta^{ [\frac{l}{2}] }   \phi +  \sum_{a+1 +b  = l}   \left(   \nabla_{I_a} \Sigma,\,   \nabla_{I_a} \eta \right) * \nabla_{I_{b}}  \nabla \phi \nabla^{\mathring{l}} \Delta^{ [\frac{l}{2}] }  \dtau \phi \\
={} &  3 \left(  2 \nabla^{\mathring{l}} \Delta^{ [\frac{l}{2}] }  \dtau \phi -  \nabla^{\mathring{l}} \Delta^{ [\frac{l}{2}] +1 }  \phi + m^2 t^2  \nabla^{\mathring{l}} \Delta^{ [\frac{l}{2}] }  \phi\right) \cdot \nabla^{\mathring{l}} \Delta^{ [\frac{l}{2}] }   \phi + \mathcal{N}_l,
}
where we applied the Klein--Gordon equation in the last step and 
\als{
\mathcal{N}_l = {} & \nabla^{\mathring{l}} \Delta^{ [\frac{l}{2}] } ( \eta * \dtau \phi)  \nabla^{\mathring{l}} \Delta^{ [\frac{l}{2}] }   \phi+ \sum_{a+1 +b  = l}   \left(   \nabla_{I_a} \Sigma,\,   \nabla_{I_a} \eta \right) * \nabla_{I_{b}} \nabla \dtau \phi \nabla^{\mathring{l}} \Delta^{ [\frac{l}{2}] }   \phi \\
& +  \sum_{a+1 +b  = l}   \left(   \nabla_{I_a} \Sigma,\,   \nabla_{I_a} \eta \right) * \nabla_{I_{b}}  \nabla \phi \nabla^{\mathring{l}} \Delta^{ [\frac{l}{2}] }  \dtau \phi.
}  Taking the identity
\[ - 3 \nabla^{\mathring{l}} \Delta^{ [\frac{l}{2}] +1 }  \phi \cdot \nabla^{\mathring{l}} \Delta^{ [\frac{l}{2}] }   \phi = 3 | \nabla^{ l^\prime} \Delta^{ [\frac{l+1}{2}] }  \phi |^2 - 3 \nabla  (\nabla^{ l^\prime}  \Delta^{ [\frac{l+1}{2}] }  \phi \nabla^{ \mathring{l} } \Delta^{ [\frac{l}{2}] }  \phi ), \]
into account as well, we obtain
\als{
& 4 |\nabla^{\mathring{l}} \Delta^{ [\frac{l}{2}] }  \dtau \phi|^2 - 2 m^2 t^2 | \nabla^{\mathring{l}} \Delta^{ [\frac{l}{2}] }  \phi|^2 \\
% ={}&  |\nabla^{\mathring{l}} \Delta^{ [\frac{l}{2}] }  \dtau \phi|^2 + 3 \dtau( \nabla^{\mathring{l}} \Delta^{ [\frac{l}{2}] }  \dtau \phi  \nabla^{\mathring{l}} \Delta^{ [\frac{l}{2}] }   \phi)  - 2 m^2 t^2 | \nabla^{\mathring{l}} \Delta^{ [\frac{l}{2}] }  \phi|^2 \\
% &+ 3 \left(  2 \nabla^{\mathring{l}} \Delta^{ [\frac{l}{2}] }  \dtau \phi -  \nabla^{\mathring{l}} \Delta^{ [\frac{l}{2}] +1 }  \phi + m^2 t^2  \nabla^{\mathring{l}} \Delta^{ [\frac{l}{2}] }  \phi\right) \cdot \nabla^{\mathring{l}} \Delta^{ [\frac{l}{2}] }   \phi  \\
% &+  \sum_{a+1 +b  = l}   \left(   \nabla_{I_a} \Sigma +   \nabla_{I_a} \eta \right) * \nabla_{I_{b}}  \nabla \phi \nabla^{\mathring{l}} \Delta^{ [\frac{l}{2}] }  \dtau \phi \\
% &+ \nabla^{\mathring{l}} \Delta^{ [\frac{l}{2}] } ( \eta * \dtau \phi)  \nabla^{\mathring{l}} \Delta^{ [\frac{l}{2}] }   \phi+ \sum_{a+1 +b  = l}   \left(   \nabla_{I_a} \Sigma +   \nabla_{I_a} \eta \right) * \nabla_{I_{b}} \nabla \dtau \phi \nabla^{\mathring{l}} \Delta^{ [\frac{l}{2}] }   \phi \\
={}& 3 \dtau( \nabla^{\mathring{l}} \Delta^{ [\frac{l}{2}] }  \dtau \phi  \nabla^{\mathring{l}} \Delta^{ [\frac{l}{2}] }   \phi) +  |\nabla^{\mathring{l}} \Delta^{ [\frac{l}{2}] }  \dtau \phi|^2 + m^2 t^2 | \nabla^{\mathring{l}} \Delta^{ [\frac{l}{2}] }  \phi|^2+ 3 | \nabla^{ l^\prime} \Delta^{ [\frac{l+1}{2}] }  \phi |^2 \\
& + 6  \nabla^{\mathring{l}} \Delta^{ [\frac{l}{2}] }  \dtau \phi \nabla^{\mathring{l}} \Delta^{ [\frac{l}{2}] }  \phi - 3 \nabla  (\nabla^{ l^\prime}  \Delta^{ [\frac{l+1}{2}] }  \phi \nabla^{ \mathring{l} } \Delta^{ [\frac{l}{2}] }  \phi ) + \mathcal{N}_l. % \\
%&+  \sum_{a+1 +b  = l}   \left(   \nabla_{I_a} \Sigma +   \nabla_{I_a} \eta \right) * \nabla_{I_{b}}  \nabla \phi \nabla^{ \mathring{l} } \Delta^{ [\frac{l}{2}] }  \dtau \phi  \\
%&+ \sum_{a +b  = l}  \left(   \nabla_{I_a} \Sigma +   \nabla_{I_a} \eta \right) * \nabla_{I_{b}} \dtau \phi \nabla^{ \mathring{l} } \Delta^{ [\frac{l}{2}] }  \phi.
}
As a result, we obtain the energy identity,
\al{energy-id-l-kg-rescale}
{&\dtau  \tilde E_{[l+1]} (\phi, t)   + \tilde E_{[l+1]} (\phi, t)  +  \int_{M_t} 2  |\nabla^{ l^\prime} \Delta^{ [\frac{l+1}{2}] }  \phi|^2\, \di \mu_g \nnb \\
  = &  \int_{M_t} 3 \nabla^{\mathring{l}} \Delta^{ [\frac{l}{2}] } \phi \nabla^{\mathring{l}} \Delta^{ [\frac{l}{2}] }  \dtau \phi \, \di \mu_g \nnb \\
  %+  \sum_{a+1 +b  = l}  \left(   \nabla_{I_a} \Sigma +   \nabla_{I_a} \eta \right) * \nabla_{I_{b}}  \nabla \phi \nabla^{I_l}  \dtau \phi \nnb \\
  & +   \int_{M_t} \sum_{k \leq l} \nabla^{I_k} \left( (\eta,\, \Sigma_{ij}) * \D \phi \right) * \left( \nabla^{ \mathring{l} } \Delta^{ [\frac{l}{2}] } \D \phi, \, \nabla^{ \mathring{l} } \Delta^{ [\frac{l}{2}] }  \phi\right)  \di \mu_g.
}
Noting that \[ \| \nabla^{ \mathring{l} } \Delta^{ [\frac{l}{2}] } \phi \|_{L^2} \lesssim t^{-1} \tilde E^{\frac{1}{2}}_{[l+1]}  (\phi, t), \] thus
\[\p_t ( t  \tilde E_{[l+1]} (\phi, t) ) \lesssim  t^{-2}  \cdot t  \tilde E_{[l+1]}  (\phi, t) + \varepsilon \Lambda t^{- 2 + \delta} \cdot t^{\frac{1}{2}}  \tilde E^{\frac{1}{2}}_{[l+1]}  (\phi, t), \quad l \leq N+1. \]
Using the Gr\"{o}nwall's inequality and the equivalence between $E_{[l+1]}$ and $\tE_{[l+1]}$, we have \[  E_{N +2} (\phi, t) \lesssim \varepsilon^2 I^2_{N+2} t^{-1}. \] Now the conclusion follows from Corollary \ref{prop-elliptic-Delta-1}.
\end{proof}

\subsubsection{Energy estimates for the $1+3$ Bianchi equations}\label{sec-ee-Bianchi-1+3}
According to the $1+3$ Bianchi equations \eqref{eq-1+3-bianchi-T-E}--\eqref{eq-1+3-bianchi-T-H}, we obtain the high order equations 
\als{
 & \lie_{\dtau} \nabla^{\mathring{k}}  \Delta^{ [\frac{k}{2}] } \Ew_{ij} - \nabla^{\mathring{k}}  \Delta^{ [\frac{k}{2}] } \curl \Hw_{ij} +\nabla^{\mathring{k}}  \Delta^{ [\frac{k}{2}] } \Ew_{ij} \\
   = {} & \nabla_{I_k} \left( (\eta, \Sigma) * \Ew_{ij} \right)  - \nabla^{\mathring{k}}  \Delta^{ [\frac{k}{2}] }  J_{i \tau j},  \\
 & \lie_{\dtau}\nabla^{\mathring{k}}  \Delta^{ [\frac{k}{2}] }  \Hw_{ij}  + \nabla^{\mathring{k}}  \Delta^{ [\frac{k}{2}] }   \curl \Ew_{ij} + \nabla^{\mathring{k}}  \Delta^{ [\frac{k}{2}] }  \Hw_{ij}   \\
 = {} &\nabla_{I_k} \left( (\eta, \Sigma) * \Hw_{ij} \right)   - \nabla^{\mathring{k}}  \Delta^{ [\frac{k}{2}] }  J^*_{i \tau j}.
}

In order to derive high order energy estimates for the $1+3$ Bianchi equations, we have to appeal to an identity,  which is a counterpart of  \cite[Lemma 3.14]{Wang-J-EKG-2019}. Furthermore, using the derivatives $\nabla^{\mathring{k}} \Delta^{ [\frac{k}{2}] }$, we are free to proceed to arbitrary order derivatives, while the proof for $k > 3$ in  \cite[Lemma 3.14]{Wang-J-EKG-2019} would be highly complicated.

\begin{lemma}\label{lemma-div-curl}
 Let  $k \in \mathbb{Z}$, $k \geq 0$, and $H_{ij}, \, E_{ij}$ be any two symmetric $(0, 2)$-tensors on $M$. Then, the following identity holds
\als{
&   \nabla^{\mathring{k}} \Delta^{ [\frac{k}{2}] } \curl H \cdot \nabla^{\mathring{k}} \Delta^{ [\frac{k}{2}] } E  -  \nabla^{\mathring{k}} \Delta^{ [\frac{k}{2}] }  \curl E  \cdot \nabla^{\mathring{k}} \Delta^{ [\frac{k}{2}] } H  \\
=& \sum_{l \leq k}  \nabla_q    \left(  \epsilon_i{}^{\! pq}  \nabla^{\mathring{l}} \Delta^{ [\frac{l}{2}] } H_{p j}  * \nabla^{\mathring{l}} \Delta^{ [\frac{l}{2}] } E^{i j}  \right) \\
&+  \nabla_{I_{l-1}} ( O_{imjn}  \ast H) *  \nabla^{\mathring{l}} \Delta^{ [\frac{l}{2}] } E +  \nabla_{I_{l-1}} ( O_{imjn}  \ast E) *  \nabla^{\mathring{l}} \Delta^{ [\frac{l}{2}] } H, %\label{wedge-curl-div-general}
}
where $O_{imjn} = R_{imjn} + \frac{1}{2} ( g \odot g )_{imjn}$ is defined to be the error term of the Rieman curvature with respect to its principle part, and the last line vanishes when $l=0$. %By the Gauss equation \eqref{E:Gauss-Riem-hat-k}, $ O_{imjn} = ( \eta -  \frac{\eta^2}{2} ) ( g \odot g )_{imjn}   + ( 1- \eta ) ( g \odot \Sigma )_{imjn} - \frac{1}{2} ( \Sigma \odot \Sigma )_{imjn}   + \R_{imjn}$, and $\R_{imjn}$ is given in \eqref{E:Riem-Weyl}.
\end{lemma}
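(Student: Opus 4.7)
\medskip
\noindent\textbf{Proof plan.} My plan is induction on $k$. The heart of the argument is a pointwise ``self-adjointness'' identity for the $\curl$ operator acting on symmetric $(0,2)$-tensors, from which the higher-order version is obtained by commuting $\nabla^{\mathring{k}}\Delta^{[k/2]}$ through $\curl$ and using the decomposition $R_{imjn}=-\tfrac{1}{2}(g\odot g)_{imjn}+O_{imjn}$ to isolate the error.

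First I would establish the base case $k=0$. Using symmetry of $E,H$ and the definition of $\curl$ in the footnote, one computes
\[
\curl H\cdot E=\epsilon_i{}^{\!pq}\nabla_q H_{pj}E^{ij},\qquad \curl E\cdot H=\epsilon_i{}^{\!pq}\nabla_q E_{pj}H^{ij}.
\]
A short index manipulation (swapping dummy indices $i\leftrightarrow p$ together with the symmetry of the second factor and the antisymmetry of $\epsilon$) shows that
\[
\epsilon_i{}^{\!pq}H_{pj}\nabla_q E^{ij}=-\epsilon_i{}^{\!pq}\nabla_q E_{pj}H^{ij},
\]
so that the product rule $\nabla_q(\epsilon_i{}^{\!pq}H_{pj}E^{ij})=\epsilon_i{}^{\!pq}\nabla_q H_{pj}E^{ij}+\epsilon_i{}^{\!pq}H_{pj}\nabla_q E^{ij}$ collapses exactly to the claimed divergence. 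This is the $l=0$ contribution with no error term, matching the statement that the last line vanishes at $l=0$.

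For the inductive step, set $L_k:=\nabla^{\mathring{k}}\Delta^{[k/2]}$ and write
\[
L_k\curl H\cdot L_k E-L_k\curl E\cdot L_k H=\bigl(\curl L_k H\cdot L_k E-\curl L_k E\cdot L_k H\bigr)+\bigl([L_k,\curl]H\cdot L_k E-[L_k,\curl]E\cdot L_k H\bigr).
\]
The first parenthesis is of the base-case form applied to $L_k H$ and $L_k E$, producing the divergence $\nabla_q(\epsilon_i{}^{\!pq}L_k H_{pj}L_k E^{ij})$, which supplies the $l=k$ term of the sum. For the commutator, I would compute $[L_k,\curl]$ iteratively using the Ricci identity $[\nabla_a,\nabla_b]T=R*T$: each transposition of a $\nabla$ past $\curl$ (or past one of the Laplacians making up $\Delta^{[k/2]}$) produces a term of the schematic form $R_{imjn}*\nabla^{j}H$ with $j\le k-1$. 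Substituting $R=-\tfrac12(g\odot g)+O$, the $g\odot g$ contributions algebraically reassemble into terms of the same $(\curl,E)-(\curl,H)$ skew-symmetric shape at one derivative lower; these can be absorbed by the inductive hypothesis, producing strictly lower-level divergences (the $l<k$ entries of the sum) together with lower-level $O$-errors. The $O$-part is precisely the error displayed in the statement, namely $\nabla_{I_{l-1}}(O\ast H)*L_l E$ and $\nabla_{I_{l-1}}(O\ast E)*L_l H$ for $1\le l\le k$.

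The main obstacle is bookkeeping: ensuring that when the constant-curvature piece $-\tfrac12(g\odot g)$ is peeled off from each Riemann-tensor commutator, the resulting contractions with $\epsilon$ reorganize cleanly into the two desired shapes, namely a $\curl$-type skew pair at one level lower (which is then handed to the induction hypothesis to produce the remaining divergences in the sum) and the announced $O$-error. Concretely, I expect to use the Kulkarni--Nomizu identities $(g\odot g)_{imjn}\nabla^{m}T^{jn}=2(\nabla_i\tr T-\nabla^n T_{in})g$ and the fact that $\epsilon_i{}^{\!pq}(g\odot g)_{\cdots}$ kills a trace-type slot, so that the constant-curvature pieces collapse either to a pure divergence or to an expression that reinstates the $\curl H\cdot E-\curl E\cdot H$ pattern. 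Once this algebraic step is verified, the induction closes and the lemma follows.
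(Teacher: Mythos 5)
Your proposal follows essentially the same route as the paper's proof: induction on $k$, with the base case an exact divergence obtained from the antisymmetry of $\epsilon$, and the inductive step commuting the derivatives past $\curl$, splitting $R_{imjn}$ into its constant-curvature principal part $-\tfrac12(g\odot g)_{imjn}$ plus $O_{imjn}$, and observing that the principal-part commutator terms either form exact divergences or reassemble into the skew $\curl$-pair at one level lower, to which the induction hypothesis applies, while the $O$-pieces give the stated error. The single step you flag as unverified --- that the $(g\odot g)$ contributions reorganize cleanly --- is exactly the computation the paper carries out (at $k=1$ they collapse to the divergence $\nabla^{q}\bigl(\epsilon_{iq}{}^{\!a}E_{aj}H^{ij}\bigr)$, and in the general step to a term of the form $\mp 4\,\epsilon_i{}^{\!pq}\nabla\Delta^{k}\nabla_p H_{qj}\cdot\nabla\Delta^{k}E^{ij}$ handled by the induction hypothesis after one integration by parts), so your plan closes as anticipated.
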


The proof of Lemma \ref{lemma-div-curl}  is given in Appendix \ref{sec-id-Bianchi}.

\begin{proposition}\label{prop-energy-estimate-Bianchi}

Under the bootstrap assumptions \eqref{BT-E-H}--\eqref{BT-KG}, there is a constant $C(I_{N+2})$ depending only on $I_{N+2}$, such that for $N \geq 2$, 
\[t^2 ( \|\Ew\|^2_{H_{N-1}} + \|\Hw\|^2_{H_{N-1}})  \lesssim  \varepsilon^2 I_{N+1}^2 +  \varepsilon^3 \Lambda^3  \]
and
\[ t^2  ( \|\Ew\|^2_{H_{N}} + \|\Hw\|^2_{H_{N}})  \lesssim \left(  \varepsilon^2 I_{N+2}^2 +  \varepsilon^3 \Lambda^3 \right)t^{ \varepsilon C( I_{N+2}) }.  \]
\end{proposition}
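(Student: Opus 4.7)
I will derive an energy identity for each $E_{[k]}(\W,t)$, sum up to the desired order, convert to $\partial_t(t^2 E_k(\W))$, and close with Gr\"onwall. Start by contracting the high-order Bianchi equations from Section \ref{sec-ee-Bianchi-1+3} with $\nabla^{\mathring k}\Delta^{[k/2]}\Ew$ and $\nabla^{\mathring k}\Delta^{[k/2]}\Hw$ respectively, and add them. The only principal cross term is $\nabla^{\mathring k}\Delta^{[k/2]}\curl\Hw\cdot\nabla^{\mathring k}\Delta^{[k/2]}\Ew-\nabla^{\mathring k}\Delta^{[k/2]}\curl\Ew\cdot\nabla^{\mathring k}\Delta^{[k/2]}\Hw$, which by Lemma \ref{lemma-div-curl} becomes a spatial divergence plus commutator terms quadratic in $O_{imjn}$. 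The divergence integrates to zero via Corollary \ref{coro-density}, and $\|O_{imjn}\|_{H_N}$ is controlled by $\|R_{ij}+2g_{ij}\|_{H_N}$ (since Ricci determines Riemann in dimension three) and hence by $\|\Ew\|_{H_N}$ plus quadratic terms in $\eta,\Sigma,\D\phi$ via \eqref{eq-Gauss-E}--\eqref{eq-ricci-phi}. Using $\dtau\int_{M_t} f \di\mu_g = \int_{M_t}(\dtau f-3\eta f)\di\mu_g$ and combining the $+\Ew,+\Hw$ on the LHS of \eqref{eq-1+3-bianchi-T-E}--\eqref{eq-1+3-bianchi-T-H} with the $\eta\W$ on the RHS, one reaches an energy identity of the schematic form
\als{
\dtau E_{[k]}(\W,t) + (2+\eta_\ast) E_{[k]}(\W,t) = \mathcal{N}_k + \mathcal{J}_k + \mathcal{R}_k,
}
where $\mathcal{N}_k$ collects commutators of $\nabla^{\mathring k}\Delta^{[k/2]}$ with the RHS nonlinearities $\eta\W,(\W\times\Sigma),(\W\cdot\Sigma)g$, $\mathcal{J}_k$ collects the $J$--sources, and $\mathcal{R}_k$ is the curvature error from Lemma \ref{lemma-div-curl}.

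Next, bound each error using the bootstrap \eqref{BT-E-H}--\eqref{BT-KG}, the Sobolev inequality on $(M,g)$ from Proposition \ref{prop-Sobolev} (available by \eqref{Ric-low-bound}), and Corollary \ref{prop-elliptic-Delta-1} to pass between $E_{[k]}(\W)$ and $\|\W\|_{H_k}^2$. For $\mathcal{J}_k$, express $\bar R_{\mu\nu}=\breve D_\mu\phi\breve D_\nu\phi+\frac{m^2 t^2}{2}\phi^2\bar g_{\mu\nu}$ from \eqref{eq-ricci-phi}; time derivatives of $\phi$ appearing after one $\breve D$ are controlled through the Klein--Gordon equation \eqref{eq-rescale-kg-1+3-0} and Proposition \ref{prop-ee-kg}, so that $\|\mathcal{J}_k\|_{L^2}$ is reduced to norms of $\D\phi$ already bounded by \eqref{energy-B-kg}. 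Each term of $\mathcal{N}_k$ is handled by the Leibniz expansion, placing the factor with few derivatives in $L^\infty$ via Sobolev.

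Converting to $\p_t(t^2 E_{[k]}(\W))$ via $\dtau=t\p_t$, one obtains, after summing over $k\leq N-1$,
\als{
\p_t\bigl(t^2 E_{N-1}(\W,t)\bigr) \lesssim \varepsilon\Lambda\, t^{-1+\delta}\,\bigl(t^2 E_{N-1}(\W,t)\bigr) + \varepsilon^2\Lambda^2\, t^{-1+\delta}\bigl(t^2 E_{N-1}(\W,t)\bigr)^{1/2} + \varepsilon^3\Lambda^3\, t^{-1+\delta}.
}
Since the coefficient $\varepsilon\Lambda t^{-1+\delta}$ is integrable in $t$, Gr\"onwall yields the first stated bound $t^2\|\W\|_{H_{N-1}}^2\lesssim\varepsilon^2 I_{N+1}^2+\varepsilon^3\Lambda^3$, with no growth factor.

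\textbf{Main obstacle.} The hard part is the top order $k=N$. Here one cannot afford to place $\nabla^N\eta$ or $\nabla^N\Sigma$ in $L^\infty$ through Sobolev, so the worst commutators in $\mathcal{N}_N$ and the $J$--terms involving $m^2 t^2\phi^2$ at the highest derivative level produce contributions of the form $\varepsilon\, C(I_{N+2})\cdot E_N(\W)/t$ without a $t^{-\delta}$ gain (the constant $C(I_{N+2})$ absorbing the only-uniformly-bounded norm $\|\eta\|_{H_{N+2}}$ obtained from the regularity improvement in Section \ref{sec-improve-regularity}). This linear Gr\"onwall coefficient integrates to $\varepsilon C(I_{N+2})\log t$, and exponentiating gives the factor $t^{\varepsilon C(I_{N+2})}$ in the top-order estimate. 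Isolating precisely which commutators are genuinely borderline, and verifying that every other contribution is absorbable into the integrable coefficient already present at $k\leq N-1$, is the bulk of the work.
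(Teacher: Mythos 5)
Your overall architecture — energy identity via Lemma \ref{lemma-div-curl}, a divergence that integrates away by Corollary \ref{coro-density}, integrable Gr\"onwall coefficients at orders $\leq N-1$, and a non-integrable linear term at order $N$ producing the $t^{\varepsilon C(I_{N+2})}$ factor — is the same as the paper's. But the step you yourself flag as ``the bulk of the work,'' namely identifying the borderline term at top order, is misidentified, and the terms you name would not produce the stated dichotomy. The commutators $\nabla^{N}(\eta,\Sigma)*\W$ are harmless: by \eqref{BT-k-N-L-infty} and \eqref{E:pre-estimates} they contribute $\varepsilon^2\Lambda^2 t^{-2+2\delta}$ in $L^2$ against $\|\W\|_{H_N}\lesssim\varepsilon\Lambda t^{-1+\delta}$, which is integrable. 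The norm $\|\eta\|_{H_{N+2}}$ never enters this proposition: the Bianchi right-hand side carries at most $N$ derivatives of $\eta$ and $\Sigma$, and the whole design of the paper is that the main energy estimates close with $(\eta,\Sigma)\in H_{N+1}\times H_{N+1}$, the $H_{N+2}$ bound on $\eta$ being retrieved only afterwards in Section \ref{sec-improve-regularity}. And the mass contributions to $J_{i\tau j}$ are proportional to $g_{ij}$, so they cancel against the trace-free tensors $\Ew,\Hw$ — this is part of the cancellation in \eqref{E:coupling-cancellation}, which also removes $\dtau^2\phi$. The genuine borderline term is the scalar-field source $\nabla^{\mathring N}\Delta^{[N/2]}(\nabla^2\phi*\dtau\phi)*\nabla^{\mathring N}\Delta^{[N/2]}\Ew$: for $a\leq N-1$ one gains a full power of $t$ by writing $\|\nabla_{I_a}\nabla^2\phi\|_{L^2}\lesssim t^{-1}\|\nabla_{I_a}\nabla^2(mt\phi)\|_{L^2}\lesssim\varepsilon\Lambda t^{-3/2}$ (this is why the $H_{N-1}$ bound closes with constant $I_{N+1}$ and no growth), whereas at $a=N$ one only has $\|\nabla_{I_N}\nabla^2\phi\|_{L^2}\leq\|\nabla\phi\|_{H_{N+1}}\lesssim\varepsilon I_{N+2}t^{-1/2}$, which paired with $\|\dtau\phi\|_{L^\infty}\lesssim\varepsilon\Lambda t^{-1/2}$ gives the coefficient $\varepsilon^2 I_{N+2}^2\,t^{-1}E_{[N]}^{1/2}(\W)$ and hence the logarithmic Gr\"onwall exponent. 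Your proposal never isolates this mechanism, so it cannot explain why the lower-order estimate is growth-free while the top order is not.

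A secondary but real error: in your lower-order Gr\"onwall display you assert that the coefficient $\varepsilon\Lambda t^{-1+\delta}$ is integrable in $t$; it is not ($\int t^{-1+\delta}\,\di t\sim t^{\delta}/\delta$), and with that coefficient you would get $\exp(C\varepsilon\Lambda t^{\delta})$ growth rather than a uniform bound. The correct rates, obtained exactly from the $t^{-1}$ gain described above, give $\dtau E_{N-1}(\W,t)+2E_{N-1}(\W,t)\lesssim\varepsilon^3\Lambda^3 t^{-3+2\delta}$, i.e.\ $\p_t(t^2E_{N-1})\lesssim\varepsilon^3\Lambda^3 t^{-2+2\delta}$, which is integrable since $\delta<1/6$.
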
 

\begin{proof}
Here we only sketch the proof. For more details, please refer to \cite{Wang-J-EKG-2019}.
 
With the help of Lemma \ref{lemma-div-curl}, we derive the energy identity,
\als%{energy-id-l-weyl-rescale}
{&\dtau  E_{[l+1]} (\W, t)   + 2 E_{[l+1]} (\W, t)    \nnb \\
  = &  \sum_{k \leq l}  \int_{M_t}  \nabla_{ I_{k}} \left( \eta  * \W + \Sigma  * \W  \right) *  \nabla^{\mathring{l}} \Delta^{ [\frac{l}{2}] }\W \, \di \mu_g +  \int_{M_t}  S_l \, \di \mu_g \nnb \\
  & +  \sum_{k \leq l}  \int_{M_t}  \nabla_{ I_{k-1}} \left( \Ew * \W + (\Sigma, \eta)  * (\Sigma, \eta) + \D \phi * \D \phi \right) * \nabla^{\mathring{k}} \Delta^{ [\frac{k}{2}] } \W\, \di \mu_g,
}
where the last line is due to $\sum_{k \leq l} \nabla_{I_{k-1}} \left( O_{imjn} \ast \W \right) * \nabla^{\mathring{k}} \Delta^{ [\frac{k}{2}] } \W$ in Lemma \ref{lemma-div-curl}. Here
\[ O_{imjn} = ( \eta -  \frac{\eta^2}{2} ) ( g \odot g )_{imjn}   + ( 1- \eta ) ( g \odot \Sigma )_{imjn} - \frac{1}{2} ( \Sigma \odot \Sigma )_{imjn}   + \R_{imjn},\] with $\R_{imjn}$ given in \eqref{E:Riem-Weyl}, by the Gauss equation \eqref{E:Gauss-Riem-hat-k}. As in \cite{Wang-J-EKG-2019}, the source term $S_l$ is given by
\al{coupling-cancellation}{
S_l & = -\nabla^{\mathring{l}}  \Delta^{ [\frac{l}{2}] } \left( J_{i \tau j}\right) \nabla^{\mathring{l}}  \Delta^{ [\frac{l}{2}] } \Ew^{ij} - \nabla^{\mathring{l}}  \Delta^{ [\frac{l}{2}] } \left(  J^*_{i \tau j}\right) \nabla^{\mathring{l}}  \Delta^{ [\frac{l}{2}] } \Hw^{ij} \nnb\\
&= \nabla^{\mathring{l}}  \Delta^{ [\frac{l}{2}] } \left( \nabla \dtau \phi * \nabla \phi +   \nabla^2 \phi * \dtau \phi \right) * \nabla^{\mathring{l}}  \Delta^{ [\frac{l}{2}] } \Ew \nnb \\
&\quad + \nabla^{\mathring{l}}  \Delta^{ [\frac{l}{2}] } \left( \left(k_i^j * \nabla \phi * \nabla \phi + \Sigma* (\dtau \phi)^2 \right) \right) * \nabla^{\mathring{l}}  \Delta^{ [\frac{l}{2}] } \Ew \nnb \\
&\quad + \nabla^{\mathring{l}}  \Delta^{ [\frac{l}{2}] } \left(  \left( \nabla^2 \phi + \Sigma* \dtau \phi \right) * \nabla \phi \right) * \nabla^{\mathring{l}}  \Delta^{ [\frac{l}{2}] } \Hw.
}
In fact, a cancellation is hidden in the calculation of $S_l$ \eqref{E:coupling-cancellation}. Namely, there is no  $\dtau^2 \phi$ in the final formula of  \eqref{E:coupling-cancellation}.
To estimate the nonlinear terms,  we typically take $ \nabla^{\mathring{l}}  \Delta^{ [\frac{l}{2}] } \left(  \nabla^2 \phi * \dtau \phi \right) * \nabla^{\mathring{l}}  \Delta^{ [\frac{l}{2}] } \Ew$ in $S_l$ for example. Other terms can be estimated in a more straightforward way. 

We begin with the lower order derivatives, $l \leq N-1$.
Note that, 
\al{bl-W-low}{
 \|\nabla_{I_a} \nabla^2 \phi\|_{L^2} & \lesssim  t^{-1} \|\nabla_{I_a} \nabla^2 ( m t \phi) \|_{L^2} \nnb \\
 &<  t^{-1} \| \D \phi \|_{H_{N+1}} \lesssim \varepsilon \Lambda t^{-\frac{3}{2}},
 } 
 for $a \leq N-1$.
Due to the extra factor $t^{-1}$, we can estimate directly
 \als{
& \| \nabla^{\mathring{l}}  \Delta^{ [\frac{l}{2}] } \left(  \nabla^2 \phi * \dtau \phi \right) * \nabla^{\mathring{l}}  \Delta^{ [\frac{l}{2}] } \Ew\|_{L^2} \\
 \lesssim {} &  \varepsilon^2 \Lambda^2 t^{-2} \| \Ew \|_{H_{l}}\lesssim  \varepsilon^3 \Lambda^3 t^{-3+\delta}, \quad l \leq N-1. 
 }
 
For the top order $l=N$, the estimate \eqref{E:bl-W-low} is no longer valid due to the regularity. Instead, it should be replaced by
 \be\label{bl-W-top}
 \|\nabla_{I_N} \nabla^2 \phi\|_{L^2}  < \| \nabla \phi \|_{H_{N+1}} \lesssim \varepsilon I_{N+2}  t^{-\frac{1}{2}},
\ee
due to \eqref{energy-B-kg}.
Combining \eqref{E:bl-W-low}--\eqref{bl-W-top}, we obtain
 \als{
& \| \nabla^{\mathring{N}}  \Delta^{ [\frac{N}{2}] } \left(  \nabla^2 \phi * \dtau \phi \right) * \nabla^{\mathring{N}}  \Delta^{ [\frac{N}{2}] } \Ew\|_{L^2} \\
 \lesssim {} & \varepsilon^3 \Lambda^3 t^{-3+\delta} +  \varepsilon^2  I^2_{N+2}  t^{-1} \|  \nabla^{\mathring{N}}  \Delta^{ [\frac{N}{2}] } \Ew\|_{L^2}. 
 }
 
In summary,  for the coupling terms $S_l$, we have
\begin{align*}
  \int_{M_t}  |S_{l}|\, \di \mu_g & \lesssim \varepsilon^3 \Lambda^3 t^{-3+2\delta}, \quad 0 \leq l \leq N-1, \\
  \int_{M_t}   |S_{N}|\, \di \mu_g & \lesssim \varepsilon^3 \Lambda^3 t^{-3+2\delta} + \varepsilon^2  I^2_{N+2}   t^{-1} E^{\frac{1}{2}}_{[N]} (\W). 
\end{align*}

From the energy identity, we derive,
\begin{equation}\label{eq-energy-inequality-Bianchi-leq-2}
 \dtau E_{N-1}(\W,t) + 2 E_{N-1}(\W, t)  \lesssim \varepsilon^3 \Lambda^3 t^{-3+2\delta},
\end{equation}
 for $0<\delta < \frac{1}{6}$. It follows that
\begin{equation*}
 \dt (t^2 E_{N-1}(\W,t))  \lesssim \varepsilon^3 \Lambda^3 t^{-2+2\delta},
\end{equation*}
and hence
\be
t^2 E_{N-1}(\W,t) \lesssim  \varepsilon^2 I_{N+1}^2 +  \varepsilon^3 \Lambda^3.  \label{eq-energy-estimate-Bianchi-l-leq-2}
\ee

When $l= N$, the estimate turns into
\als
{
 \dtau E_{[N]}(\W,t)  + 2 E_{[N]}(\W, t)  & \lesssim  \varepsilon I_{N+1}  E_{[N]}(\W, t)  \nnb \\
 & +  \varepsilon^3 I^3_{N+1}   t^{-2} +  \varepsilon^3 \Lambda^3 t^{-3+2\delta},
}
which is equivalent to 
\als{
 \dt (t^2 E_{[N]}(\W,t)) & \lesssim  \varepsilon I_{N+2}  t^{-1} (t^2 E_{[N]}(\W, t)) \\
 & \quad +   \varepsilon^3 I^3_{N+2}  t^{-1} +  \varepsilon^3 \Lambda^3 t^{-2+2\delta}.
}
An application of the Gr\"{o}nwall's inequality then yields
\be 
t^2 E_{[N]}(\W,t) \lesssim \left(  \varepsilon^2 I_{N+2}^2 +  \varepsilon^3 \Lambda^3 \right)t^{ \varepsilon C(I_{N+2})}.\label{eq-energy-estimate-Bianchi-l-3}
\ee
 Now we complete the proof by applying Corollary \ref{prop-elliptic-Delta-1}.
\end{proof}

\subsubsection{Estimates for the second fundamental form}\label{sec-ee-2ff}

In this section, we make use of the transport equations \eqref{eq-evolution-xi}--\eqref{eq-evolution-2} and constraint equations \eqref{E:div-curl-k-hat} to estimate the decay of $\|\eta\|_{H_{N+1}}$ and $\| \Sigma\|_{H_{N+1}}$. 
It is crucial to estimate the trace part $\eta$ a priori, since the transport equation of $\eta$ admits a structure of saving regularity. This makes it possible to view the constraint equations  \eqref{E:div-curl-k-hat} as an elliptic system for the traceless part $\Sigma$. We note that the estimate of $\|\eta\|_{H_{N+2}}$ is neither needed nor achieved at this stage.

\begin{proposition}\label{prop-energy-estimate-2nd}
%Let $I>0$ be constant depending only on the initial data, 
Under the bootstrap assumptions \eqref{BT-E-H}--\eqref{BT-KG}, and $N \geq 2$, we have
\begin{equation}\label{ee-xi}
 t^2 \|\eta\|^2_{H_{N+1}}  \lesssim \varepsilon^2  I_{N+2}^2 + \varepsilon^4 \Lambda^4 +  \varepsilon^2  I_{N+2}^2  t^{\varepsilon C(I_{N+2})}
 \end{equation}
 and
\begin{equation}\label{eq-energy-hat-k}
t^2 \| \Sigma\|^2_{H_{N+1}}  \lesssim  \left(  \varepsilon^2 I_{N+2}^2 +  \varepsilon^3 \Lambda^3 \right)  t^{ \delta}.
\end{equation}
\end{proposition}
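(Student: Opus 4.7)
The plan is to exploit the hierarchy between $\eta$ and $\Sigma$ described in the introduction. The trace part $\eta$ is estimated first because its transport equation \eqref{eq-evolution-xi} has a saving--regularity structure: its right-hand side is quadratic in first derivatives of the metric and scalar field, since $\R_{\tau\tau}=(\dtau\phi)^2-\tfrac{m^2t^2}{2}\phi^2$ by \eqref{eq-ricci-phi}. Once $\|\eta\|_{H_{N+1}}$ is controlled, the Codazzi relations \eqref{E:div-curl-k-hat} are read as a first-order elliptic div--curl system for $\Sigma$ with known source terms $\nabla\eta$, $\Hw_{ij}$, $\R_{\tau i}$.

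For \eqref{ee-xi}, I apply $\nabla_{I_k}$ for each $0\leq k\leq N+1$ to \eqref{eq-evolution-xi}. Commuting $\nabla_{I_k}$ past $\dtau$ produces lower-order terms of schematic form $\sum_{a+b=k-1}\nabla_{I_a}(\nabla\eta,\nabla\Sigma)\ast\nabla_{I_b}\eta$, since the Christoffel symbols evolve through \eqref{eq-evolution-1}. Multiplying by $2\nabla_{I_k}\eta$, integrating over $M_t$, and using the volume identity $\dtau\int_M f\,d\mu_g=\int_M(\dtau f-3\eta f)\,d\mu_g$, together with Propositions \ref{prop-Sobolev} and \ref{prop-ee-kg} and the bootstrap assumptions, I expect to arrive at
\begin{align*}
\dtau\|\eta\|_{H_{N+1}}^2+2\|\eta\|_{H_{N+1}}^2\lesssim \varepsilon\Lambda\|\eta\|_{H_{N+1}}^2+\bigl(\varepsilon^2\Lambda I_{N+2}\,t^{-1}+\varepsilon^2\Lambda^2 t^{-2+2\delta}\bigr)\|\eta\|_{H_{N+1}}.
\end{align*}
Rewriting with $\dtau=t\dt$ and applying Gr\"onwall's inequality to $t^2\|\eta\|_{H_{N+1}}^2$ produces the factor $t^{\varepsilon C(I_{N+2})}$ recorded in \eqref{ee-xi}; the $\varepsilon^4\Lambda^4$ contribution comes from the square of the inhomogeneous term integrated against $t$.

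For \eqref{eq-energy-hat-k}, first derive an $L^2$ bound on $\Sigma$ by multiplying \eqref{E:eq-trans-Sigma} by $2\Sigma^{ij}$ and integrating. Using \eqref{eq-Gauss-E} to control $\|R_{ij}+2g_{ij}\|_{L^2}\lesssim\|\Ew\|_{L^2}+O(\varepsilon^2\Lambda^2 t^{-2+2\delta})$, together with Proposition \ref{prop-energy-estimate-Bianchi}, yields $\dtau\|\Sigma\|_{L^2}+2\|\Sigma\|_{L^2}\lesssim\varepsilon I_{N+2}\,t^{-1+\varepsilon C(I_{N+2})}$, so that $t\|\Sigma\|_{L^2}\lesssim\varepsilon I_{N+2}\,t^{\varepsilon C(I_{N+2})}$. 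For the higher-order norm, I view the Codazzi system \eqref{E:div-curl-k-hat} as a first-order elliptic div--curl system for $\Sigma$. The standard div--curl elliptic estimate, justified through Corollary \ref{prop-elliptic-Delta-1} applied to the Weitzenb\"ock-type identity $\Delta\Sigma=\nabla\dive\Sigma-(\curl)^{\ast}\curl\Sigma+R\ast\Sigma$, combined with the bound on $\|\nabla\eta\|_{H_N}$ just derived, the bound on $\|\Hw\|_{H_N}$ from Proposition \ref{prop-energy-estimate-Bianchi}, and $\|\R_{\tau i}\|_{H_N}\lesssim\varepsilon^2 t^{-1}$ from \eqref{energy-B-kg}, yields
\begin{align*}
\|\Sigma\|_{H_{N+1}}\lesssim\|\Sigma\|_{L^2}+\|\nabla\eta\|_{H_N}+\|\Hw\|_{H_N}+\|\R_{\tau i}\|_{H_N}.
\end{align*}
Choosing $\varepsilon$ small enough that $\varepsilon C(I_{N+2})\leq\delta$ absorbs the Gr\"onwall exponent into $t^\delta$ and produces \eqref{eq-energy-hat-k}.

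The main obstacle is the apparent loss of regularity in reading the Codazzi system as elliptic for $\Sigma$ alone: the source $\nabla\eta$ is a second derivative of the metric, so a naive attempt would demand $\eta\in H_{N+2}$ before $\Sigma\in H_{N+1}$ can be obtained. The saving-regularity structure of the $\eta$ transport equation is precisely what allows $\nabla\eta$ to be treated as a lower-order forcing term, thereby closing the scheme with $(\eta,\Sigma)\in H_{N+1}\times H_{N+1}$ rather than $H_{N+2}\times H_{N+1}$. A secondary subtlety is tracking the Gr\"onwall factor $t^{\varepsilon C(I_{N+2})}$ and absorbing it into $t^\delta$ for $\Sigma$, which relies on $\delta$ being fixed while $\varepsilon$ is taken small depending on $I_{N+2}$.
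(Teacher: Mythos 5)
Your proposal follows essentially the same route as the paper: first the saving-regularity transport estimate for $\|\eta\|_{H_{N+1}}$, then an $L^2$ transport estimate for $\Sigma$, and finally the Codazzi system \eqref{E:div-curl-k-hat} read as an elliptic div--curl system for $\Sigma$ with $\nabla\eta$, $\Hw$ and $\R_{\tau i}=\nabla_i\phi\,\dtau\phi$ as bounded sources. The only (harmless) bookkeeping difference is that in the paper the coefficient of $\|\eta\|^2_{H_{N+1}}$ carries the decay $\varepsilon\Lambda t^{-1+\delta}$ and is therefore integrable, so the factor $t^{\varepsilon C(I_{N+2})}$ in \eqref{ee-xi} arises from integrating the inhomogeneous term $\varepsilon^2 I_{N+2}^2 t^{-1}$ (a logarithm bounded by $t^{\varepsilon C(I_{N+2})}$) rather than from the Gr\"onwall exponent of the homogeneous part, which keeps the constant in the exponent depending only on $I_{N+2}$ and not on $\Lambda$.
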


\begin{proof}
The transport equation \eqref{eq-evolution-xi} includes terms such as $\eta^2$, $|\Sigma|^2$ and $\D \phi * \D \phi$  on the right hand side, which are of the same regularity as $\eta$ under the bootstrap assumptions \eqref{BT-E-H}--\eqref{BT-KG}.
This structure helps us to save regularities of $\eta$. In fact, 
\als{ 
\p_{\tau} \|\eta\|_{H_{N+1}}^2 +2 \|\eta\|_{H_{N+1}}^2 & \lesssim \left( \|\eta\|_{L^\infty} +  \|\Sigma\|_{L^\infty} \right) \|\eta\|_{H_{N+1}}^2 \\
&+\left(  \|\eta^2\|_{H_{N+1}}  +  \|\Sigma * \Sigma\|_{H_{N+1}}  +   \|\D \phi * \D \phi\|_{H_{N+1}} \right)  \|\eta\|_{H_k}.
}
Applying the bootstrap assumptions, we have
\als{ 
\p_{t} (t^2 \|\eta\|_{H_{N+1}}^2)  & \lesssim \varepsilon \Lambda t^{-2+ \delta}  \cdot t^2 \|\eta\|^2_{H_{N+1}}+  \left( \varepsilon^2 \Lambda^2 t^{-2+2 \delta}  +  \varepsilon^2  I^2_{N+2}  t^{-1}  \right)  \cdot t \|\eta\|_{H_{N+1}},
}
%That is,  for $0 \leq k \leq N+1$,
%\[\p_{t} (t \|\eta\|_{H_k})  \lesssim \varepsilon \Lambda t^{-2+ \delta}  \cdot t \|\eta\|_{H_k} + \varepsilon^2 \Lambda^2 t^{-2+2 \delta}  +  \varepsilon^2  I^2_{N+2}  t^{-1},\]
and \eqref{ee-xi} follows.

From the transport equation \eqref{eq-evolution-2}, and the fact \[ \R_{i \tau j \tau} \Sigma^{ij} = (\Ew_{ij} - \frac{1}{2} \nabla_i \phi \nabla_j \phi) \Sigma^{i j}, \] we obtain
\als{
 \dtau \|\Sigma\|_{L^2}^2 +  2 \|\Sigma\|_{L^2}^2 &\lesssim \left( \|\eta\|_{L^\infty} +  \|\Sigma\|_{L^\infty} \right) \|\Sigma\|_{L^2}^2 \\
 &+  \left(\|\Ew \|_{L^2} + \|\nabla \phi\|_{L^\infty} \|\nabla \phi\|_{L^2} \right) \|\Sigma\|_{L^2}.
}
From the bootstrap assumptions \eqref{BT-E-H}--\eqref{BT-KG} and  Propositions \ref{prop-ee-kg}--\ref{prop-energy-estimate-Bianchi},
\als{ 
\p_{t} (t^2 \|\Sigma\|_{L^2}^2)  & \lesssim \varepsilon  \Lambda  t^{-2+ \delta} \cdot t^2 \|\Sigma\|_{L^2}^2 + \left(\varepsilon  I_{2} +  \varepsilon^{\frac{3}{2}} \Lambda^{\frac{3}{2}} +   \varepsilon^2 I^2_{4}  t^{-1} \right)  t^{-1}  \cdot t \|\Sigma\|_{L^2}.
%& \lesssim \varepsilon  \Lambda  t^{-2+ \delta} \cdot t^2 \|\Sigma\|_{L^2}^2 + \left( \varepsilon  I_{2} +  \varepsilon^{\frac{3}{2}} \Lambda^{\frac{3}{2}} +   \varepsilon^2 I^2_{4}  t^{-1} \right) \cdot \varepsilon  \Lambda  t^{-1+\delta}.
}
As a consequence, for a fixed constant $0<\delta < \frac{1}{6}$,
\be\label{ee-Sigma}
t \|\Sigma\|_{L^2} \lesssim ( \varepsilon  I_{2} + \varepsilon^{\frac{3}{2}} \Lambda^{\frac{3}{2}} ) \ln t \lesssim   ( \varepsilon  I_{2} + \varepsilon^{\frac{3}{2}} \Lambda^{\frac{3}{2}} ) t^{\delta}. %  +  \varepsilon^2 I^2_{4}.
\ee

With $\|\eta\|_{H_{N+1}}$ bounded, the div-curl equations \eqref{E:div-curl-k-hat} can be viewed as an elliptic system for $\Sigma$,
\als
{
(\dive \Sigma)_i &= 2 \nabla_i \eta - \nabla_j \phi   \dtau \phi, \nnb \\
(\curl \Sigma)_{ij} &= - \Hw_{ij}  - \epsilon_{ij}{}^{\!m} \nabla_m \eta,  
}
where we use the fact $\bar R_{\tau j} = \breve R_{\tau j}  = \breve{T}_{\tau j} = \nabla_j \phi   \dtau \phi$. Thus,
\begin{equation*}
\int_{M} |\nabla \Sigma|^2 \, \di \mu_g = \int_{M} -3 R_{mn} \Sigma^{im} \Sigma_{i}^{\, n} + \frac{R}{2} |\Sigma |^2 + |\Hw_{ij}|^2 + 3 |\nabla \eta| + \frac{1}{2} | \nabla_j \phi   \dtau \phi|^2 \, \di \mu_g,
\end{equation*}
and hence the identity 
\begin{equation*}
\| \Sigma\|^2_{H_{k+1}} \lesssim \| \Sigma\|^2_{H_{k}} + \| \Hw\|^2_{H_{k}} + \| \eta\|^2_{H_{k+1}} +\|  \nabla \phi   \dtau \phi \|^2_{H_{k}}, \quad k \leq N,
\end{equation*}
due to the bootstrap assumptions \eqref{BT-E-H}--\eqref{BT-KG}.
Combining \eqref{ee-Sigma} and  Propositions \ref{prop-ee-kg} and \ref{prop-energy-estimate-Bianchi}, the estimate \eqref{eq-energy-hat-k} follows.
%\als
%{t^2 \| \Sigma\|^2_{H_{N+1}} & \lesssim ( \varepsilon^2 I^2 + \varepsilon^2 \Lambda I + \varepsilon^{\frac{5}{2}} \Lambda^{\frac{5}{2}} ) t^{2 \delta} + \varepsilon^2 I^2 + \varepsilon^3 \Lambda^3.
%}
\end{proof}

\begin{remark}\label{rk-ee-closed}
Based on Propositions \ref{prop-ee-kg}, \ref{prop-energy-estimate-Bianchi} and \ref{prop-energy-estimate-2nd}, a further application of equations \eqref{eq-evolution-1}--\eqref{eq-evolution-2} and \eqref{Gauss-Ricci-hat-k} implies
\begin{align}
t^2 \| \dtau g_{ij} \|^2_{H_{N+1}} & \lesssim \left(  \varepsilon^2 I_{N+2}^2 +  \varepsilon^3 \Lambda^3 \right)  t^{\delta}, \label{ee-tg} \\
\|g_{ij} - \gamma_{i j}\|^2_{H_{N+1}} & \lesssim \varepsilon^2 I_{N+2}^2 +  \varepsilon^3 \Lambda^3, \label{ee-g-lower} \\
 t^2 \|\dtau \Sigma\|^2_{H_{N}} & \lesssim \left(  \varepsilon^2 I_{N+2}^2 +  \varepsilon^3 \Lambda^3 \right)  t^{\delta}, \label{ee-pt-Sigma} \\
  t^2 \|\dtau \eta \|^2_{H_{N}} & \lesssim \varepsilon^2  I_{N+2}^2 + \varepsilon^4 \Lambda^4 + \varepsilon^2  I_{N+2}^2 t^{ \varepsilon C(I_{N+2})}, \label{ee-pt-eta} \\
t^2 \| R_{ij} + 2 g_{ij } \|^2_{H_{N}} & \lesssim  \left(  \varepsilon^2 I_{N+2}^2 +  \varepsilon^3 \Lambda^3 \right)  t^{\delta}.  \label{ee-Ric}
\end{align} 
\end{remark}

\subsection{Closure of the bootstrap argument}
Now the main energy estimates, Propositions \ref{prop-ee-kg}--\ref{prop-energy-estimate-2nd} (also \eqref{ee-g-lower}), have been established. %we achieve that $\Lambda^2 \lesssim  I_{N+2}^2  +   I_{2} \Lambda  +  \varepsilon \Lambda^3 + \varepsilon^{\frac{1}{2}} \Lambda^{\frac{5}{2}}$.
We can take $\Lambda > I_{N+2}$ large enough and $\varepsilon$ small enough (depending on $I_{N+2}$), so that the bootstrap assumptions \eqref{BT-E-H}--\eqref{BT-KG} hold with $\Lambda$ replaced by $\frac{1}{2} \Lambda$.  In fact, this is the main difference from the technics of  local existence theory \cite{Fournodavlos-Luk-Kasner}. Namely, our main energy estimates are conducted with the regularity $(\eta, \Sigma) \in H_{N+1} \times H_{N+1}$, instead of $(\eta, \Sigma) \in H_{N+2} \times H_{N+1}$ \cite{Fournodavlos-Luk-Kasner}, so that all of the quantities $\|\Ew \|_{H_{N}}$, $\|\Hw \|_{H_{N}}$, $\|\Sigma \|_{H_{N+1}}$, $\|\eta \|_{H_{N+1}}$ decay well, and hence we can close the main energy estimates.
%We thus close the bootstrap argument. 

Note that, the local existence theorem \cite{Fournodavlos-Luk-Kasner} requires a bound for the top order $\nabla_{I_{N+2}} \eta$, which is not covered in the previous estimates. The estimate for $\nabla_{I_{N+2}} \eta$ will be postponed in the next subsection. Once the bound for $\nabla_{I_{N+2}} \eta$ is established,  
%Combining with Proposition \ref{pro-improve-eta} and 
together with the local existence theorem  \ref{thm-local-existence}, we conclude Theorem \ref{Thm-main-ee-EKG},
%there is a constant $C(I_{N+2})$ depending on $I_{N+2}$ such that  
%\als{
%t^{1-\delta} ( \|\Ew \|_{H_{N}} +  \|\Hw \|_{H_{N}} & +  \|\eta \|_{H_{N+1}}   +  \|\Sigma \|_{H_{N+1}}   ) (t) \nnb \\ 
%&+ t^{\frac{1}{2}} \|\D \phi \|_{H_{N+1}}  (t)   \leq \varepsilon C(I_{N+2}).}
and the global existence for the massive Einstein-scalar field equations  in Theorem \ref{thm-global-existence} follows.

\subsection{Improvement of the regularity for $\eta$}\label{sec-improve-regularity}
In this section we follow the idea in \cite{Fournodavlos-Luk-Kasner} to retrieve a uniform bound for the top order $\nabla_{I_{N+2}} \eta$. Note that, due to the presence of the massive scalar field, we lost the decay estimate for $\nabla_{I_{N+2}} \eta$.  However, in the massless case, there is no loss in decay rate, see Proposition \ref{prop-improve-w}. % and $\|g_{ij} - \gamma_{i j}\|_{H_{N+2}}$.
\begin{proposition}\label{pro-improve-eta}
For $N \geq 2$, we have
\als{
 \|\nabla_{I_{N+2}} \eta\|_{L^2} & \lesssim ( \varepsilon I_{N+2} + \varepsilon^2 \Lambda^2 ) t^{-1} + \varepsilon^2 I_{N+2}^2.
%\|g_{ij} - \gamma_{i j} \|_{H_{N+2}} &  \lesssim  \varepsilon I_{N+2}  t^{ \varepsilon I_{N+2} }.
}
\end{proposition}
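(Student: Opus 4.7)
The plan is to derive a linear transport equation for $\nabla_{I_{N+2}} \eta$ by commuting $N+2$ spatial derivatives through the scalar transport equation \eqref{eq-evolution-xi}, and then perform an $L^2$-based energy estimate. Schematically,
\[
\p_\tau \nabla_{I_{N+2}} \eta + \nabla_{I_{N+2}} \eta = \nabla_{I_{N+2}} \!\left( \eta^2 + \tfrac{1}{3}|\Sigma|^2 + \tfrac{1}{3} \R_{\tau\tau} \right) + [\p_\tau, \nabla_{I_{N+2}}] \eta,
\]
where the commutator is expanded using $\p_\tau g = -2\eta g - 2\Sigma$ from \eqref{eq-evolution-1}, so that $\p_\tau \Gamma \sim \nabla(\eta, \Sigma)$ and schematically $[\p_\tau, \nabla_{I_{N+2}}] \eta = \sum_{a+b = N+1} \nabla_{I_a}\nabla(\eta, \Sigma) \ast \nabla_{I_b} \eta$. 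Pairing with $\nabla^{I_{N+2}} \eta$, integrating over $M_t$, and using the measure identity $\p_\tau\,\di\mu_g = -3\eta\,\di\mu_g$ produces an energy inequality of the form
\[
\tfrac{1}{2} \p_\tau \|\nabla_{I_{N+2}} \eta\|_{L^2}^2 + (1 - C\varepsilon) \|\nabla_{I_{N+2}} \eta\|_{L^2}^2 \lesssim \mathcal{S}(t) \cdot \|\nabla_{I_{N+2}} \eta\|_{L^2},
\]
after absorbing the borderline term $\eta \, \nabla_{I_{N+2}}\eta$ from $\nabla_{I_{N+2}}(\eta^2)$ into the damping via the $L^\infty$ smallness in \eqref{BT-g}. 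The linear damping on the left is precisely what furnishes the $e^{-\tau} \sim t^{-1}$ decay on the initial data piece $\varepsilon I_{N+2} t^{-1}$.

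The main work lies in bounding the source $\mathcal{S}(t)$, whose top-order contributions split into three families. First, the term $\Sigma \ast \nabla_{I_{N+2}} \Sigma$ arising from $\nabla_{I_{N+2}}(|\Sigma|^2)$ naively requires control on $\nabla^{N+2}\Sigma$, which is unavailable at the current regularity $\Sigma \in H_{N+1}$. I handle these via integration by parts against $\nabla^{I_{N+2}}\eta$ combined with the Codazzi constraints \eqref{Codazzi-div-k}, \eqref{Codazzi-curl-k} and \eqref{E:div-curl-k-hat}: the divergence and antisymmetric parts of $\nabla\Sigma$ are replaced by $\nabla\eta$, $\Hw$, and scalar-field stress-energy factors. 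The $\nabla_{I_{N+2}}\eta$ pieces produced by the substitution are absorbed into the left-hand side, while the remainders are estimated by Propositions \ref{prop-ee-kg}, \ref{prop-energy-estimate-Bianchi} and \ref{prop-energy-estimate-2nd}. Second, the commutator contributions $\nabla^{N+1}\nabla(\eta,\Sigma) \ast \eta$ are treated by the same strategy, exploiting the pointwise smallness $\|\eta\|_{L^\infty} \lesssim \varepsilon \Lambda$. Third, the Klein--Gordon contribution from $\nabla_{I_{N+2}}\R_{\tau\tau}$ in which all derivatives fall on a single $\phi$-factor gives $\dtau\phi\,\nabla_{I_{N+2}}\dtau\phi$ and $m^2 t^2\,\phi\,\nabla_{I_{N+2}}\phi$; after Cauchy--Schwarz these are estimated using $\|\D\phi\|_{H_{N+1}} \lesssim \varepsilon I_{N+2} t^{-1/2}$ from Proposition \ref{prop-ee-kg}, together with the induced bound $\|\phi\|_{H_{N+2}} \lesssim \varepsilon I_{N+2} t^{-1/2}$.

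The hardest part --- and the source of the non-decaying summand $\varepsilon^2 I_{N+2}^2$ --- is the mass term in $\R_{\tau\tau}$ coming from \eqref{eq-ricci-phi}. The pairing
\[
\int_{M_t} \nabla^{I_{N+2}} \eta \cdot m^2 t^2\, \phi\, \nabla_{I_{N+2}} \phi\, \di \mu_g
\]
can be bounded only by $m^2 t^2 \|\phi\|_{L^\infty} \|\nabla_{I_{N+2}}\phi\|_{L^2} \|\nabla_{I_{N+2}}\eta\|_{L^2}$. Controlling $\|\phi\|_{L^\infty}$ through Sobolev embedding applied to the $mt\phi$ component of $\D\phi$ yields $\|\phi\|_{L^\infty} \lesssim \varepsilon I_{N+2} (mt)^{-1} t^{-1/2}$, so this source contribution is of size $\sim \varepsilon^2 I_{N+2}^2$ \emph{independently of} $t$. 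Collecting everything produces a differential inequality of the form
\[
\p_\tau \|\nabla_{I_{N+2}} \eta\|_{L^2} + \|\nabla_{I_{N+2}}\eta\|_{L^2} \lesssim (\varepsilon I_{N+2} + \varepsilon^2 \Lambda^2)\, t^{-1} + \varepsilon^2 I_{N+2}^2,
\]
after which Gronwall's inequality (using $e^{-(\tau - \tau_0)} \sim t_0/t$) yields the stated bound. The obstacle is intrinsic to the massive case: in the massless analogue (Proposition \ref{prop-improve-w}) the problematic $m^2 t^2 \phi^2$ source is absent, and one recovers genuine $t^{-1}$ decay at the top order.
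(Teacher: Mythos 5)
Your overall framework (commute $N+2$ derivatives through \eqref{eq-evolution-xi}, exploit the linear damping, identify the mass term as the source of the non-decaying $\varepsilon^2 I_{N+2}^2$) matches the shape of the paper's argument, and your diagnosis of why the massless case behaves better is correct. However, there are two genuine gaps at exactly the places where the paper has to work hardest, and the devices you propose to close them do not work.

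First, the Klein--Gordon contribution. You claim the top-order term $\dtau\phi\,\nabla_{I_{N+2}}\dtau\phi$ can be estimated ``after Cauchy--Schwarz \ldots using $\|\D\phi\|_{H_{N+1}}\lesssim \varepsilon I_{N+2}t^{-1/2}$.'' It cannot: this term contains $N+2$ derivatives of $\dtau\phi$, whereas Proposition \ref{prop-ee-kg} (and the energy \eqref{def-energy-kg}) only controls $\dtau\phi$ in $H_{N+1}$. This is not a technicality --- it is the reason the paper renormalizes. One must write $2\nabla_{I_N}\Delta\dtau\phi\cdot\dtau\phi = \dtau\bigl(2\nabla_{I_N}\Delta\phi\cdot\dtau\phi\bigr) - \nabla_{I_N}\Delta\phi\cdot 2\dtau^2\phi - \cdots$ and then use the Klein--Gordon equation \eqref{eq-rescale-kg-1+3-0} to trade $\dtau^2\phi$ for $\Delta\phi - 2\dtau\phi - m^2t^2\phi - 3\eta\dtau\phi$, all of which sit at admissible regularity; the total $\dtau$-derivative is then absorbed into the evolved quantity (the $\nabla_{I_N}\Delta\phi\cdot\dtau\phi$ piece of $\tilde\eta_{N+2}$ in \eqref{def-ti-eta}).

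Second, the term $\Sigma^{ij}\nabla_{I_N}\Delta\Sigma_{ij}$. Your plan to substitute the Codazzi/div-curl relations \eqref{E:div-curl-k-hat} fails because the contraction here is a full trace of the second derivative of $\Sigma$ against $\Sigma$, not a divergence or a curl; and the elliptic estimate that would give $\|\Sigma\|_{H_{N+2}}$ requires $\|\Hw\|_{H_{N+1}}$ and $\|\eta\|_{H_{N+2}}$ as inputs --- the former is unavailable ($\Hw\in H_N$ only) and the latter is precisely what you are trying to prove. Integrating by parts against $\nabla^{I_{N+2}}\eta$ instead pushes a derivative onto $\eta$ and produces $N+3$ derivatives of $\eta$, which is worse. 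The paper's resolution is again renormalization: use the second-order equation \eqref{E:wave-Sigma} to replace $\Delta\Sigma_{ij}$ by $\dtau^2\Sigma_{ij}+2\dtau\Sigma_{ij}+\cdots$, write the result as $\dtau\bigl(\dtau\nabla_{I_N}\Sigma_{ij}\Sigma^{ij}+2\nabla_{I_N}\Sigma_{ij}\Sigma^{ij}\bigr)$ plus the controllable remainder $3\nabla_{I_N}\nabla_i\nabla_j\eta\,\Sigma^{ij}+f_2$, and absorb the total derivative into $\tilde\eta_{N+2}$. Without introducing this renormalized variable (or an equivalent device), the direct energy estimate you set up cannot be closed.
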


\begin{proof}
Applying $\nabla_{I_{N}} \Delta$ on \eqref{eq-evolution-xi} and commute it with $\p_\tau$, we have
\als{
 \p_{\tau}  \nabla_{I_{N}} \Delta \eta +   \nabla_{I_{N}} \Delta \eta &=  2 \eta  \nabla_{I_{N}} \Delta \eta +  \frac{2}{3}  \nabla_{I_{N}} \Delta \Sigma_{i j} \cdot \Sigma^{i j} + \frac{1}{3}   \nabla_{I_{N }} \Delta \R_{\tau \tau} (\phi) \nnb \\
& \quad + \sum_{ \substack{a+b\leq N+2\\ a, \, b \leq N+1} } ( \nabla_{I_a} \Sigma, \nabla_{I_a} \eta) * ( \nabla_{I_b} \Sigma, \nabla_{I_b} \eta).
}
We notice that the terms $\frac{2}{3}  \nabla_{I_{N}} \Delta \Sigma_{i j} \cdot \Sigma^{i j}$ and $\frac{1}{3}   \nabla_{I_{N }} \Delta \R_{\tau \tau} (\phi)$ on the right hand side are not bounded due to the restriction of regularity. In the following, we will treat these two terms by means of renormalization.

From \eqref{eq-ricci-phi}, we have
\als{
  & \nabla_{I_{N }} \Delta \R_{\tau \tau} (\phi) =    \nabla_{I_{N }} \Delta \left(\dtau \phi \dtau \phi - \frac{1}{2} m^2 t^2 \phi^2 \right) \\
  ={} &2   \nabla_{I_{N }} \Delta \dtau \phi \cdot \dtau \phi -  m^2 t^2  \nabla_{I_{N }} \Delta \phi \cdot  \phi + \sum_{\substack{a+b=N+2\\ a, \, b \leq N+1}}  \nabla_{I_{a}}  \D \phi * \nabla_{I_{b}} \D \phi.
}
By using the Klein--Gordon equation, the top order term can be rearranged as below
\als{
 &2   \nabla_{I_{N }} \Delta \dtau \phi \cdot \dtau \phi -  m^2 t^2  \nabla_{I_{N }} \Delta \phi \cdot  \phi \\
  = {}& \dtau ( 2  \nabla_{I_{N }} \Delta  \phi \cdot \dtau \phi)  -    \nabla_{I_{N }} \Delta  \phi   ( 2 \dtau^2 \phi + m^2 t^2 \phi) - 2 [\dtau, \nabla_{I_N} \Delta] \phi \cdot \dtau \phi \\
 = {}& \dtau ( 2  \nabla_{I_{N }} \Delta  \phi \cdot \dtau \phi)  -    \nabla_{I_{N }} \Delta  \phi   ( 2 \Delta \phi - 4 \dtau \phi - m^2 t^2 \phi - 6 \eta \dtau \phi ) \\
 & + \sum_{a+b  = N+1}   \left( \nabla_{I_a}  \Sigma, \, \nabla_{I_a}  \eta \right) * \nabla_{I_{b}} \nabla \phi \cdot \dtau \phi.
}  
%[\dtau, \nabla_{I_l}](\psi) &= \sum_{a+2 +b  = l}  \nabla_{I_a}  \left( \nabla \Sigma_{i j} + \nabla \eta \right) * \nabla_{I_{b}} \nabla \psi, \quad l \geq 2, \label{def-commuting-KN-l}
It follows that, for $N \geq 2$,
\al{trans-source}{
   \nabla_{I_{N }} \Delta \R_{\tau \tau} (\phi) = {}& \dtau ( 2  \nabla_{I_{N }} \Delta  \phi \cdot \dtau \phi)  +  m^2 t^2  \nabla_{I_{N }} \Delta  \phi \cdot  \phi  + f_0,
} 
with \[  \|f_0 \|_{L^2} \lesssim \varepsilon^2 I^2_{N+2} t^{-1} + \varepsilon^3 \Lambda^3 t^{-2+\delta}. \]
On the other hand, using the wave equation \eqref{E:wave-Sigma}, we obtain
\als{
& \nabla_{I_{N}} \Delta \Sigma_{i j} \cdot \Sigma^{i j} =  
\dtau \left( \dtau  \nabla_{I_{N}} \Sigma_{i j} \Sigma^{i j} + 2   \nabla_{I_{N}} \Sigma_{i j} \Sigma^{i j} \right) \\
&\quad + 3 \nabla_{I_{N}} \nabla_i \nabla_j \eta \Sigma^{i j}  + \Sigma^{ij}   \nabla_{I_{N}} \left(\dtau \R_{ij} (\phi) - \nabla_i (\dtau \phi \nabla_j \phi) - \nabla_j (\dtau \phi \nabla_i \phi) \right) \\
&\quad + f_1, 
}
with \[  \|f_1\|_{L^2} \lesssim (\varepsilon^2 I^2_{N+2}+\varepsilon^3 \Lambda^3) t^{-2+\delta}. \] 
\begin{comment}
\als
{& \dtau^2 \Sigma_{ij} - \Delta \Sigma_{ij} + 2 \dtau \Sigma_{ij}  \\
={}& 3 \nabla_i \nabla_j \eta - \Delta \eta g_{i j} - \dtau \R_{ij} (\phi) + \nabla_i (\dtau \phi \nabla_i \phi) + \nabla_j (\dtau \phi \nabla_i \phi) \\
& + (\dtau \Sigma, \dtau \eta) * (\Sigma, \eta) +  (\Sigma, \eta) * (\Sigma, \eta)  \\
& +  (\Sigma, \eta) * (\Sigma, \eta) * (\Sigma, \eta)   \\
& +  (\Sigma, \eta) * (\Sigma, \eta) * (\R_{ m n}, \R_{\tau \tau})  \pm \R_{m n} \pm \R_{\tau \tau}  .
}
we have
\als{
& \p_{\tau} \nabla_{I_k} \Delta \eta +  \nabla_{I_k} \Delta \eta - \frac{2}{3}\dtau (  \nabla_{I_{N }} \Delta  \phi \cdot \dtau \phi) \\
&= \frac{2}{3} \left( \lie^2_{\dtau} \nabla_{I_k} \Sigma_{i j} + 2 \lie_{\dtau} \nabla_{I_k} \Sigma_{i j} \right)   \cdot \Sigma^{i j}   - \frac{2}{3} \Sigma^{ij}  \nabla_{I_k} \nabla_i \nabla_j \eta \\
& - \frac{2}{3}\Sigma^{ij}  \nabla_{I_k} (\dtau \R_{ij} - \nabla_i (T \phi \nabla_j \phi) - \nabla_j (T \phi \nabla_i \phi)) \nnb \\
& \quad \C{+  2  m^2 t^2  \nabla_{I_{N }} \Delta  \phi \cdot  \phi } + 2 \nabla_{I_k} \Delta \eta \cdot \eta  + \sum_{  a+b\leq k  } ( \nabla_{I_a}\dtau \Sigma, \nabla_{I_a} \dtau \eta) * ( \nabla_{I_b} \Sigma, \nabla_{I_b} \eta) \nnb \\
& \quad  + \sum_{ \substack{a+b\leq k+2\\ a, \, b \leq k+1} } ( \nabla_{I_a} \Sigma, \nabla_{I_a} \eta) * ( \nabla_{I_b} \Sigma, \nabla_{I_b} \eta)  + \sum_{\substack{a+b\leq k+2\\ a, \, b \leq k+1}}  \nabla_{I_{a}}  \D \phi \cdot \nabla_{I_{b}} \D \phi + l.o.t.,  \\
}
\end{comment}
According to the equation \eqref{eq-ricci-phi},  
\als{
 & \dtau \R_{ij} (\phi) - \nabla_i (\dtau \phi \nabla_j \phi) - \nabla_j (\dtau \phi \nabla_i \phi) \\
  =& \dtau  \left(  \frac{m^2 t^2}{2} \phi^2  g_{i j} \right) - 2 \dtau \phi \nabla_i \nabla_j \phi.
}
Applying the fact that $\Sigma$ is trace free, yields
\als{ 
& \Sigma^{ij}   \nabla_{I_{N}} (\dtau \R_{ij} (\phi) - \nabla_i (\dtau \phi \nabla_j \phi) - \nabla_j (\dtau \phi \nabla_i \phi)) \\
&=2 \Sigma^{ij}   \nabla_{I_{N}}  (\dtau \phi \nabla_i \nabla_j \phi) + \frac{1}{2} \Sigma^{ij}   \nabla_{I_{N}} \left( m^2 t^2 \phi^2 \dtau g_{i j} \right) \\
&= \sum_{a+b = N+1} \Sigma * \nabla_{I_{a}} \D \phi * \nabla_{I_{b}} \D \phi + \sum_{a+b+c= N} \Sigma * \nabla_{I_{a}} \D \phi * \nabla_{I_{b}} \D \phi * \nabla_{I_{c}} \Sigma. 
}
That is,  
\al{trans-Sigma}{
 \nabla_{I_{N}} \Delta \Sigma_{i j} \cdot \Sigma^{i j} = {}&  
\dtau ( \dtau  \nabla_{I_{N}} \Sigma_{i j} \Sigma^{i j} + 2   \nabla_{I_{N}} \Sigma_{i j} \Sigma^{i j} ) + 3 \nabla_{I_{N}} \nabla_i \nabla_j \eta \Sigma^{i j} \nnb \\
&+   f_2, 
}
with \[ \|f_2\|_{L^2} \lesssim (\varepsilon^2 I^2_{N+2}+\varepsilon^3 \Lambda^3) t^{-2+\delta}. \]

With the help of equations \eqref{E:trans-source} and \eqref{E:trans-Sigma}, we derive
\als{
& \p_{\tau}  \left(  \nabla_{I_N} \Delta \eta - \frac{2}{3} \left( \dtau  \nabla_{I_{N}} \Sigma_{i j} \Sigma^{i j} + 2  \nabla_{I_N} \Sigma_{i j} \Sigma^{i j}+ \nabla_{I_{N }} \Delta  \phi \cdot \dtau \phi  \right)  \right)  +  \nabla_{I_N} \Delta \eta   \\
 ={}& 2 \eta \nabla_{I_N} \Delta \eta + 2 \Sigma \cdot \nabla_{I_{N+2}} \eta +  2  m^2 t^2  \nabla_{I_{N }} \Delta  \phi \cdot  \phi   +   f_3, 
}
with \[ \|f_3\|_{L^2} \lesssim \varepsilon^2 I^2_{N+2} t^{-1} + \varepsilon^3 \Lambda^3 t^{-2+\delta}. \]
Let
\begin{equation}\label{def-ti-eta}
  \tilde \eta_{N+2} =\nabla_{I_N} \Delta \eta - \frac{2}{3} \left( \dtau  \nabla_{I_{N}} \Sigma_{i j} \Sigma^{i j} + 2  \nabla_{I_N} \Sigma_{i j} \Sigma^{i j}+ \nabla_{I_{N }} \Delta  \phi \cdot \dtau \phi  \right).
  \end{equation}
We obtain
\als{
 \p_{\tau}   \tilde \eta_{N+2} +  \tilde \eta_{N+2}  ={}& 2 \nabla_{I_N} \Delta \eta \cdot \eta  + 2 \Sigma \nabla_{I_{N+2}} \eta +  2  m^2 t^2  \nabla_{I_{N }} \Delta  \phi \cdot  \phi  +   f_4,
}
with \[ \|f_4\|_{L^2} \lesssim \varepsilon^2 I^2_{N+2} t^{-1} + \varepsilon^2 \Lambda^2 t^{-2+2\delta}. \]
As a consequence, for $N \geq 2$, 
\als{
\p_t (t^2  \|\tilde \eta_{N+2}\|^2_{L^2} ) & \lesssim t^2  \|\tilde \eta_{N+2}\|^2_{L^2} \cdot t^{-1} (\|\Sigma\|_{L^\infty} + \| \eta \|_{L^\infty}) \\
& + t \|\tilde \eta_{N+2}\|_{L^2} \|\nabla_{I_{N+2}} \eta\|_{L^2} \cdot  (\|\Sigma\|_{L^\infty} + \| \eta \|_{L^\infty}) \\
%& + t \tE^{\frac{1}{2}}_{[N+2]} (\eta) \cdot  ( \|\Sigma\|^2_{H_{N+1}} + \|\dtau \Sigma\|^2_{H_{N}}  + \|\eta\|^2_{H_{N+1}} + \|\dtau \eta \|^2_{H_{N}}  + \|\D \phi \|^2_{H_{N+1}} ) \\
& + t  \|\tilde \eta_{N+2}\|_{L^2} \cdot  ( \varepsilon^2 I^2_{N+2} t^{-1} + \varepsilon^2 \Lambda^2 t^{-2+2\delta} + t  \|\nabla_{I_{N }} \Delta  \phi \cdot mt \phi \|_{L^{2}} ).
}
By the known estimates for $\|\Sigma\|_{L^\infty}$, $ \| \eta \|_{L^\infty}$ \eqref{E:pre-estimates}, and \[t  \|\nabla_{I_{N }} \Delta  \phi \cdot mt \phi \|_{L^{2}} \lesssim t \|\nabla_{I_{N }} \Delta  \phi \|_{L^{2}} \|mt \phi \|_{L^{\infty}} \lesssim \varepsilon^2 I^2_{N+2}, \] it follows that
%\als{
%\p_t (t^2  \|\tilde \eta_{N+2}\|^2_{L^2}  ) & \lesssim t^2  \|\tilde \eta_{N+2}\|^2_{L^2}  \cdot  \varepsilon I_{N+2} t^{-2+\delta} \\
%& + t \|\tilde \eta_{N+2}\|_{L^2} \|\nabla_{I_{N+2}} \eta\|_{L^2} \cdot  \varepsilon I_{N+2} t^{-1+\delta} \\
%& + t  \|\tilde \eta_{N+2}\|_{L^2} \cdot  ( \varepsilon^2 I^2_{N+2} t^{-1} + \varepsilon^2 I^2_{N+2} ) 
%}
%and then 
\als{
\p_t (t  \|\tilde \eta_{N+2}\|_{L^2}  ) & \lesssim ( \|\tilde \eta_{N+2}\|_{L^2} +  \|\nabla_{I_{N+2}} \eta\|_{L^2} )   \varepsilon \Lambda t^{-1+\delta} \\
%& +  \|\nabla_{I_{N+2}} \eta\|_{L^2} \cdot  \varepsilon I_{N+2} t^{-1+\delta} \\
& +  \varepsilon^2 I^2_{N+2} t^{-1} + \varepsilon^2 \Lambda^2 t^{-2+2\delta} + \varepsilon^2 I^2_{N+2}. 
}
By Corollary \ref{prop-elliptic-Delta-1},
\als{
\|\nabla_{I_{N+2}} \eta\|_{L^2} \lesssim {}&  \|\nabla_{I_{N}} \Delta \eta\|_{L^2} + \| \eta\|_{H_{N+1} } \\
\lesssim & \|\tilde \eta_{N+2}\|_{L^2} +  \varepsilon^2 I^2_{N+2} t^{-1} +  \varepsilon^2 \Lambda^2 t^{-2+2\delta},
}
 we achieve the energy inequality,
\al{energy-inequ-teta}{
t  \|\tilde \eta_{N+2}\|_{L^2}  \lesssim {} & \varepsilon I_{N+2} + \varepsilon^2 \Lambda^2 + \int_{t_0}^t \varepsilon^2 I^2_{N+2} \, \di t   \nnb \\
&+ \int_{t_0}^t \varepsilon \Lambda t^{-2+\delta} \cdot t\|\tilde \eta_{I_{N+2}}\|_{L^2} \, \di t.
}
It then follows from the Gr\"{o}nwall's inequality that
\[%\label{ee-N+2-xi}
t\|\tilde \eta_{I_{N+2}} \|_{L^2} \lesssim \varepsilon I_{N+2} + \varepsilon^2 \Lambda^2 +  \varepsilon^2 I^2_{N+2} t.
\]
\begin{comment}
Since $\|g_{ij} - \gamma_{i j}\|_{H_{N+1}}  \lesssim \varepsilon I_{N+2}$, and noting that $\nabla [\gamma] g = \nabla \gamma * g$, hence there is, 
\als{
\sum_{1\leq k \leq N+1}\| \nabla_{I_k} \gamma_{i j}\|_{L^2}  & \lesssim  \varepsilon I_{N+2}, \\
\sum_{1 \leq k\leq N+1} \| \nabla_{I_k} [\gamma] g_{i j}\|_{L^2(M, \gamma)} & \lesssim \varepsilon I_{N+2}.
}
%and hence $\sum_{1 \leq k\leq N+1} \| \nabla_{I_k} [\gamma] g_{i j}\|_{L^2}  \lesssim \varepsilon I_{N+2}$, 
Then
 we have 
 \als{
 \|\nabla_{I_{N+2}} \eta\|_{L^2} & \lesssim \| \eta\|_{H_{N+2}(M, \gamma)}  \\
 & \lesssim  \| \eta\|_{H_{N+1}(M, \gamma)} + \|\nabla_{I_{N}} [\gamma] \Delta_\gamma  \eta\|_{L^2(M, \gamma)} \\
 & \lesssim \| \eta\|_{H_{N+1} } +  \|\nabla_{I_{N}} \Delta \eta\|_{L^2}.
 }
where the second inequality follows by Proposition \ref{prop-elliptic-Delta} and the density theorem on $(M, \gamma)$. 
\end{comment}
Since \[ \|\nabla_{I_{N+2}} \eta\|_{L^2} \lesssim  \|\tilde \eta_{N+2}\|_{L^2} +  \varepsilon^2 I^2_{N+2} t^{-1} + \varepsilon^2 \Lambda^2 t^{-2+2\delta}, \] it concludes our proof.
\end{proof}

\begin{remark}
Based on the equation \eqref{eq-evolution-1} and Proposition \ref{pro-improve-eta}, we know that $\|\nabla_{I_{N+2}} ( g-\gamma )\|_{L^2}$ has $\varepsilon \ln t$ growth. Combined this fact with the non-decaying estimate for $\|\nabla_{I_{N+2}} \eta\|_{L^2}$  makes it difficult to close the energy argument using the reduced system (or rather the second order equation for $\Sigma$ \eqref{E:wave-Sigma}).
\end{remark}

\section{Global existence for massless Einstein-scalar field}\label{sec-EW}
In this section, we consider the massless Einstein-scalar field equations, i.e. $m=0$.
Similar to the massive case, we will prove the following propagation estimates.
\begin{theorem}[Main estimates for massless Einstein scalar field]\label{ME-w}
Fix two independent constants $\delta, \, \sigma \in \left( 0, \,  \frac{1}{6} \right)$ and an integer $N \geq 2$.
For $\varepsilon$ small enough, and all initial data of  massless Einstein-scalar field equations and $I_{N+2} \in \mathbb{R}^{+}$ satisfy
\als{
&  \| \dtau \phi \|_{H_{N+1}}  (t_0)  +  \| \phi \|_{H_{N+2}}  (t_0) +  \|g_{ij} - \gamma_{ij} \|_{H_{N+1}}  (t_0)   \\
& \quad + \|\eta \|_{H_{N+2}}  (t_0) +  \|\Sigma \|_{H_{N+1}}  (t_0)  + \|\Ew \|_{H_{N}} (t_0) +  \|\Hw \|_{H_{N}}  (t_0)   \leq \varepsilon I_{N+2},
}
%and \[\frac{7}{8} \gamma_{ij} \leq g_{ij} (t_0) \leq \frac{9}{8} \gamma_{ij} \quad \text{ as bilinear forms}, \] 
there is a constant $C(I_{N+2})$ depending only on $I_{N+2}$ (in particular, not on $\varepsilon$ and $t_0$), such that 
\al{energy-estimate}
{
&  t^{1-\sqrt{\sigma}}( \| \phi \|_{H_{N+2}} + \| \dtau \phi \|_{H_{N+1}})  (t) + t^{1-\delta/2} \|\Sigma \|_{H_{N+1}} (t)  \nnb \\
&+ t  ( \|\Ew \|_{H_{N}} +  \|\Hw \|_{H_{N}} + \|\eta \|_{H_{N+2}}) (t) +  \|g_{ij} - \gamma_{ij} \|_{H_{N+1}}(t)  \leq \varepsilon C(I_{N+2}),
}
and  
\als{
  t^{1-\delta/2} \| R_{ij} + 2 g_{ij } \|_{H_{N}} (t) &\leq \varepsilon C(I_{N+2}),
   % t^{1-\delta/2} \| R_{imjn} + g_{ij } g_{m n} - g_{in} g_{jm} \|_{H_{N}} &  \leq \varepsilon C(I_{N+2}).
  }
  hold for all $t\geq t_0$.
%\[ t^{1-\delta} (  \|\dtau \eta \|_{H_{N+1}}   +  \|\dtau \Sigma \|_{H_{N}}   ) (t) \leq \varepsilon C(I_{N+2}). \]
%The subindex $N+2$ denotes the number of derivatives used in the norms.
\end{theorem}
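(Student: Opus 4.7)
The proof will mirror the architecture of Theorem~\ref{Thm-main-ee-EKG}---bootstrap, matter energy, Bianchi/transport energies, elliptic recovery, and improvement of top-order $\eta$---with two essential modifications forced by the massless case. First, the Klein--Gordon energy \eqref{def-energy-kg} is no longer coercive (the $m^2t^2|\phi|^2$ term is gone), so I will replace it by a wave-type modified energy for $\phi$ that exploits the spectral gap $\lambda(-\Delta_\gamma)\ge 1$ on $\mathbf{H}^3$. Second, because $\R_{\tau\tau}(\phi)=(\dtau\phi)^2$ and $\R_{ij}(\phi)=\nabla_i\phi\,\nabla_j\phi$ no longer carry the bad $m^2t^2\phi^2$ term, the sources for the Bianchi equations and for the transport \eqref{eq-evolution-xi} of $\eta$ are purely quadratic in $D\phi$; this will close several estimates that are only borderline in the massive analysis with honest power decay, and in particular will promote the top-order estimate on $\eta$ from uniformly bounded to $t^{-1}$-decaying.

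The bootstrap assumptions will read, with $\Lambda$ large to be chosen,
\[t(\|\Ew\|_{H_N}+\|\Hw\|_{H_N}+\|\eta\|_{H_{N+2}})+t\|\Sigma\|_{H_{N+1}}\le\varepsilon\Lambda t^{\delta/2},\qquad t^{1-\sqrt{\sigma}}(\|\phi\|_{H_{N+2}}+\|\dtau\phi\|_{H_{N+1}})\le\varepsilon\Lambda,\]
together with the $C^0/C^1$ smallness of \eqref{BT-g}. The core new step is the wave-type energy, defined at order $l\le N+1$ by
\[\tE^{w}_{[l+1]}(\phi,t)=\int_{M_t}\Bigl(|\nabla^{\mathring l}\Delta^{[l/2]}\dtau\phi|^2+|\nabla^{l^\prime}\Delta^{[(l+1)/2]}\phi|^2-|\nabla^{\mathring l}\Delta^{[l/2]}\phi|^2+\alpha\,\nabla^{\mathring l}\Delta^{[l/2]}\phi\cdot\nabla^{\mathring l}\Delta^{[l/2]}\dtau\phi\Bigr)\di\mu_g,\]
for a constant $\alpha\in(0,2)$ chosen close to $2-2\sqrt{\sigma}$. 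Coercivity of $\tE^w_{[l+1]}$ reduces to the Poincaré-type inequality $\|\nabla\psi\|^2\ge\|\psi\|^2$, which holds on $(\mathbf{H}^3,\gamma)$ and transfers to $(M^3,g)$ via Corollary~\ref{coro-density} and the $C^1$-closeness of $g$ to $\gamma$, with an $O(\varepsilon\Lambda)$ defect absorbed into the $\sqrt{\sigma}$ margin. Pairing the rescaled wave equation \eqref{eq-rescale-kg-1+3-0} (with $m=0$) against $2\dtau+\alpha$ produces a Grönwall inequality of the form $\dtau\tE^w_{[l+1]}+\bigl(2-\alpha-O(\varepsilon\Lambda t^{\delta-1})\bigr)\tE^w_{[l+1]}\lesssim\varepsilon^3\Lambda^3 t^{3\delta-3}$, whose integration yields the desired $t^{-1+\sqrt{\sigma}}$ decay for $\|\dtau\phi\|_{H_{N+1}}$ and $\|\phi\|_{H_{N+2}}$ after the elliptic recovery of Corollary~\ref{prop-elliptic-Delta-1}.

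Equipped with this scalar decay, the Bianchi argument of Proposition~\ref{prop-energy-estimate-Bianchi} and the transport/div--curl argument of Proposition~\ref{prop-energy-estimate-2nd} run more cleanly than in the massive case: the source terms $J_{i\tau j}$, $J^*_{i\tau j}$ in \eqref{eq-1+3-bianchi-T-E}--\eqref{eq-1+3-bianchi-T-H} and the right-hand side of \eqref{eq-evolution-xi} are now quadratic in $D\phi$ and contribute at rate $t^{-3+2\sqrt{\sigma}}$, well below the $t^{-1}$ decay threshold. Consequently the Grönwall step for Bianchi closes without the parasitic $t^{\varepsilon C(I_{N+2})}$ factor of the massive case, producing exact $t^{-1}$ decay for $\|\Ew\|_{H_N}+\|\Hw\|_{H_N}$ and for $\|\eta\|_{H_{N+1}}$; the elliptic system \eqref{E:div-curl-k-hat} (with $\R_{\tau j}=\nabla_j\phi\,\dtau\phi$) then delivers $\|\Sigma\|_{H_{N+1}}\lesssim\varepsilon t^{-1+\delta/2}$, and the Gauss equation \eqref{eq-Gauss-E} gives the matching bound on $\|R_{ij}+2g_{ij}\|_{H_N}$, while integrating \eqref{eq-evolution-1} controls $\|g_{ij}-\gamma_{ij}\|_{H_{N+1}}$. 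The top-order $\eta$ improvement imitates Proposition~\ref{pro-improve-eta} but with a crucial simplification: the source $2m^2t^2\nabla_{I_N}\Delta\phi\cdot\phi$ that forced the non-decaying $\varepsilon^2I_{N+2}^2 t$ term in the massive argument is now \emph{absent}, so Grönwall delivers $t\|\nabla_{I_{N+2}}\eta\|_{L^2}\lesssim\varepsilon C(I_{N+2})$ and hence $\|\eta\|_{H_{N+2}}\lesssim\varepsilon t^{-1}$, closing the bootstrap.

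The principal obstacle is the design and coercivity of $\tE^w_{[l+1]}$: one must transfer the spectral gap from $\gamma$ to the perturbed $g$ without degrading the damping rate, control the commutators $[\nabla^{\mathring l}\Delta^{[l/2]},\dtau]$ and $[\Delta,\nabla]$ at top order (which produce $\nabla\Sigma$- and $\nabla\eta$-type interactions that must be absorbed into the $\delta/2$ and $\sqrt{\sigma}$ margins via Corollary~\ref{prop-elliptic-Delta-1}), and tune $\alpha$ so that $(2-\alpha)/2=1-\sqrt{\sigma}$ survives all geometric perturbation errors. The $\sqrt{\sigma}$ (rather than linear $\sigma$) loss reflects that on $\mathbf{H}^3$ the bottom modes sit exactly at $\lambda=1$, producing a double characteristic root of $u_{\tau\tau}-u_\tau+(\lambda-3/4)u=0$ (where $u=t^{1/2}\phi$) and thus polynomial---not exponential---decay of $e^{\tau/2}\phi$, a borderline any $O(\varepsilon)$ geometric perturbation touches to leading order.
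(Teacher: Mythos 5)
Your overall architecture coincides with the paper's: the same bootstrap hierarchy, an Andersson--Moncrief-type corrected wave energy for $\phi$ exploiting $\mathrm{spec}(-\Delta_\gamma)\subset[1,\infty)$ on ${\bf H}^3$, the Bianchi/transport/div--curl estimates (which, as you say, close with clean $t^{-1}$ rates because the sources are now quadratic in $D\phi$), and the renormalized top-order estimate for $\eta$, which indeed gains $t^{-1}$ decay once the $m^2t^2\nabla_{I_N}\Delta\phi\cdot\phi$ source disappears. Your closing remark on the double root at $\lambda=1$ correctly identifies the origin of the $\sqrt{\sigma}$ loss (the paper's $\alpha_+=1-\sqrt{1-\lambda_0}$).

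However, the one object you write down explicitly --- the modified energy --- is mis-designed, and the central step would fail as stated. After transferring the spectral gap to the perturbed metric one only has $\int_M|\nabla\psi|^2\,\di\mu_g\ge\lambda_0\int_M|\psi|^2\,\di\mu_g$ with $\lambda_0=1-\sigma<1$. Your functional contains the term $-|\nabla^{\mathring l}\Delta^{[l/2]}\phi|^2$; at order $l=0$ the best available lower bound is $\int(|\dtau\phi|^2-\sigma|\phi|^2+\alpha\,\phi\,\dtau\phi)\,\di\mu_g$, and the pointwise quadratic form $x^2+\alpha xy-\sigma y^2$ is indefinite for every $\alpha\neq0$ and $\sigma>0$ (take $x=0$, $y\neq0$). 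This is not an $O(\varepsilon\Lambda)$ defect that can be ``absorbed into the $\sqrt{\sigma}$ margin'': the energy can vanish or turn negative while $\|\phi\|_{H_{N+2}}$ is large, so the Gr\"onwall argument yields no control. The paper's $E^w_{[k+1]}$ omits the negative zeroth-order term entirely and instead takes the cross coefficient $c_E=\lambda_0<1$, for which the matrices $\begin{pmatrix}1&\lambda_0\\ \lambda_0&\lambda_0\end{pmatrix}$ and $\begin{pmatrix}\lambda_0&-\lambda_0\\ -\lambda_0&1\end{pmatrix}$ are positive definite precisely because $\lambda_0<1$. Separately, your differential inequality $\dtau\tE^w_{[l+1]}+(2-\alpha)\tE^w_{[l+1]}\lesssim\cdots$ with $\alpha$ near $2-2\sqrt{\sigma}$ has damping coefficient $\approx 2\sqrt{\sigma}$ and integrates to $\tE^w\lesssim t^{-2\sqrt{\sigma}}$, i.e.\ $\|\dtau\phi\|\lesssim t^{-\sqrt{\sigma}}$, not the required $t^{-1+\sqrt{\sigma}}$; the correct damping is $2\alpha_+=2-2\sqrt{\sigma}$, which the paper obtains by verifying that the three extra integrals in the energy identity have nonpositive sum for the choices $c_E=\lambda_0$ and $\alpha_+=1-\sqrt{1-\lambda_0}$. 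Both defects are repaired by adopting that normalization, but as written the new estimate specific to the massless case does not go through.
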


\subsection{Bootstrap assumptions}
Let $\delta, \, \sigma$ and $N$ be constants satisfying the assumptions in Theorem \ref{ME-w}, and $\Lambda$ be a large constant to be determined later. We assume 
%We start with the following weak assumptions:
\begin{subequations}
\begin{align}
t ( \|\Ew \|_{H_{N}} +  \|\Hw \|_{H_{N}} ) & \leq \varepsilon \Lambda, \label{wBT-E-H} \\
t \|\eta \|_{H_{N+2}} + t^{1-\delta} \|\Sigma \|_{H_{N+1}} & \leq  \varepsilon  \Lambda,  \label{wBT-sigma-eta} \\
\|\Ew \|_{C^0} + \|\eta \|_{C^0} + \| \Sigma \|_{C^0} + \| g_{ij} - \gamma_{i j} \|_{C^1}  + \|D \phi \|_{C^0} & \leq \varepsilon \Lambda, \label{wBT-g} \\
  t^{1-\sqrt{\sigma}} ( \|\dtau \phi \|_{H_{N+1}} +\| \phi \|_{H_{N+2}} )  & \leq \varepsilon \Lambda. \label{wBT} 
 % \|g_{ij} - \gamma_{i j}\|_{H_{N+2}} & \leq \varepsilon \Lambda,\label{bt-g-gamma-w}
\end{align}
\end{subequations}

As before, the bootstrap assumptions \eqref{wBT-E-H}--\eqref{wBT} imply that $g_{ij}$ and $\gamma_{ij}$ are equivalent as bilinear forms (and $\| g- \gamma \|_{H_{N+1}}  \lesssim  \varepsilon  \Lambda$),
%\begin{align}\label{bt-g} 
% \frac{3}{4} \gamma_{ij} < g_{ij} < \frac{5}{4} \gamma_{ij}
%\end{align}
%as bilinear forms, as $\varepsilon$ is small enough. 
and the spatial manifold $(M^3, \, g)$ has infinite injectivity radius with its Ricci tensor bounded from below. Then the density corollary \ref{coro-density} and the Sobolev inequalities on $(M^3,\, g)$ hold automatically. Namely, we have
\al{wpre-estimates}{
& t ( \|\Ew  \|_{C^{N-2}}+  \|\Hw  \|_{C^{N-2}}+ \|\Ew \|_{H^4_{N-1}} + \|\Hw \|_{H^4_{N-1}} ) \lesssim \varepsilon \Lambda, \nnb \\
& t  ( \|\eta  \|_{C^{N}} + \|\eta \|_{H^4_{N+1}} ) + t^{1-\delta} (\|\Sigma  \|_{C^{N-1}}+ \|\Sigma \|_{H^4_{N}} )  \lesssim \varepsilon \Lambda, \nnb \\
&  t^{1-{\sqrt{\sigma}}}(\| \dtau \phi \|_{C^{N-1}} + \| \phi \|_{C^{N}} + \| \dtau \phi \|_{H^4_{N} } + \| \phi \|_{H^4_{N+1} }) \lesssim \varepsilon \Lambda.
}
In addition, it follows from \eqref{wBT-E-H}--\eqref{wBT},  \eqref{E:Gauss-Riem-hat-k}, \eqref{E:Riem-Weyl}, \eqref{eq-ricci-phi} and \eqref{E:wpre-estimates} that $\|R_{imjn}\|_{C^{N-2}}$ and $\|\nabla R_{imjn} \|_{H_{N-1}}$ are bounded for $N \geq 2$. Hence, Corollary \ref{prop-elliptic-Delta-1} holds true as well.

It is well known that $$\text{spec} - \Delta_\gamma \subset [1, + \infty)$$ for the canonical metric on ${\bf H}^3$ \cite[Chapter I, Theorem 5]{Chavel}. 
From Rayleigh's theorem, it implies \[\int_M |\nabla_\gamma \psi|_\gamma^2 \di \mu_\gamma \geq \int_M |\psi|^2 \di \mu_\gamma, \]  for any $\psi \in H_1(M, \gamma)$, 
Since $\|g_{ij} - \gamma_{i j}\|_{L^\infty} \leq \varepsilon \Lambda$ and $\varepsilon$ is sufficiently small,
we can find a constant
 \be\label{def-lambda-0}
\lambda_0 = 1-\sigma, \quad 0<\sigma<\frac{1}{6},
\ee
such that for any $\psi \in H_1(M)$, 
 \be\label{eigen-g}
 \int_M |\nabla \psi|^2\, \di \mu_g > \lambda_0 \int_M  |\psi|^2\, \di \mu_g.
 \ee

With the help of \eqref{eigen-g}, we are able to derive $t^{-1+\sqrt{\sigma}}$ decay estimate for the massless scalar field.

\subsection{Energy estimates}
We follow \cite{A-M-11-cmc} to define a modified wave type energy for the massless scalar field,
\begin{align*}
E^w_{[k+1]} (\phi, t) & =  \int_{M_t} \left( |\nabla^{\mathring{k}} \Delta^{ [\frac{k}{2}] } \dtau \phi|^2 +  |\nabla^{ k^\prime} \Delta^{ [\frac{k+1}{2}] }  \phi|^2  \right) \di \mu_g \\
& + \int_{M_t} 2 c_E \nabla^{\mathring{k}} \Delta^{ [\frac{k}{2}] } \dtau \phi \cdot \nabla^{\mathring{k}} \Delta^{ [\frac{k}{2}] } \phi \, \di \mu_g.
\end{align*}
where $$c_E = \lambda_0$$ and the notations $\mathring{k}$ and $k^\prime$ are defined in \eqref{def-k-prime}. The energy norm up to $l$ order,
is defined as in \eqref{def-energy-sum-inhomo}.
By the choice of $c_E$, we have (refer to Remark \ref{rk-positive-denergy-w} below)
\[ \|\phi\|_{H_{k+1}} + \|\dtau \phi\|_{H_k} \lesssim E^w_{k+1} (\phi, t), \quad k \in \mathbb{Z}_{\geq 0}.  \]
\begin{remark}\label{rk-positive-denergy-w}
$E^w_{[k+1]}$ is positive definite. For instance, when $k=0$, 
\als{
E^w_{[1]} (\phi, t)  =&  \int_{M_t} \left( | \dtau \phi|^2 +  |\nabla  \phi|^2 + 2 \lambda_0   \dtau \phi \cdot  \phi \right) \di \mu_g \\
\geq &   \int_{M_t} \left( | \dtau \phi|^2 + \lambda_0 |\phi|^2 + 2 \lambda_0   \dtau \phi \cdot  \phi \right) \di \mu_g, 
}
and 
when $k=1$, 
\als{
E^w_{[2]} (\phi, t)  =&  \int_{M_t} \left( |\nabla \dtau \phi|^2 +  |\Delta  \phi|^2 + 2 \lambda_0  \nabla \dtau \phi \cdot \nabla \phi \right) \di \mu_g \\
\geq &   \int_{M_t} \left( \lambda_0 | \dtau \phi|^2 +  |\Delta\phi|^2 - 2 \lambda_0   \dtau \phi \cdot \Delta \phi \right) \di \mu_g.
}
Note that, both of the matrices 
$\begin{pmatrix}
		1 & \lambda_0 \\
		\lambda_0 & \lambda_0
	\end{pmatrix}
$, 
$\begin{pmatrix}
		\lambda_0 & -\lambda_0 \\
		-\lambda_0 & 1
	\end{pmatrix}
$ are positive definite when $0< \lambda_0 < 1$. The rest cases follow similarly.
\end{remark}

\subsubsection{Energy estimates for the massless scalar equation}\label{sec-ee-wave-1+3}

The high order equation for the massless scalar field reads
\al{eq-w-N-l-1+3-general-simply}
 { &\lie_{\dtau} \nabla^{\mathring{l}} \Delta^{ [\frac{l}{2}] } \dtau \phi + 2 \nabla^{\mathring{l}} \Delta^{ [\frac{l}{2}] }  \dtau \phi -  \nabla^{\mathring{l}} \Delta^{ [\frac{l}{2}] +1 }  \phi   \nnb \\
= {} & \nabla^{\mathring{l}} \Delta^{ [\frac{l}{2}] } ( \eta * \dtau \phi) + \sum_{a+1 +b  = l}   \left(   \nabla_{I_a} \Sigma, \,  \nabla_{I_a} \eta \right) * \nabla_{I_{b}} \nabla \dtau \phi.
   }

\begin{proposition}\label{prop-ee-wave}
Under the bootstrap assumptions \eqref{wBT-E-H}--\eqref{wBT}, we have the decay estimate for the massless scalar field, %fix some $0<\sigma<\frac{1}{4}$,
\[ \|\dtau \phi\|_{H_{N+1}} + \|\phi\|_{H_{N+2}} \lesssim \varepsilon I_{N+2} t^{-1+\sqrt{\sigma}}, \quad N \geq 2. \]
\end{proposition}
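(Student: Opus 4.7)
The plan is to carry out a modified-energy argument in the spirit of Proposition \ref{prop-ee-kg}, but with the spectral gap \eqref{eigen-g} playing the role of the mass term. The calibration $c_E = \lambda_0 = 1-\sigma$ is precisely what is needed so that the energy identity for $E^w_{[l+1]}$ produces a damping rate of $2(1-\sqrt\sigma)$ in the logarithmic time $\tau$, which translates to the claimed $t^{-1+\sqrt\sigma}$ decay.

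First, I would commute $\nabla^{\mathring l}\Delta^{[\frac{l}{2}]}$ into the massless reduction of \eqref{eq-rescale-kg-1+3-0} to obtain \eqref{E:eq-w-N-l-1+3-general-simply}, pair with $2\nabla^{\mathring l}\Delta^{[\frac{l}{2}]}\dtau\phi$, and integrate over $M_t$. Using the identity $\dtau\int_M f\,\di\mu_g = \int_M(\dtau f - 3\eta f)\,\di\mu_g$ and integration by parts on the Laplacian term, together with the $\dtau$-derivative of the modification term $2c_E\int \nabla^{\mathring l}\Delta^{[\frac{l}{2}]}\dtau\phi\cdot\nabla^{\mathring l}\Delta^{[\frac{l}{2}]}\phi$, I expect to arrive (modulo standard manipulations) at the identity
\begin{align*}
\dtau E^w_{[l+1]} = -(4-2c_E)\|\Psi_l\|^2 - 4c_E\int_M \Psi_l\Phi_l\,\di\mu_g - 2c_E\|\nabla\Phi_l\|^2 + \mathcal R_l,
\end{align*}
where $\Psi_l := \nabla^{\mathring l}\Delta^{[\frac{l}{2}]}\dtau\phi$, $\Phi_l := \nabla^{\mathring l}\Delta^{[\frac{l}{2}]}\phi$, and $\mathcal R_l$ collects commutator errors and the nonlinearities from the right-hand side of \eqref{E:eq-w-N-l-1+3-general-simply}.

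The decisive algebraic step is then to show $\dtau E^w_{[l+1]} + 2(1-\sqrt\sigma) E^w_{[l+1]} \leq \mathcal R_l$. Adding $2\alpha E^w_{[l+1]}$ with $\alpha = 1-\sqrt\sigma$ and using $c_E = 1-\sigma$, the coefficients of $\|\Psi_l\|^2$ and $\|\nabla\Phi_l\|^2$ on the left become $-2(\sqrt\sigma+\sigma)$ and $-2\sqrt\sigma(1-\sqrt\sigma)$, with a cross term $-4c_E\sqrt\sigma\int_M \Psi_l\Phi_l\,\di\mu_g$. Bounding this cross term by Young's inequality with parameter $\beta$ and invoking \eqref{eigen-g} in the form $\|\Phi_l\|^2 \leq (1-\sigma)^{-1}\|\nabla\Phi_l\|^2$, absorption requires $\beta \leq (1-\sqrt\sigma)^{-1}$ from the $\|\Psi_l\|^2$ side and $\beta \geq (1-\sqrt\sigma)^{-1}$ from the $\|\nabla\Phi_l\|^2$ side. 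The two conditions force $\beta$ exactly, and this borderline saturation is the reason $c_E$ is rigidly tied to $\lambda_0$.

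Finally, $\mathcal R_l$ is schematically of the form $\nabla^{\mathring l}\Delta^{[\frac{l}{2}]}(\eta * \dtau\phi) * \Psi_l + \sum_{a+b=l-1} \nabla_{I_a}(\Sigma,\eta) * \nabla_{I_b}\nabla\dtau\phi * \Psi_l$, which by the bootstrap assumptions \eqref{wBT-E-H}--\eqref{wBT} and the embeddings \eqref{E:wpre-estimates} contribute either time-integrable errors of size $\varepsilon^3\Lambda^3 t^{-3+\text{small}}$ (for $l \leq N$) or a self-absorbable multiple $\varepsilon\Lambda t^{-1+\delta} E^w_{[l+1]}$ at top order $l = N+1$. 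Rewriting in physical time yields $\p_t(t^{2(1-\sqrt\sigma)} E^w_{[l+1]}) \lesssim \varepsilon\Lambda t^{-1+\delta}(t^{2(1-\sqrt\sigma)} E^w_{[l+1]}) + \varepsilon^3\Lambda^3 t^{2(1-\sqrt\sigma)-3+\text{small}}$, and Grönwall's inequality delivers $E^w_{[l+1]}(t) \lesssim \varepsilon^2 I_{N+2}^2\, t^{-2(1-\sqrt\sigma)}$ for all $l \leq N+1$. Summing and invoking Corollary \ref{prop-elliptic-Delta-1} to pass from the mixed $\nabla^{\mathring l}\Delta^{[\frac{l}{2}]}$ norms to $H_{N+1}$ and $H_{N+2}$ norms completes the proof. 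The hardest part will be the borderline character of the Young's inequality, which leaves zero slack and rigidly ties the extracted decay rate to the spectral gap; the top-order regularity pathology that plagued the massive case does not arise here, since no mass term is present.
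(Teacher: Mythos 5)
Your proposal follows essentially the same route as the paper: the modified wave-type energy with $c_E=\lambda_0=1-\sigma$, extraction of the damping rate $2\alpha_+=2(1-\sqrt{\sigma})$, negativity of the resulting quadratic form via the spectral gap \eqref{eigen-g} (which is indeed borderline, with the Cauchy--Schwarz/Young step saturated exactly as you observe, though your stated threshold should read $\beta=1-\sqrt{\sigma}$ rather than $(1-\sqrt{\sigma})^{-1}$ --- a labeling convention, not an error), followed by Gr\"{o}nwall in the weighted variable $t^{2\alpha_+}E^w$ and Corollary \ref{prop-elliptic-Delta-1} to return to the $H_{N+1}\times H_{N+2}$ norms. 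The treatment of the nonlinear remainder $N_l$ via the bootstrap assumptions and \eqref{E:wpre-estimates} also matches the paper's argument.
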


\begin{proof}
\begin{comment}
Multiply $2 \nabla^{\mathring{l}} \Delta^{ [\frac{l}{2}] } \dtau \phi$ on \eqref{E:eq-kg-N-l-1+3-general-simply}, and using the commuting lemma \ref{lemma-commuting-application}, we have
\als{
& \dtau |\nabla^{\mathring{l}} \Delta^{ [\frac{l}{2}] } \dtau \phi|^2 +  \dtau (\nabla^{\dot{l}} \Delta^{ [\frac{l}{2}] +1 }  \phi  \nabla^{\dot{l}} \Delta^{ [\frac{l}{2}] +1}  \phi)   \nnb \\
& \C{- 2 \nabla^{\mathring{l}} ( \Delta^{ [\frac{l}{2}] +1 }  \phi  \nabla^{\mathring{l}} \Delta^{ [\frac{l}{2}] } \dtau \phi) } + 4 |\nabla^{\mathring{l}} \Delta^{ [\frac{l}{2}] }  \dtau \phi|^2  \nnb \\
  = {} &  \sum_{a +b  = l}  \left(   \nabla_{I_a} \Sigma +   \nabla_{I_a} \eta \right) * \nabla_{I_{b}} D \phi \nabla_{I_{l}} D \phi.
}
And
\als{
& \dtau  (\nabla^{\mathring{l}} \Delta^{ [\frac{l}{2}]  }  \phi  \nabla^{\mathring{l}} \Delta^{ [\frac{l}{2}] } \dtau \phi)   \\
  = {} &  |\nabla^{\mathring{l}} \Delta^{ [\frac{l}{2}] }  \dtau \phi|^2 -2 \nabla^{\mathring{l}} \Delta^{ [\frac{l}{2}] } \dtau  \phi \nabla^{\mathring{l}} \Delta^{ [\frac{l}{2}] } \phi  + \nabla^{\mathring{l}} \Delta^{ [\frac{l}{2}] +1 }  \phi \nabla^{\mathring{l}} \Delta^{ [\frac{l}{2}]  }  \phi \nnb\\
  &+ \sum_{a +b+1  = l}  \left(   \nabla_{I_a} \Sigma +   \nabla_{I_a} \eta \right) * \nabla_{I_{b}} \nabla \phi  \nabla^{\mathring{l}} \Delta^{ [\frac{l}{2}] } \dtau  \phi + f_l \nabla^{\mathring{l}} \Delta^{ [\frac{l}{2}] } \phi.
}
\end{comment}
As in \cite{A-M-11-cmc}, we define
\als{
\alpha_+=1-\sqrt{1-\lambda_0}=1-\sqrt{\sigma}.
}
Note that, $\alpha_+ < c_E$.
Now from the energy identity,
\als{
&\dtau E^w_{[l+1]} (\phi, t)   \nnb \\
  = &  \int_{M_t} 2(c_E -2)  |\nabla^{\mathring{l}} \Delta^{ [\frac{l}{2}] }  \dtau \phi|^2\, \di \mu_g- \int_{M_t} 4 c_E \nabla^{\mathring{l}} \Delta^{ [\frac{l}{2}] } \dtau  \phi \nabla^{\mathring{l}} \Delta^{ [\frac{l}{2}] } \phi \, \di \mu_g \\
  &- \int_{M_t} 2 c_E  |\nabla^{ l^\prime} \Delta^{ [\frac{l+1}{2}] }  \phi|^2 \,\di \mu_g +  N_l,
}
where
\als{
  N_l = &  \int_{M_t}   \sum_{a+1 +b  = l}  \left(   \nabla_{I_a} \Sigma +   \nabla_{I_a} \eta \right) * \nabla_{I_{b}}  \nabla \phi \nabla^{\mathring{l}} \Delta^{ [\frac{l}{2}] }  \dtau \phi \, \di \mu_g \nnb \\
  & +   \int_{M_t}  \nabla^{I_l} \left( (\eta, \, \Sigma_{ij}) * D \phi \right) * \left( \nabla^{\mathring{l}} \Delta^{ [\frac{l}{2}] }  D \phi + \nabla^{\mathring{l}} \Delta^{ [\frac{l}{2}] }  \phi \right)  \di \mu_g,
}
we deduce,
\als{
 \dtau E^w_{[l+1]} (\phi, t)  = & -2 \alpha_+ E^w_{[l+1]} (\phi, t)  +  N_l \\
  &+ 2 \int_{M_t} (c_E + \alpha_+ -2)  |\nabla^{\mathring{l}} \Delta^{ [\frac{l}{2}] }  \dtau \phi|^2\, \di \mu_g \\
  &+ 2 \int_{M_t} 2 c_E ( \alpha_+ - 1) \nabla^{\mathring{l}} \Delta^{ [\frac{l}{2}] } \dtau  \phi \nabla^{\mathring{l}} \Delta^{ [\frac{l}{2}] } \phi \, \di \mu_g \\
  & + 2 \int_{M_t} (\alpha_+ - c_E ) |\nabla^{ l^\prime} \Delta^{ [\frac{l+1}{2}] }  \phi|^2\, \di \mu_g \\
  \leq &  -2 \alpha_+ E^w_{[l+1]} (\phi, t)  +  |N_l|,
}
where the sum of the last three integral terms on the right hand side is negative \cite{A-M-11-cmc}, due to our choices of $\alpha_+$ and $c_E$.  
Consequently, we achieve 
\[\p_t ( t^{2\alpha_+} E^w_{N+2} (\phi, t) ) \lesssim \varepsilon \Lambda t^{-2 + \delta} \cdot  t^{2\alpha_+} E^w_{N+2}  (\phi, t), \]
which leads to \[  t^{2\alpha_+} E^w_{N+2} (\phi, t) \lesssim \varepsilon^2 I^2_{N+2}. \] 
In view of Corollary \ref{prop-elliptic-Delta-1}, it holds that
\be\label{energy-B-w} 
\| \dtau \phi\|^2_{H_{N+1}} + \|\phi\|^2_{H_{N+2}}  \lesssim \varepsilon^2 I^2_{N+2} t^{-2\alpha_+}.
\ee
\end{proof}

\subsubsection{Energy estimates for the geometry}\label{sec-ee-metric}
The energies for $\Ew$ and $\Hw$ are defined as in \eqref{def-energy-l-homo-Weyl}--\eqref{def-energy-sum-inhomo} and their estimates are standard.
\begin{proposition}\label{prop-energy-estimate-Bianchi-w}
Under the bootstrap assumptions \eqref{wBT-E-H}--\eqref{wBT}, we have
\als{
t^2  ( \|\Ew\|^2_{H_{N}} + \|\Hw\|^2_{H_{N}})  &\lesssim  \varepsilon^2 I_{N+2}^2 +  \varepsilon^3 \Lambda^3, \quad N \geq 2. 
}
\end{proposition}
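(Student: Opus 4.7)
The plan is to mirror the proof of Proposition~\ref{prop-energy-estimate-Bianchi}, exploiting two simplifications specific to the massless setting: the $m^2t^2\phi^2$ contribution to $J$ and $J^\ast$ vanishes, and by Proposition~\ref{prop-ee-wave} the scalar field itself enjoys the uniform decay $\|\dtau\phi\|_{H_{N+1}}+\|\phi\|_{H_{N+2}}\lesssim\varepsilon I_{N+2}\,t^{-1+\sqrt{\sigma}}$. First I would apply $\nabla^{\mathring{k}}\Delta^{[\frac{k}{2}]}$ to the rescaled Bianchi equations \eqref{eq-1+3-bianchi-T-E}--\eqref{eq-1+3-bianchi-T-H}, pair with $\nabla^{\mathring{k}}\Delta^{[\frac{k}{2}]}\Ew$ and $\nabla^{\mathring{k}}\Delta^{[\frac{k}{2}]}\Hw$, and invoke Lemma~\ref{lemma-div-curl} so that the $\curl$ cross-terms combine into a boundary divergence plus curvature corrections $\nabla_{I_{k-1}}(O_{imjn}\ast\W)\ast\nabla^{\mathring{k}}\Delta^{[\frac{k}{2}]}\W$ (with $O_{imjn}$ read off from the Gauss equation \eqref{E:Gauss-Riem-hat-k}). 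This yields the same energy identity
\als{
\dtau E_{[l+1]}(\W,t)+2E_{[l+1]}(\W,t)={}&\sum_{k\leq l}\int_{M_t}\nabla_{I_k}\bigl((\eta,\Sigma)\ast\W\bigr)\ast\nabla^{\mathring{l}}\Delta^{[\frac{l}{2}]}\W\,\di\mu_g\\
&+\sum_{k\leq l}\int_{M_t}\nabla_{I_{k-1}}\bigl(O_{imjn}\ast\W\bigr)\ast\nabla^{\mathring{k}}\Delta^{[\frac{k}{2}]}\W\,\di\mu_g\\
&+\int_{M_t}S_l\,\di\mu_g
}
as in the massive case, except that now $\R_{\alpha\beta}(\phi)=\breve D_\alpha\phi\,\breve D_\beta\phi$, and the cancellation in \eqref{E:coupling-cancellation} still eliminates every $\dtau^2\phi$ from $S_l$, leaving only quadratic products of $D\phi$ factors.

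Next I would estimate the right-hand side. The gravitational contributions $(\eta,\Sigma)\ast\W$ and $O\ast\W$ are bounded exactly as in the massive proof via the bootstraps \eqref{wBT-E-H}--\eqref{wBT-g} and the Sobolev bounds \eqref{E:wpre-estimates}, giving $\lesssim\varepsilon^3\Lambda^3\,t^{-3+\delta}$ at every order $l\leq N$. For the matter source, the product/Moser estimate on $(M,g)$ combined with Proposition~\ref{prop-ee-wave} gives, for all $l\leq N$,
$$\|\nabla^{\mathring{l}}\Delta^{[\frac{l}{2}]}(\nabla^2\phi\ast\dtau\phi)\|_{L^2}+\|\nabla^{\mathring{l}}\Delta^{[\frac{l}{2}]}(\nabla\dtau\phi\ast\nabla\phi)\|_{L^2}\lesssim\varepsilon^2 I_{N+2}^2\,t^{-2+2\sqrt{\sigma}},$$
so Cauchy--Schwarz against $\nabla^{\mathring{l}}\Delta^{[\frac{l}{2}]}\W$ (controlled by $\varepsilon\Lambda\,t^{-1}$) yields $\int_{M_t}|S_l|\,\di\mu_g\lesssim\varepsilon^3\Lambda\,I_{N+2}^2\,t^{-3+2\sqrt{\sigma}}$. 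Rewriting $\dtau E_{[l+1]}+2E_{[l+1]}\lesssim(\cdots)$ as $\p_t(t^2E_{[l+1]})\lesssim t\cdot(\cdots)$ gives
$$\p_t\bigl(t^2E_{[l+1]}(\W,t)\bigr)\lesssim \varepsilon^3\Lambda^3\,t^{-2+\delta}+\varepsilon^3\Lambda\,I_{N+2}^2\,t^{-2+2\sqrt{\sigma}}.$$
Since $\delta<\tfrac16$ and $\sqrt{\sigma}<\tfrac1{\sqrt{6}}<\tfrac12$, both exponents are strictly less than $-1$, so integrating from $t_0$ to $t$, summing over $l\leq N$, and converting from the $\nabla^{\mathring{k}}\Delta^{[\frac{k}{2}]}$-norms to the full Sobolev norm via Corollary~\ref{prop-elliptic-Delta-1} produces the stated bound.

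The principal obstacle I would anticipate is avoiding the Gr\"onwall amplification factor $t^{\varepsilon C(I_{N+2})}$ that appeared at top order in the massive estimate \eqref{eq-energy-estimate-Bianchi-l-3}. There, a term of size $\varepsilon^2 I_{N+2}^2\,t^{-1}\|\W\|_{H_N}$ was unavoidable because one only controls $\|\nabla^{N+2}\phi\|_{L^2}$ through $\|\nabla\phi\|_{H_{N+1}}\lesssim\varepsilon I_{N+2}\,t^{-1/2}$ (see \eqref{E:bl-W-top}), forcing a Gr\"onwall step. In the massless case Proposition~\ref{prop-ee-wave} controls $\|\phi\|_{H_{N+2}}$ directly at the better rate $t^{-1+\sqrt{\sigma}}$, so the top-order matter source is $\lesssim\varepsilon^2 I_{N+2}^2\,t^{-2+2\sqrt{\sigma}}\|\W\|_{H_N}$; after multiplication by $t$ this remains integrable, no Gr\"onwall amplification is triggered, and the estimate closes in the purely polynomial form claimed.
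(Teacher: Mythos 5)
Your proposal is correct and follows essentially the same route as the paper: the same energy identity from Lemma \ref{lemma-div-curl} as in the massive case, with the matter source now purely quadratic in $D\phi$ and controlled at top order directly by $\|\phi\|_{H_{N+2}}\lesssim\varepsilon t^{-1+\sqrt{\sigma}}$, so the right-hand side integrates to a bounded quantity and no Gr\"onwall amplification appears. Your diagnosis of why the $t^{\varepsilon C(I_{N+2})}$ factor from the massive estimate \eqref{eq-energy-estimate-Bianchi-l-3} is avoided is exactly the point the paper relies on.
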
 
\begin{proof}
Based on an energy identity similar to the one in the proof of Proposition \ref{prop-energy-estimate-Bianchi}, we obtain
\als{
&\dtau  E_{[l+1]} (\W, t)   + 2 E_{[l+1]} (\W, t)    \nnb \\
 & =   \int_{M_t}  \sum_{k \leq l} \nabla_{ I_{k}} \left( \eta  * \W + \Sigma  * \W  \right)  *   \nabla^{\mathring{l}} \Delta^{ [\frac{l}{2}] }\W \, \di \mu_g \nnb \\
  & +   \int_{M_t} \left(  \nabla^{I_l} \left(  \nabla D \phi *  D \phi \right)  +  \nabla^{I_l} \left( ( \Sigma, \eta) * D \phi *  D \phi \right)  \right) *\nabla^{\mathring{l}} \Delta^{ [\frac{l}{2}] }\W  \,  \di \mu_g \\
  & +   \int_{M_t}  \sum_{k \leq l} \nabla_{ I_{k-1}} \left( \Ew * \W + (\Sigma, \eta)  * (\Sigma, \eta) + D \phi * D \phi \right) * \nabla^{\mathring{k}} \Delta^{ [\frac{k}{2}] } \W \, \di \mu_g,
}
and hence, for $0<\delta < \frac{1}{6}$,
\begin{equation}\label{eq-energy-inequality-Bianchi-w-low}
 \dtau E_{N}(\W,t) + 2 E_{N}(\W, t)  \lesssim \varepsilon^3 \Lambda^3 ( t^{-3+2\delta} +  t^{-3+2\sqrt{\sigma}} ),
\end{equation}
which yields the bound.
%\begin{equation*}
% \dt (t^2 E_{N}(\W,t))  \lesssim \varepsilon^3 \Lambda^3 ( t^{-2+\delta} +  t^{-2+2\sqrt{\sigma}} ).
%\end{equation*}
\end{proof}

In analogy to Proposition \ref{prop-energy-estimate-2nd}, we use the transport equations for $\eta$ to obtain the decay estimate for $\|\eta\|^2_{H_{N+1}}$, and it enables us to carry out the elliptic estimates for $\Sigma$. 
\begin{proposition}\label{prop-energy-estimate-2nd-w}
Under the bootstrap assumptions \eqref{wBT-E-H}--\eqref{wBT}, there are, for $N \geq 2$,
\begin{align*}
t^2 \|\eta\|^2_{H_{N+1}}  & \lesssim \varepsilon^2  I_{N+2}^2 + \varepsilon^4 \Lambda^4,  \\
t^2 \| \Sigma\|^2_{H_{N+1}}  & \lesssim \left(  \varepsilon^2 I_{N+2}^2 +  \varepsilon^3 \Lambda^3 \right)  t^{ \delta}.
\end{align*}
\end{proposition}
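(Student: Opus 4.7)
The plan is to mirror the argument of Proposition \ref{prop-energy-estimate-2nd} in the massless setting, using Proposition \ref{prop-ee-wave} (the improved decay $\|D\phi\|_{H_{N+1}}\lesssim \varepsilon I_{N+2}t^{-1+\sqrt{\sigma}}$) and Proposition \ref{prop-energy-estimate-Bianchi-w} (the $t^{-1}$ bound on $\|\Ew\|_{H_N},\|\Hw\|_{H_N}$) in place of their massive counterparts. The structural input is exactly that of the massive case: the transport equation \eqref{eq-evolution-xi} saves one derivative for $\eta$, and with $\nabla\eta$ in hand the Codazzi system \eqref{E:div-curl-k-hat} becomes an elliptic div--curl system for the traceless part $\Sigma$. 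Note that in the massless case $\bar R_{\tau\tau}=(\dtau\phi)^2$ and $\bar R_{\tau j}=\nabla_j\phi\,\dtau\phi$.

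\textbf{Step 1 (transport estimate for $\eta$).} Apply $\nabla_{I_k}$ ($k\leq N+1$) to \eqref{eq-evolution-xi}, pair with $\nabla_{I_k}\eta$ and integrate using the identity $\dtau\int_M f\,d\mu_g=\int_M(\dtau f-3\eta f)\,d\mu_g$. The usual commutator arguments reduce the problem to
\als{
\p_\tau\|\eta\|_{H_{N+1}}^{2}+2\|\eta\|_{H_{N+1}}^{2}
\lesssim (\|\eta\|_{L^\infty}+\|\Sigma\|_{L^\infty})\|\eta\|_{H_{N+1}}^{2}
+\bigl(\|\eta^{2}\|_{H_{N+1}}+\|\Sigma\!*\!\Sigma\|_{H_{N+1}}+\|D\phi\!*\!D\phi\|_{H_{N+1}}\bigr)\|\eta\|_{H_{N+1}}.
}
Using \eqref{wBT-g}, \eqref{wBT-sigma-eta} and Proposition \ref{prop-ee-wave}, the source is controlled by $\varepsilon^{2}\Lambda^{2}t^{-2+2\delta}+\varepsilon^{2}I_{N+2}^{2}t^{-2+2\sqrt{\sigma}}$, both of which are integrable in $t$ since $\delta,\sqrt{\sigma}<1/2$. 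Multiplying by $t^{2}$ and applying Gr\"{o}nwall's inequality yields $t^{2}\|\eta\|_{H_{N+1}}^{2}\lesssim\varepsilon^{2}I_{N+2}^{2}+\varepsilon^{4}\Lambda^{4}$.

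\textbf{Step 2 ($L^{2}$ estimate for $\Sigma$).} Contract \eqref{eq-evolution-2} with $\Sigma^{ij}$, using the algebraic identity $\bar R_{i\tau j\tau}\Sigma^{ij}=\bigl(\Ew_{ij}-\tfrac12\nabla_i\phi\nabla_j\phi\bigr)\Sigma^{ij}$ (now with no mass contribution), to obtain
\als{
\dtau\|\Sigma\|_{L^{2}}^{2}+2\|\Sigma\|_{L^{2}}^{2}
\lesssim (\|\eta\|_{L^\infty}+\|\Sigma\|_{L^\infty})\|\Sigma\|_{L^{2}}^{2}
+\bigl(\|\Ew\|_{L^{2}}+\|\nabla\phi\|_{L^\infty}\|\nabla\phi\|_{L^{2}}\bigr)\|\Sigma\|_{L^{2}}.
}
Invoking Propositions \ref{prop-ee-wave} and \ref{prop-energy-estimate-Bianchi-w}, the bracketed source decays like $t^{-1}$ (the $\Ew$ term being dominant since $2\sqrt{\sigma}<1$), and an integration over $[t_{0},t]$ yields $t\|\Sigma\|_{L^{2}}\lesssim(\varepsilon I_{N+2}+\varepsilon^{3/2}\Lambda^{3/2})\,t^{\delta/2}$, exactly as in \eqref{ee-Sigma}.

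\textbf{Step 3 (elliptic upgrade to $H_{N+1}$).} With $\|\eta\|_{H_{N+1}}$ controlled by Step 1, treat the Codazzi system as in the massive case,
\als{
(\dive\Sigma)_{i}=2\nabla_{i}\eta-\nabla_{i}\phi\,\dtau\phi,\qquad
(\curl\Sigma)_{ij}=-\Hw_{ij}-\epsilon_{ij}{}^{\!m}\nabla_{m}\eta,
}
and run the Bochner-type identity to obtain, for $k\leq N$,
\als{
\|\Sigma\|_{H_{k+1}}^{2}\lesssim \|\Sigma\|_{H_{k}}^{2}+\|\Hw\|_{H_{k}}^{2}+\|\eta\|_{H_{k+1}}^{2}+\|\nabla\phi\,\dtau\phi\|_{H_{k}}^{2}.
}
Iterating from $L^{2}$ up to $H_{N+1}$ and combining with the bounds from Step 2, Proposition \ref{prop-energy-estimate-Bianchi-w} and Proposition \ref{prop-ee-wave} (which gives $\|\nabla\phi\,\dtau\phi\|_{H_{N}}\lesssim\varepsilon^{2}I_{N+2}^{2}t^{-2+2\sqrt{\sigma}}$) delivers $t^{2}\|\Sigma\|_{H_{N+1}}^{2}\lesssim (\varepsilon^{2}I_{N+2}^{2}+\varepsilon^{3}\Lambda^{3})t^{\delta}$.

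\textbf{Expected obstacle.} None of the steps presents a new conceptual difficulty beyond what was already handled in the massive case; the argument is almost verbatim. The only point requiring care is verifying that the decay rate $t^{-1+\sqrt{\sigma}}$ from Proposition \ref{prop-ee-wave} is strong enough to render the $D\phi\ast D\phi$ contribution integrable in Step 1 and subleading relative to $\|\Ew\|_{L^{2}}$ in Step 2; both require $\sqrt{\sigma}<1/2$, which is comfortably guaranteed by the hypothesis $\sigma<1/6$.
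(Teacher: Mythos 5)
Your proposal is correct and follows essentially the same route as the paper: a transport estimate for $\|\eta\|_{H_{N+1}}$ exploiting the regularity-saving structure of \eqref{eq-evolution-xi}, an $L^2$ transport estimate for $\Sigma$ from \eqref{eq-evolution-2} using $\bar R_{i\tau j\tau}\Sigma^{ij}=(\Ew_{ij}-\tfrac12\nabla_i\phi\nabla_j\phi)\Sigma^{ij}$, and the div--curl elliptic upgrade to $H_{N+1}$, with Propositions \ref{prop-ee-wave} and \ref{prop-energy-estimate-Bianchi-w} supplying the source decay. Your check that $2\sqrt{\sigma}<1$ makes the $D\phi\ast D\phi$ contributions integrable is exactly the point where the massless case differs from the massive one, and it is handled correctly.
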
 

Applying \eqref{eq-evolution-1}--\eqref{eq-evolution-2}, \eqref{Gauss-Ricci-hat-k}, we obtain the estimates for $\dtau \Sigma$, $\dtau \eta$, $g-\gamma$, $R_{ij} + 2 g_{ij }$. In summary, we have
\begin{align*}
t^2 \|\dtau \eta \|^2_{H_{N+1}}  & \lesssim \varepsilon^2 I_{N+2}^2 +  \varepsilon^4 \Lambda^4, \\
t^2 \|\dtau \Sigma_{i j}\|^2_{H_{N+1}} & \lesssim \left(  \varepsilon^2 I_{N+2}^2 +  \varepsilon^3 \Lambda^3 \right)  t^{\delta}, \\
\|g_{ij} - \gamma_{i j}\|^2_{H_{N+1}} & \lesssim \varepsilon^2 I_{N+2}^2 +  \varepsilon^3 \Lambda^3, \\
 t^{2} \| R_{ij} + 2 g_{ij } \|^2_{H_{N}} & \lesssim \left(  \varepsilon^2 I_{N+2}^2 +  \varepsilon^3 \Lambda^3 \right)  t^{\delta}.
\end{align*}

Finally, we can improve the regularity for $\eta$.
\begin{proposition}\label{prop-improve-w}
Under the bootstrap assumptions \eqref{wBT-E-H}--\eqref{wBT},  we obtain
\als{
t \| \eta\|_{H_{N+2}} & \lesssim \varepsilon I_{N+2}+ \varepsilon^2 \Lambda^2, \quad N \geq 2.
%\|g_{ij} - \gamma_{i j} \|_{H_{N+2}} & \lesssim  \varepsilon I_{N+2}.
}
\end{proposition}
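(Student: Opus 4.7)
The plan is to mirror the renormalization argument of Proposition \ref{pro-improve-eta}, but exploiting two simplifications of the massless setting: the absence of the bad source $m^2 t^2 \nabla_{I_N}\Delta\phi\cdot\phi$, and the improved pointwise-in-time decay $\|\phi\|_{H_{N+2}}+\|\dtau\phi\|_{H_{N+1}}\lesssim \varepsilon I_{N+2} t^{-1+\sqrt{\sigma}}$ from Proposition \ref{prop-ee-wave}. First I apply $\nabla_{I_N}\Delta$ to the transport equation \eqref{eq-evolution-xi} and commute with $\p_\tau$; because $\bar R_{\tau\tau}(\phi)=(\dtau\phi)^2$ when $m=0$, the only top-order terms that are not directly controlled by the bootstrap assumptions are $\tfrac{2}{3}\nabla_{I_N}\Delta\Sigma_{ij}\cdot\Sigma^{ij}$ and $\tfrac{2}{3}\nabla_{I_N}\Delta\dtau\phi\cdot\dtau\phi$. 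Everything else is a lower-order product and can be estimated crudely via Corollary \ref{prop-elliptic-Delta-1} and the already established propositions \ref{prop-ee-wave}--\ref{prop-energy-estimate-2nd-w}.

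Second, I renormalize these two top-order terms exactly as in Section \ref{sec-improve-regularity}. For the scalar factor, writing $2\nabla_{I_N}\Delta\dtau\phi\cdot\dtau\phi=\dtau(2\nabla_{I_N}\Delta\phi\cdot\dtau\phi)-2\nabla_{I_N}\Delta\phi\cdot\dtau^2\phi+\text{commutator}$ and substituting the massless Klein--Gordon equation \eqref{eq-rescale-kg-1+3-0} (with $m=0$) to replace $\dtau^2\phi$ by $-2\dtau\phi+\Delta\phi-3\eta\dtau\phi$ produces a total $\dtau$-derivative plus a controllable remainder (no mass term survives). For the $\Sigma$ factor, I use the wave equation \eqref{E:wave-Sigma} to rewrite $\Delta\Sigma_{ij}$ in terms of $\dtau^2\Sigma_{ij}$, $\dtau\Sigma_{ij}$, $\Sigma_{ij}$ and lower-order junk, then contract with $\Sigma^{ij}$ and turn the resulting expression into $\dtau(\dtau\nabla_{I_N}\Sigma_{ij}\cdot\Sigma^{ij}+2\nabla_{I_N}\Sigma_{ij}\cdot\Sigma^{ij})$ plus a term $3\nabla_{I_N}\nabla_i\nabla_j\eta\cdot\Sigma^{ij}$ (which will be absorbed as $\Sigma*\nabla_{I_{N+2}}\eta$) plus remainders. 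Defining, as in \eqref{def-ti-eta},
\als{
\tilde\eta_{N+2}=\nabla_{I_N}\Delta\eta-\tfrac{2}{3}\big(\dtau\nabla_{I_N}\Sigma_{ij}\,\Sigma^{ij}+2\nabla_{I_N}\Sigma_{ij}\,\Sigma^{ij}+\nabla_{I_N}\Delta\phi\cdot\dtau\phi\big),
}
I obtain a transport equation of the form
\als{
\p_\tau\tilde\eta_{N+2}+\tilde\eta_{N+2}=2\eta\,\nabla_{I_N}\Delta\eta+2\Sigma\cdot\nabla_{I_{N+2}}\eta+f,
}
with no residual $m^2 t^2$ source and with $\|f\|_{L^2}$ controlled by lower-order norms already shown to decay at integrable rates thanks to propositions \ref{prop-ee-wave}--\ref{prop-energy-estimate-2nd-w}.

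Third, I use Corollary \ref{prop-elliptic-Delta-1} to estimate $\|\nabla_{I_{N+2}}\eta\|_{L^2}\lesssim \|\tilde\eta_{N+2}\|_{L^2}+\text{l.o.t.}$, multiply the transport equation by $t$ and integrate, and apply Grönwall. Because the dangerous $\varepsilon^2 I_{N+2}^2$ source of the massive case (coming from $2m^2t^2\nabla_{I_N}\Delta\phi\cdot\phi$) is now absent, the time integral of the right-hand side is bounded uniformly by $\varepsilon I_{N+2}+\varepsilon^2\Lambda^2$, yielding $t\|\tilde\eta_{N+2}\|_{L^2}\lesssim \varepsilon I_{N+2}+\varepsilon^2\Lambda^2$ and hence the stated bound on $t\|\eta\|_{H_{N+2}}$ after combining with the estimate for $\|\eta\|_{H_{N+1}}$ from Proposition \ref{prop-energy-estimate-2nd-w}.

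The main obstacle, as in the massive analogue, is purely bookkeeping: one must verify that every lower-order remainder produced by the renormalization—in particular those arising from commutators $[\dtau,\nabla_{I_N}\Delta]$ and from the $\Sigma*\nabla_{I_{N+2}}\eta$ coupling—lies in $L^1_t([t_0,\infty))$ after multiplication by $t$, so that Grönwall closes. This is where the improved decay $\|\dtau\phi\|_{H_{N+1}}\lesssim \varepsilon t^{-1+\sqrt{\sigma}}$ from Proposition \ref{prop-ee-wave} is essential: it replaces the mere $t^{-1/2}$ bound available in the massive case and kills the logarithmic obstruction that forced the massive-case estimate to degrade to a bound instead of actual decay.
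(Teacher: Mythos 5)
Your proposal is correct and follows essentially the same route as the paper: the paper's proof of Proposition \ref{prop-improve-w} simply reruns the renormalization of Proposition \ref{pro-improve-eta} with the same $\tilde\eta_{N+2}$ from \eqref{def-ti-eta}, observes that the troublesome source $2m^2t^2\nabla_{I_N}\Delta\phi\cdot\phi$ (responsible for the non-decaying $\varepsilon^2 I_{N+2}^2$ contribution in the massive case) is absent, and closes with Gr\"onwall and Corollary \ref{prop-elliptic-Delta-1}. Your additional remark that the improved $t^{-1+\sqrt{\sigma}}$ decay of the scalar field is needed to make the quadratic-in-$D\phi$ remainders time-integrable (rather than merely logarithmically divergent, as the $t^{-1/2}$ bound would give) is an accurate reading of why the massless estimate genuinely decays.
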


\begin{proof}
We follow the proof leading to Proposition \ref{pro-improve-eta}.
With the help of the bootstrap assumptions \eqref{wBT-E-H}--\eqref{wBT}, \eqref{E:energy-inequ-teta} changes into, in the case $m=0$,
\als{
t  \| \tilde \eta_{N+2} \|_{L^2}  \lesssim {} & \varepsilon I_{N+2} + \varepsilon^2 \Lambda^2 + \int_{t_0}^t \varepsilon \Lambda t^{-2+\delta} \cdot t \| \tilde \eta_{N+2} \|_{L^2} \, \di t,
}
where $\tilde \eta_{N+2}$ is defined by \eqref{def-ti-eta}. The Gr\"{o}nwall's inequality yields that
\[
t \|\tilde \eta_{I_{N+2}} \|_{L^2} \lesssim \varepsilon I_{N+2}+ \varepsilon^2 \Lambda^2.
\]
In view of \eqref{def-ti-eta} and the fact $$\|\nabla_{I_{N+2}} \eta\|_{L^2} \lesssim  \|\tilde \eta_{N+2}\|_{L^2} +  \varepsilon^2 I^2_{N+2} t^{-1} + \varepsilon^2 \Lambda^2 t^{-2+2\delta},$$ we conclude the estimate.
\begin{comment}
The proof for $\|g_{ij} - \gamma_{i j} \|_{H_{N+2}}$ follows in an analogous way. Noting that
\als{
& \dtau  \nabla_{I_N} \Delta (g_{ij} - \gamma_{i j} ) = -2  \nabla_{I_N} \Delta \eta g_{ij} - 2  \nabla_{I_N} \Delta \Sigma_{ij},% \\
% ={}&  -2  \nabla_{I_N} \Delta \eta g_{ij} - 2  \nabla_{I_N} \dtau^2 \Sigma_{ij} - 4  \nabla_{I_N} \dtau \Sigma_{ij} -6\nabla_{I_N} \nabla_i \nabla_j \eta +2 \nabla_{I_N} \Delta \eta g_{i j} -2 \nabla_{I_N} \dtau \R_{ij} (\phi),
}
making use the wave equation for $\Sigma$, to arrive at

Then
\als{
& \dtau \left( \nabla_{I_N} \Delta (g_{ij} - \gamma_{i j} ) + 2 \nabla_{I_N} (\nabla_i \phi \nabla_j 
\phi) + 2  \nabla_{I_N} \dtau \Sigma_{ij} + 4  \nabla_{I_N}  \Sigma_{ij} \right)  \\
 ={}& -6  \nabla_{I_N}  \nabla_i \nabla_j \eta +  \nabla_{I_N} \left( (\Sigma, \eta)* \dtau \Sigma_{ij} \right) +  \nabla_{I_N}\left( (\Sigma, \eta)* (\Sigma_{ij}, \eta) \right) +  \nabla_{I_N} \left( D \phi * D \phi \right) + l.o.t.,
}
and

\als{
 & t \p_t  \| \nabla_{I_N} \Delta (g_{ij} - \gamma_{i j} ) + 2 \nabla_{I_N} (\nabla_i \phi \nabla_j \phi) + 2  \nabla_{I_N} \dtau \Sigma_{ij} + 4  \nabla_{I_N}  \Sigma_{ij} \|_{L^2} \\
  \lesssim {} & \|\eta\|_{H_{N+2}} +  \varepsilon^2 I_{N+2}^2 t^{-2+2\sqrt{\sigma}}  +  \varepsilon^2 I_{N+2}^2  t^{-2+2\delta}.
}
Then
\[
\| \nabla_{I_N} \Delta (g_{ij} - \gamma_{i j} ) \|_{L^2} \lesssim  \varepsilon I_{N+2}.
\]
By the density result and Proposition \ref{prop-elliptic-Delta}, we prove the claim.
\end{comment}
\end{proof}

\subsubsection{Proof of Theorem \ref{ME-w}} The estimates in propositions \ref{prop-ee-wave}--\ref{prop-improve-w} improve the bootstrap assumptions \eqref{wBT-E-H}--\eqref{wBT}. %\ref{prop-energy-estimate-Bianchi-w} with \ref{prop-energy-estimate-2nd-w}-\ref{prop-improve-w}, 
Thus we conclude there is a constant $C(I_{N+2})$ depending on $I_{N+2}$ such that
\als
{
t ( \|\Ew \|_{H_{N}} & +  \|\Hw \|_{H_{N}}   +  \|\eta \|_{H_{N+2}} ) (t)  + t^{1- \delta/2} \|\Sigma \|_{H_{N+1}} (t) \nnb \\ 
&+   t^{1-\sqrt{\sigma}} ( \|\dtau \phi \|_{H_{N+1}} + \| \phi \|_{H_{N+2}}) (t)   \leq \varepsilon C(I_{N+2}).
}
By virtue of the local existence theorem \ref{thm-local-existence}, we complete the proof of the main theorem \ref{ME-w}.

\appendix
\section{Local existence}\label{sec-local}

The Einstein scalar field system \eqref{eq-Einstein-source}--\eqref{def-energy-Mom-kg} over $(\mathcal{M}, \, \breve g)$ in the geodesic polar gauge ($\breve g = -dt^2 + \tilde g$) consists of the evolution equations
\begin{subequations}
\begin{align}
\p_t \tilde g_{ij} & = -2 \tilde k_{i j}, \label{pt-g} \\
\p_t \tilde k_{ij}& = \tilde R_{ij} - 2 \tilde k_i^p \tilde k_{jp} + \text{tr}_{\tilde g} \tilde k \tilde k_{ij} - \Tr_{ij} (\phi), \label{pt- k} \\
0 &=\Box_{\breve g} \phi - m^2 \phi, \label{kg}
\end{align}
\end{subequations}
and the constraint equations
\begin{subequations}
\begin{align}
\tilde R -|\tilde k|^2 + (\tr_{\tilde g} \tilde k)^2  &= \Tr_{tt}(\phi) - \tr_{\breve g} \breve{T}(\phi), \label{Constrain-Guass} \\
\tilde \nabla^i \tilde k_{ij} - \tilde \nabla_j \tr_{\tilde g} \tilde k &= - \breve{T}_{tj}(\phi), \label{Constrain-Codazzi}
\end{align}
\end{subequations}
where 
\begin{equation}\label{eq-ricci-kg-phi}
\Tr_{\alpha \beta}(\phi):= \breve{T}_{\alpha \beta} -  \frac{\tr_{\breve g} \breve{T}}{2} \breve g_{\alpha \beta}= \breve D_\alpha \phi \breve D_\beta \phi + \frac{m^2}{2}\phi^2 \breve g_{\alpha \beta}.
\end{equation}

Taking $\p_t$ derivative on \eqref{pt- k}, one obtains a wave type equation for $\tilde k$.
Let $h = \tr_{\tilde g} \tilde k$ be a new variable. We end up with the following reduced system \cite{Fournodavlos-Luk-Kasner},
\al{reduced-sys}{
\p_t \tilde g_{ij} & = -2 \tilde k_{i j}, \nnb \\
\p_t h & = |\tilde k|^2 + \Tr_{t t} (\phi), \nnb \\
-\p_t^2 \tilde k_{ij} + \Delta_{\tilde g} \tilde k_{ij}  &= \tilde \nabla_i \tilde \nabla_j h + \p_t \Tr_{ij}(\phi) - \nabla_i (\p_t \phi \nabla_i \phi) - \nabla_j (\p_t \phi \nabla_i \phi) \nnb \\
&+ \tilde k* \tilde k*\tilde k + \p_t \tilde k * \tilde k + (\Tr_{mn}(\phi), \Tr_{t t}(\phi)) * \tilde k , \nnb \\
%-\p_t^2 \phi + \Delta_{\tilde g} \phi - m^2 \phi  &= \tilde k * \p_t \phi.
\Box_{\breve g} \phi - m^2 \phi &=0.
}

Following a computation analogous to \cite{Fournodavlos-Luk-Kasner}, and combining with the divergence-free property of the energy-momentum tensor, $\breve D^\mu \breve{T}_{\mu \nu}= 0$, one can show that the reduced system \eqref{E:reduced-sys} and  the original Einstein coupled with matter field system \eqref{pt-g}--\eqref{Constrain-Codazzi} are equivalent, if the data are those induced from \eqref{pt-g}--\eqref{Constrain-Codazzi}.
\begin{lemma}\label{equi-reduce-EKG}
Suppose $(\tilde g, h, \tilde k, \phi)$ is a solution of the reduced system \eqref{E:reduced-sys} whose initial data $(\tilde g, h, \tilde k, \p_t \tilde k, \phi, \p_t \phi)|_{t=t_0} =(\tilde g_0, h_0, \tilde k_0, \tilde k_1, \phi_0, \phi_1)$ satisfies the original constraint equations \eqref{Constrain-Guass}--\eqref{Constrain-Codazzi}, and $h_0 = \tr_{\tilde g_0} \tilde k_0$, %and $(g, h, \tilde k, \phi)$ satisfies $h-h_0 \in H_{1}(M, g), \tr_g \tilde k - h_0 \in H_{1}(M, g), \phi - \phi_0 \in H_{s+2}(M, g), \p_t\phi - \phi_1 \in H_{s+1}(M, g), g-g_0 \in H_{s+3}(M, g), \tilde k-\tilde k_0 \in H_{s+2}(M, g), \p_t \tilde k-\tilde k_1 \in H_{s+1}(M, g), $
then $\breve g = -dt^2 + \tilde g$ is a solution to the original Einstein scalar field system \eqref{pt-g}--\eqref{Constrain-Codazzi}.
\end{lemma}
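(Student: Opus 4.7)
The strategy is to introduce four natural defect quantities measuring the failure of the original first-order evolution equation \eqref{pt- k}, the trace identity $h=\tr_{\tilde g}\tilde k$, and the constraints \eqref{Constrain-Guass}--\eqref{Constrain-Codazzi}, and to derive a closed linear homogeneous first-order system for them whose Cauchy data vanish at $t=t_0$. Specifically, I would set
\begin{align*}
L_{ij} &:= \p_t \tilde k_{ij} - \tilde R_{ij} + 2\, \tilde k_i{}^p \tilde k_{jp} - (\tr_{\tilde g}\tilde k)\, \tilde k_{ij} + \Tr_{ij}(\phi), \\
G &:= h - \tr_{\tilde g}\tilde k, \\
\mathcal H &:= \tilde R - |\tilde k|^2 + (\tr_{\tilde g}\tilde k)^2 - \Tr_{tt}(\phi) + \tr_{\breve g}\breve T(\phi), \\
\mathcal M_j &:= \tilde \nabla^i \tilde k_{ij} - \tilde \nabla_j(\tr_{\tilde g}\tilde k) + \breve T_{tj}(\phi).
\end{align*}
By hypothesis $G(t_0)=\mathcal H(t_0)=\mathcal M_j(t_0)=0$; the condition $L_{ij}(t_0)=0$ is encoded in the statement that the reduced-system initial data are those induced from \eqref{pt-g}--\eqref{Constrain-Codazzi}, which prescribes $\tilde k_1$ through \eqref{pt- k}.

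Next I would derive evolution equations for these four defects. For $L_{ij}$, I would differentiate its definition in $t$, substitute the reduced wave equation of \eqref{E:reduced-sys} for $\p_t^2 \tilde k_{ij}$, and expand $\p_t \tilde R_{ij}$ via the Lichnerowicz-type variation formula (the $\tilde g$-analog of the identity recalled just before \eqref{E:wave-Sigma}) combined with $\p_t \tilde g_{ij}=-2\tilde k_{ij}$. By the very construction of the reduction, the principal second-order spatial terms $\Delta_{\tilde g}\tilde k_{ij}$ and $\tilde \nabla_i \tilde \nabla_j h$ match on both sides, while the residual $\tilde \nabla\dive\tilde k$ contributions arising from $\p_t \tilde R_{ij}$ are rewritten using the Codazzi defect $\mathcal M_j$ and the gradient of the trace defect $G$; the outcome is an equation of the schematic form $\p_t L_{ij}= $ linear in $(L,G,\mathcal H,\mathcal M_j,\nabla G,\nabla \mathcal M_j)$ with coefficients smooth in $(\tilde g,\tilde k,h,\phi)$. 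For $G$, the identity $\p_t \tilde g^{ij}=2\tilde k^{ij}$, the trace of \eqref{pt- k} rewritten as $\tilde g^{ij}\p_t\tilde k_{ij}=\tilde g^{ij}L_{ij}+\tilde R-2|\tilde k|^2+h\tr_{\tilde g}\tilde k-\tr_{\tilde g}\Tr$, and the second equation of \eqref{E:reduced-sys} combine to give $\p_t G$ linear in $(L,G,\mathcal H)$. For $\mathcal H$ and $\mathcal M_j$, I would invoke the classical constraint-propagation argument: the twice-contracted second Bianchi identity $\breve D^\mu \breve G_{\mu\nu}=0$ together with $\breve D^\mu \Tr_{\mu\nu}(\phi)=0$, the latter being equivalent to the Klein--Gordon equation \eqref{kg}, produces a symmetric-hyperbolic-type system for $(\mathcal H,\mathcal M_j)$ whose source is linear in $(\mathcal H,\mathcal M_j,L,G)$.

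The combined system for $(L,G,\mathcal H,\mathcal M_j)$ is therefore a homogeneous linear first-order system with smooth coefficients (depending on the given reduced solution) and vanishing Cauchy data at $t_0$; a standard energy-method uniqueness argument then forces $L\equiv G\equiv \mathcal H\equiv 0$ and $\mathcal M_j\equiv 0$ for all $t\geq t_0$ in the common existence interval. Vanishing of $L$ and $G$ recovers \eqref{pt- k}, vanishing of $\mathcal H$ and $\mathcal M_j$ restores \eqref{Constrain-Guass}--\eqref{Constrain-Codazzi}, and \eqref{pt-g} and \eqref{kg} are inherited verbatim from the first and last lines of \eqref{E:reduced-sys}; hence $\breve g=-dt^2+\tilde g$ solves the original Einstein scalar field system. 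The main obstacle is the algebraic verification that $\p_t L_{ij}$ really does have a right-hand side linear in the defects, which is essentially the assertion that the reduced wave equation coincides, on-shell, with the $\p_t$-derivative of \eqref{pt- k}; in practice this requires a careful matching of the Lichnerowicz principal terms produced by the Ricci variation formula against those of the reduced wave equation, with the Codazzi identity bookkeeping the divergence remnants.
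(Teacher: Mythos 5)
Your overall strategy -- introduce defect quantities for the first--order evolution equation, the trace identity $h=\tr_{\tilde g}\tilde k$, and the two constraints, derive a homogeneous linear system for them, and conclude from vanishing Cauchy data -- is the same constraint--propagation philosophy as the paper's argument (which follows Fournodavlos--Luk). The setup of the four defects and the bookkeeping of the principal terms (matching $\Delta_{\tilde g}\tilde k$ and $\tilde\nabla_i\tilde\nabla_j h$ against the Ricci variation formula, rewriting the $\tilde\nabla\,\dive\tilde k$ remnants via $\mathcal M_j$ and $G$) is correct, and you rightly note that $L_{ij}(t_0)=0$ must be read into the hypothesis via the prescription of $\tilde k_1$ by \eqref{pt- k}.

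The genuine gap is the final step: the claim that the resulting system is ``a homogeneous linear first-order system'' to which ``a standard energy-method uniqueness argument'' applies. It is not. Tracking your own computation, the $L$-equation contains $\tilde\nabla_i\tilde\nabla_j G$ (second spatial derivatives, not just $\nabla G$ as you list) and $\tilde\nabla_{(i}\mathcal M_{j)}$, the $\mathcal M$-equation contains $\tilde\nabla^j$ of the spatial Einstein defect (hence $\nabla L$ and $\nabla\mathcal H$), while the $G$-equation is a pure ODE, $\p_t G=$ (algebraic in $L,\mathcal H$). This coupling pattern has no symmetric-hyperbolic structure: a Gr\"onwall estimate on $\|L\|^2+\|G\|_{H^2}^2+\cdots$ requires $\|L\|_{H^2}$ to control $\p_t\|G\|_{H^2}^2$, which in turn requires $\|G\|_{H^4}$, and the derivative count never closes. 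The resolution -- and this is exactly what the paper does -- is to exploit second-order-in-time structure instead: first take the trace of the reduced wave equation and use $\breve D^\mu\Tr_{\mu\nu}\p_t^\nu=0$ to obtain a genuine wave equation $\p_t^2(\tr_{\tilde g}\tilde k-h)-\Delta(\tr_{\tilde g}\tilde k-h)=2\,\tr_{\tilde g}\tilde k\,\p_t(\tr_{\tilde g}\tilde k-h)$, whose coercive energy kills $G$ outright (and then yields $\breve R_{tt}=\Tr_{tt}$ pointwise); only then, with the $\nabla^2G$ source gone, does one propagate the remaining defects, and even there the paper upgrades the momentum defect $\Theta_i=G_{it}-\breve T_{it}$ to a wave equation coupled to a transport equation for $\Xi_i^j=G_i^j-\breve T_i^j$ so that a combined wave/transport energy closes. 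Without identifying these wave structures, your uniqueness step does not go through as written.
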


Then one can use the reduced system \eqref{E:reduced-sys} to prove the local existence theorem, which had been carried out in \cite{Fournodavlos-Luk-Kasner} in the space $(\tilde g, h, \tilde k, \p_t \tilde k) \in H_{N+2} \times H_{N+2} \times H_{N+1} \times H_{N}$, $N\geq 2$. However, for our practice, we need to get rid of the top ($N+2$) order derivative of $\tilde g$, since it has energy growth in the long time scheme of the massive case. Motivated by the ideas in \cite{Christodoulou-K-93} and \cite{Fournodavlos-Luk-Kasner}, we establish the following local existence theorem in which the $H_{N+2}$ bounded condition for $\tilde g$ will be replaced by the $H_N$ bound on the Ricci curvature.

\begin{theorem}\label{thm-local-existence}
Let $(M, g_0)$ be a smooth complete Riemannian manifold with positive injective radius and $M$ is diffeomorphic to $\mathbb{R}^3$. Let $(g_0, k_0, \phi_0, \phi_1)$ be the data  on $\{t_0\} \times M$, $t_0 >0$, for the rescaled Einstein scalar field system  and satisfies the following conditions: Fix an integer $N\geq 2$,
\begin{itemize}
\item [1] $Ric_{0}$, the Ricci curvature of $g_{0}$, satisfies\footnote{This condition implies that the Sobolev embeddings are valid on $(M,\, g_0)$.} $Ric_0 \geq \alpha g_0$, for some $\alpha \in \mathbb{R}$. In addition, $\|g_0 -\gamma\|_{H_{N+1}(M, g_0)}$ is bounded\footnote{This condition is imposed to assure the density theorem $H_{0,N+2} (C^\infty(M), g_0) =  H_{N+2} (C^\infty(M), g_0)$ holds, in view of Proposition \ref{pro-density}. It can be replaced by $Ric_{0} \in C^{N} (M)$. }.
%\item [2]  $H_{0,N+2} (C^\infty(M), g_0) =  H_{N+2} (C^\infty(M), g_0)$\footnote{This condition will be satisfied if %$g$ and $\gamma$ are equivalent, and $\|g_0 -\gamma\|_{H_{N+1}(M, g_0)}$ is bounded, by Proposition \ref{pro-density}.}.
\item [2] $k_{0ij}$ is a symmetric $(0,2)$-tensor decomposing into the trace free and trace parts: $k_{0ij} = \Sigma_{0ij} + \frac{\tr_{g_0} k_0}{3} g_0$, with $\tr_{g_0} k_0 = g_0^{i j} k_{0 ij}$.
It verifies that $\Sigma_{0} \in H_{N+1}(M, g_0)$ and $ \tr_{g_0} k_0 +3  \in  H_{N+2}(M, g_0)$%\footnote{Here the number $3$ can be replaced by any other number after rescaling the metric.  If it is replaced by $ \tr_{g_0} k_0 + 3 a \in  H_{N+2}(M, g_0)$ for some $a \in \mathbb{R}$ here, then the requirement for $R_0$  in condition 4 should be replaced by $R_0 + 6a^2 \in H_{N}(M, g_0)$, due to the constraint equations. We add this constant to cover the case (for instance, the hyperbolic metric on the hyperboloid $\mathbb{H}^3$) when $R_0\notin H_{N}(M, g_0)$, but $R_0 + c \in H_{N}(M, g_0)$ for some $c \geq 0, \, c \in \mathbb{R}$. },
\item [3] The Ricci tensor satisfies $Ric_{0} + 2g_0 \in H_{N}(M, g_0)$.
\item [4] $(\phi_0, \phi_1)$  $\in H_{N+2}(M, g_0) \times H_{N+1}(M, g_0)$.
\end{itemize}
Then there is a unique, local-in-time development $(\mathcal{M}, \, \breve g)$ with 
\[\mathcal{M} = [t_{0}, t_{\ast}] \times M, \quad \breve g = -dt^2 + t^2 g(t),\] 
and $t =t_0$ corresponding to the initial slice $(M, \, t_0^2 \cdot g_0)$. Moreover, denoting $\Sigma_{ij}$, the trace free part of $k_{ij}$, and $R_{ij}$, the Ricci curvature of $g(t)$ respectively,
we have
\als{
g_{ij}(t) -g_{0 ij} & \in C^1([t_0, t_\ast], H_{N+1}(M, g_0)), \\
\Sigma_{ij} (t) & \in C^1([t_0, t_\ast], H_{N+1}(M, g_0)), \\
 \tr_g k  (t) +3 & \in C^1([t_0, t_\ast], H_{N+2}(M, g_0)),   \\
R_{ij} (t) + 2 g_{ij} (t)& \in C^1([t_0, t_\ast], H_{N}(M, g_0)).
}
\end{theorem}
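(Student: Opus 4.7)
The plan is to reduce Theorem \ref{thm-local-existence} to the local existence result of Fournodavlos--Luk \cite{Fournodavlos-Luk-Kasner}, which handles the reduced system \eqref{E:reduced-sys} in the space $(\tilde g, h, \tilde k, \p_t \tilde k) \in H_{N+2} \times H_{N+2} \times H_{N+1} \times H_N$. Our hypotheses only furnish $g_0-\gamma \in H_{N+1}$, so this theorem does not apply directly; the trade-off is that we are given $\mathrm{Ric}_0 + 2 g_0 \in H_N$. The strategy is to approximate $(g_0, k_0, \phi_0, \phi_1)$ by a sequence of smooth data $(g_0^{(n)}, k_0^{(n)}, \phi_0^{(n)}, \phi_1^{(n)})$ — using Proposition \ref{pro-density} together with standard mollification adapted so that the constraints \eqref{Constrain-Guass}--\eqref{Constrain-Codazzi} are preserved (or, equivalently, solve the reduced system \eqref{E:reduced-sys} for each $n$ and appeal to Lemma \ref{equi-reduce-EKG} to recover the original system). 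For each $n$, Fournodavlos--Luk produces a solution $(\tilde g^{(n)}, \tilde k^{(n)}, \phi^{(n)})$ on some interval $[t_0, T^{(n)}]$; the task is to show these intervals contain a common $[t_0, t_\ast]$ and that the sequence is uniformly bounded there in the coarser norms of the theorem's conclusion.

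The uniform estimates on such a $[t_0, t_\ast]$ are exactly the short-time local-in-time analogues of the energy scheme of Section \ref{sec-EKG} (with trivial modification for $m^2=0$). Concretely, I would run, in order: the transport equation \eqref{eq-evolution-xi} to bound $\|\eta\|_{H_{N+1}}$ via its saving-regularity structure (the right-hand side is quadratic in $(\eta, \Sigma, \mathcal{D}\phi)$, all at the same level of regularity); the scalar field energy estimates of Section \ref{sec-ee-kg-1+3} to bound $\|\D \phi\|_{H_{N+1}}$; the Bianchi estimates of Section \ref{sec-ee-Bianchi-1+3} to bound $\|\Ew\|_{H_N} + \|\Hw\|_{H_N}$; the Codazzi elliptic system \eqref{E:div-curl-k-hat}, now elliptic for $\Sigma$ thanks to the $H_{N+1}$ control on $\eta$, combined with the transport equation \eqref{E:eq-trans-Sigma} for the $L^2$ piece of $\Sigma$, to bound $\|\Sigma\|_{H_{N+1}}$; the evolution equation \eqref{eq-evolution-1} integrated in $t$ to bound $\|g-\gamma\|_{H_{N+1}}$; the Gauss equation \eqref{eq-Gauss-E} to retrieve $\|R_{ij}+2 g_{ij}\|_{H_N}$; and finally the renormalization trick of Section \ref{sec-improve-regularity} (Proposition \ref{pro-improve-eta}) to recover the top-order $\|\eta\|_{H_{N+2}}$. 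All bounds depend only on the norms appearing in the initial-data hypotheses and close in a Gr\"onwall loop on a time interval depending only on those norms; in particular $t_\ast$ is independent of $n$. Standard weak-$\ast$ compactness together with Aubin--Lions interpolation then extracts a subsequential limit in the claimed function spaces, and Lemma \ref{equi-reduce-EKG} shows the limit solves \eqref{pt-g}--\eqref{Constrain-Codazzi}. Uniqueness follows by running the same estimates on the difference of two solutions, which lives in the same coarse space and hence vanishes if the initial data agree.

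The principal obstacle is precisely the one the paper's introduction flags: closing the estimates without ever controlling $g-\gamma$ at the $H_{N+2}$ level. Three structural facts make this possible. First, the transport equation for $\eta$ genuinely loses no derivative, so $\|\eta\|_{H_{N+1}}$ is determined by quantities of the same order. Second, the Codazzi system for $\Sigma$ gains a derivative back elliptically, using only $\|\nabla \eta\|_{H_N}$ and curvature coefficients at the $L^\infty$ level available from Sobolev embedding (which in turn requires only the $H_{N+1}$ control of $g$, together with $\|R_{ij}+2g_{ij}\|_{H_N}$ to bound the Ricci lower bound needed for Proposition \ref{prop-Sobolev}). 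Third, the Gauss identity \eqref{eq-Gauss-E} expresses $R_{ij}+2g_{ij}$ purely in terms of $\eta, \Sigma, \Ew$ and the scalar field stress, all controlled at the correct level, so $\|R_{ij}+2g_{ij}\|_{H_N}$ propagates without ever differentiating $g$ an extra time. A secondary technical point is compatibility of the mollified data with the constraints; this is handled by mollifying only $(g_0, \phi_0, \phi_1)$ and the trace-free/trace split of $k_0$ and then projecting onto the constraint manifold via an elliptic solver (or, alternatively, appealing directly to the reduced-system local existence and exploiting the propagation of constraints that underlies Lemma \ref{equi-reduce-EKG}).
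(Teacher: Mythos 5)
Your proposal matches the paper's intended argument: the paper gives no detailed proof but states that local existence is standard (citing \cite{Christodoulou-K-93} and \cite{Fournodavlos-Smulevici-local}) once one adds the renormalization of Section \ref{sec-improve-regularity} to recover the top-order $H_{N+2}$ regularity of $\tr_g k$, and your plan --- running the transport/Bianchi/elliptic scheme of Section \ref{sec-EKG} locally in time on approximating data and closing with Proposition \ref{pro-improve-eta} and Lemma \ref{equi-reduce-EKG} --- is precisely that route, fleshed out. The approximation-plus-compactness layer you add is the standard way to turn these a priori estimates into an existence statement and is consistent with what the paper leaves implicit.
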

The proof is standard \cite{Christodoulou-K-93} (see \cite{Fournodavlos-Smulevici-local} as well), except that we need additionally use the idea in \cite{Fournodavlos-Luk-Kasner} (or referring to Section \ref{sec-improve-regularity}) to improve the regularity of $\tr_g k$. %Note that, to derive the higher order estimates, we should use Proposition \ref{prop-elliptic-Delta} to handle the regularity problem, since the metric has regularity that is one order lower than the usual one.%Alternatively, one can follow \cite[Section 7]{Klanierman-Nicolo-review}, using the Bianchi equations as in the long time scheme, to formulate an analogously local theorem. 

\section{The density theorem}\label{sec-density}

We recall the following density theorem from \cite{Hebey}.
\begin{proposition}\label{prop-density}
Let $(M, g)$ be a smooth, complete Riemannian manifold, then the following statement holds:  

\begin{itemize}
\item  For any $p \geq 1$, $H_{0,1}^p(M) = H^p_1(M)$.
\item Assume that $(M, g)$ has positive injective radius, and $|\nabla^j R_{m n}|$, $j = 0, \cdots, K-2$, is bounded, where $K \geq 2$ is an integer. Then for any $p \geq 1$, $H_{0,K}^p(M) = H^p_K(M)$.
\end{itemize}
\end{proposition}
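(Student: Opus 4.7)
The plan is to prove both parts by the standard two-step scheme: first use cutoff functions to approximate any Sobolev function by one with compact support, then mollify in local charts to obtain smooth compactly supported approximants. The crucial ingredient in both parts is the construction of a sequence of cutoff functions $\eta_k$ whose gradients (and higher derivatives, for the second part) decay uniformly in $k$, which is where completeness, injectivity radius, and curvature bounds enter.

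For the first statement, fix a base point $x_0 \in M$ and let $r(x) = d_g(x,x_0)$. By completeness the closed balls $\overline{B_{x_0}(R)}$ are compact. Choose a Lipschitz function $\chi : \mathbb{R} \to [0,1]$ with $\chi = 1$ on $(-\infty,1]$, $\chi = 0$ on $[2,\infty)$, and $|\chi'| \leq 2$, then set $\eta_k(x) = \chi(r(x)/k)$. Since $r$ is $1$-Lipschitz, $\eta_k$ is Lipschitz with compact support, $|\nabla \eta_k| \leq 2/k$ almost everywhere, and $\eta_k \to 1$ pointwise. Given $u \in H_1^p(M)$, the product rule (valid for Lipschitz times $W^{1,p}$) gives
\begin{equation*}
  \nabla(\eta_k u) = \eta_k \nabla u + u \nabla \eta_k,
\end{equation*}
and dominated convergence together with $|\nabla \eta_k| \leq 2/k$ yields $\eta_k u \to u$ in $H_1^p$. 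Thus the compactly supported $W^{1,p}$ functions are dense in $H_1^p$. Now cover $\mathrm{supp}(\eta_k u)$ by finitely many geodesic coordinate charts, use a subordinate partition of unity, and in each chart apply the standard Euclidean mollifier; this produces $C_0^\infty$ approximants converging in $H_1^p$, proving $H_{0,1}^p = H_1^p$.

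For the second statement with $K \geq 2$, the same outline works, but now one needs cutoffs with $|\nabla^j \eta_k|_g \leq C/k$ for all $1 \leq j \leq K$, because expanding $\nabla^K(\eta_k u)$ via the Leibniz rule produces cross terms $\nabla^j \eta_k \cdot \nabla^{K-j} u$ that must vanish in $L^p$. The radial function $r$ is only Lipschitz (not even $C^1$ past the cut locus), so it cannot be used directly; instead one smooths it. Under the hypotheses (positive injectivity radius plus $|\nabla^j \mathrm{Ric}| \in L^\infty$ for $0 \leq j \leq K-2$), one can construct a smooth exhaustion function $\rho : M \to [0,\infty)$ with $|\rho(x) - r(x)| \leq 1$ and $|\nabla^j \rho|_g \leq C_j$ for all $1 \leq j \leq K$. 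The standard construction uses a locally finite cover by harmonic (or exponential) coordinate balls of radius comparable to the injectivity radius, in which the metric has bounded $C^{K-1}$ norm, and glues smoothed distance bumps via a controlled partition of unity. Once such $\rho$ is at hand, set $\eta_k(x) = \chi(\rho(x)/k)$ with $\chi \in C^\infty$, and then $|\nabla^j \eta_k|_g \leq C_j/k$ for $1 \leq j \leq K$. The same density argument as before, combined with local mollification in harmonic charts, gives $H_{0,K}^p = H_K^p$.

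The main obstacle is the construction of the smooth exhaustion function $\rho$ with the uniform higher-derivative estimates; without bounded covariant derivatives of Ricci one cannot control the $C^K$ norm of the metric in geodesic or harmonic coordinates, and hence cannot produce smoothed cutoffs with $|\nabla^j \eta_k| \to 0$. This is precisely why the hypothesis on $\nabla^j R_{mn}$ up to order $K-2$ is imposed, and it is the reason one cannot simply iterate the first part of the proposition. Once the cutoff construction is in place, the remaining steps (Leibniz expansion, $L^p$-convergence, chart-wise mollification) are routine. A complete account of this argument is given in Hebey's monograph, which we cite.
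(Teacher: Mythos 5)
The paper offers no proof of this proposition at all: it is stated as a recalled result and attributed directly to Hebey's monograph, which is exactly the source you invoke. Your outline (Lipschitz radial cutoffs plus chart-wise mollification for $H_1^p$, and smooth cutoffs built from a controlled exhaustion function in harmonic charts---requiring the injectivity-radius and $\nabla^j\mathrm{Ric}$ bounds---for $H_K^p$) is the standard argument underlying Hebey's proof, and correctly identifies the construction of the smooth exhaustion function as the only nontrivial step, so it is consistent with the paper's treatment.
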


We will make use of the above results to establish a density theorem for our purpose.

\begin{proposition}\label{pro-density}
Let $(M, g)$ be a smooth, complete Riemannian manifold with positive injective radius and the Ricci curvature is bounded from below. We fix an integer $N \geq 2$.

 Suppose $M$ is diffeomorphic to $\mathbb{R}^3$, and let  $\gamma_{ij}$ be the hyperbolic metric on $M$. %Suppose $g_{ij}$ and $\gamma_{ij}$ are equivalent as bilinear forms: there is some constant $C>0$, such that $C^{-1} \gamma_{ij} \leq g_{ij} \leq C \gamma_{i j}$. Additionally, 
Assume that \[g_{ij} - \gamma_{i j} \in H_{N+1} (M, g), \quad N \geq 2. \] Then
\al{density-top}
{H_{0,k}(M)  = H_{k}(M),  & & k \leq N + 1, \nnb \\
H_{0, l} (C^\infty(M))  = H_{l} (C^\infty(M)), & & l \leq N + 2.
}
%where $H_{N+2} (C^\infty(M))$ denotes the completion of the space of smooth functions with bounded $H_{N+2}$ norm with respect to the $H_{N+2}$ norm, and $H_{0, N+2} (C^\infty(M))$ denotes the closure of the space of smooth functions with compact support (in $M$) in $H_{N+2}(C^\infty(M))$.

\end{proposition}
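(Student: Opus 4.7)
The approach is a cutoff-and-mollification argument using cutoffs built from the hyperbolic distance. The case $k=1$ of the first assertion follows immediately from the first bullet of Proposition~\ref{prop-density}, which needs only completeness. For the higher-order statements, fix a basepoint $x_0\in M$ and a standard profile $\rho\in C^\infty_c(\mathbb{R})$ with $\rho\equiv 1$ on $[-1,1]$ and $\rho\equiv 0$ outside $[-2,2]$, and set $\chi_R(x):=\rho(d_\gamma(x,x_0)/R)$. Because $(M,\gamma)\cong\mathbb{H}^3$ has bounded geometry of every order, each $|(\nabla^\gamma)^j\chi_R|_\gamma$ is pointwise bounded uniformly in $R$, and $\nabla^\gamma\chi_R$ is supported in the annulus $A_R:=\{R\le d_\gamma(\cdot,x_0)\le 2R\}$.

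Given $\Psi\in H_k(M,g)$ with $k\le N+1$, I plan to show $\|(\chi_R-1)\Psi\|_{H_k(M,g)}\to 0$ as $R\to\infty$; then $\chi_R\Psi\in H_k(M,g)$ has compact support, and a subsequent mollification on a chart containing this support produces an element of $H_{0,k}(M)$. Expanding $\nabla_g^l[(\chi_R-1)\Psi]$ by Leibniz and writing $\nabla_g=\nabla^\gamma+S$, where the difference tensor $S$ is built from $g^{-1}$ and $\nabla^\gamma(g-\gamma)$, yields a finite sum of products schematically of the form $(\nabla^\gamma)^a\chi_R \ast P(g-\gamma) \ast \nabla_g^b\Psi$, where $P(g-\gamma)$ is a product of iterated $\nabla^\gamma$-derivatives of $g-\gamma$ together with $g^{-1},\gamma^{-1}$ factors. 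Two inputs drive the estimate: the Sobolev embedding on $(M,g)$ from Proposition~\ref{prop-Sobolev} (valid by positive injectivity radius and lower Ricci bound) controls $|\nabla_g^j\Psi|_g$ pointwise for $j\le k-2$; and the hypothesis $g-\gamma\in H_{N+1}(M,g)$ forces $\|\nabla^j(g-\gamma)\|_{L^2(A_R)}\to 0$ for every $j\le N+1$. The $a=0$ summand is $\|(\chi_R-1)\nabla_g^l\Psi\|_{L^2}$, which vanishes by dominated convergence; every other summand is supported in $A_R$ and is bounded there by pointwise $L^\infty$ factors (lower-order derivatives of $\Psi$ within the Sobolev range, or of $g-\gamma$ up to order $\le N-1$) times an $L^2$-factor on $A_R$ whose mass vanishes in the limit.

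The function case $H_{0,l}(C^\infty(M))=H_l(C^\infty(M))$ with $l\le N+2$ follows by the same scheme applied first to a smooth approximating sequence $\psi_n\to\psi$ in $H_l$, diagonalized with $R_n\to\infty$. One extra order of regularity is gained because $\nabla_g^l\psi$ on a scalar $\psi$ consumes one fewer Christoffel symbol than on a tensor of the same order, so the highest derivative of $g-\gamma$ entering the expansion is $\nabla^{l-1}(g-\gamma)$ instead of $\nabla^l(g-\gamma)$. The main obstacle is the combinatorial bookkeeping in the second paragraph: in every term of the Leibniz expansion one must verify that the high derivatives of $g-\gamma$ occupy the $L^2$-position (where annular smallness applies), while the factors requiring pointwise bounds stay within the Sobolev-accessible ranges $\le k-2$ for $\Psi$ and $\le N-1$ for $g-\gamma$. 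This is precisely where the precise regularity hypothesis $g-\gamma\in H_{N+1}(M,g)$, rather than any smallness, is used.
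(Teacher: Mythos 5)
Your argument is correct in outline but takes a genuinely different route from the paper. The paper does not run a cutoff argument on $(M,g)$ at all: it first shows, using $\nabla_\gamma g=\nabla\gamma * g$ and the hypothesis $g-\gamma\in H_{N+1}(M,g)$ together with the Sobolev embeddings of Proposition~\ref{prop-Sobolev}, that the $g$- and $\gamma$-Sobolev norms are \emph{equivalent} (up to order $N+2$ for functions and $N+1$ for tensors), so that the spaces $H_l(C^\infty(M),g)$ and $H_l(C^\infty(M),\gamma)$ (and their compactly supported versions) coincide; it then invokes the quoted density theorem, Proposition~\ref{prop-density}, on $(M,\gamma)$, where all curvature derivatives are bounded. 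The detour through $\gamma$ is the whole point: the curvature-derivative hypotheses of Proposition~\ref{prop-density} hold for the hyperbolic metric at every order but need not hold for $g$ at the top orders under the stated regularity, which is exactly the obstruction your direct approach has to absorb by hand. What your route buys is self-containedness (you only use the $k=1$ case of Proposition~\ref{prop-density}, which needs mere completeness) and a transparent explanation of why scalars gain one order (the first covariant derivative of a function sees no Christoffel symbol, so the expansion stops at $\nabla^{N+1}(g-\gamma)$, which is exactly in $L^2$); what it costs is the Leibniz/annulus bookkeeping you flag in your last paragraph, which is essentially the same collection of product estimates the paper compresses into the single sentence that the two norms are equivalent. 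That bookkeeping does close: in every term at most one factor can carry order $\geq N$, that factor sits in $L^2$ globally and so has vanishing mass on the annulus $A_R$, and the remaining factors are within the $L^\infty$ (or $L^4$--$L^4$) range furnished by Proposition~\ref{prop-Sobolev}. So I see no gap, only unexecuted but routine detail; if you want the shorter argument, prove the norm equivalence once and transfer the density statement from $(M,\gamma)$.
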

\begin{proof}
 Since %$g_{ij}$ and $\gamma_{i j}$ are equivalent as bilinear forms, and 
$g_{ij} - \gamma_{i j} \in H_{N+1}$, $N \geq 2$, noting that $\nabla_\gamma g = \nabla \gamma * g$, hence there is, 
\als{
\sum_{1\leq k \leq N+1}\| \nabla^{I_k} \gamma_{i j}\|_{L^2(M, g)} + \sum_{1 \leq k\leq N+1} \| \nabla^{I_k}_\gamma g_{i j}\|_{L^2(M, \gamma)} & \leq C.
}
%and hence $\sum_{1 \leq k\leq N+1} \| \nabla_{I_k} [\gamma] g_{i j}\|_{L^2}  \lesssim \varepsilon I_{N+2}$, 
As a result, applying the Sobolev inequalities, we can prove that for any function $\psi$ and tensor field $\Psi$ on $M$, the two norms are equivalent
\als{
\| \psi\|_{H_{l} (M, g)} & \sim \| \psi\|_{H_{l} (M, \gamma)}, \quad  l \leq N + 2, \\
\| \Psi\|_{H_{k} (M, g)} & \sim \| \Psi\|_{H_{k} (M, \gamma)}, \quad k \leq N + 1, 
}
  and therefore,
\als{
 H_{l} (C^\infty(M), g) &=  H_{l} (C^\infty(M), \gamma), && l \leq N + 2,\\
  H_{0,l} (C^\infty(M), g) &=  H_{0, l} (C^\infty(M), \gamma), &&  l \leq N + 2.
 }
By the density theorem on $(M, \gamma)$ (referring to Proposition \ref{prop-density}), \[H_{0,l} (C^\infty(M), \gamma) =  H_{l} (C^\infty(M), \gamma),\] we conclude the second claim in \eqref{E:density-top}. The first claim in \eqref{E:density-top} follows in the same way.
\end{proof}

\section{Some identities}

\subsection{Commuting identity}\label{sec-comm}
Let $\Gamma_{ij}^a$ be the connection coefficient of $\nabla$. Then the Lie derivative $\lie_{\dtau} \Gamma_{ij}^a$ is a tensor field
\begin{equation}\label{dtau-Gamma}
\begin{split}
\lie_{\dtau} \Gamma^a_{ij}& =\frac{1}{2}g^{ab} \left( \nabla_i \lie_{\dtau} g_{jb} + \nabla_j \lie_{\dtau} g_{ib} -\nabla_b \lie_{\dtau} g_{ij} \right).
\end{split}
\end{equation}
%We remind ourselves that $\lie_{\dtau} g_{ij} = -2 \eta g_{ij} - 2 \Sigma_{ij}.$

A commuting identity between $\nabla$ and $\lie_{\dtau}$ is given below:
\begin{lemma}\label{lem-commu-lie}
Let $\Psi$ be an arbitrary $(0, k)$-tensor field on $(M, g)$. The following commuting formula holds:
\begin{equation}\label{commuting-lie-nabla}
\lie_{\dtau} \nabla_j \Psi_{a_1 \cdots a_k} = \nabla_j \lie_{\dtau} \Psi_{a_1 \cdots a_k} - \sum_{i=1}^k \lie_{\dtau} \Gamma^p_{j a_i} \Psi_{a_1 \cdots p \cdots a_k}.
\end{equation}
\end{lemma}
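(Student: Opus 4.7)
The plan is a direct coordinate calculation, exploiting the two facts that $\dtau = \partial_\tau$ is a coordinate vector field in the Gaussian normal chart $(\tau, x^i)$, and that $\Psi$ is an $M$-tensor (its coordinate components carry only spatial indices).

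First I will observe that, since $\partial_{a}(\partial_\tau)^\mu = 0$ for every spatial index $a$, the general Lie-derivative formula collapses to the componentwise identity
\[
\lie_{\dtau} \Psi_{a_1 \cdots a_k} = \partial_\tau \Psi_{a_1 \cdots a_k},
\]
and similarly on the $(0,k+1)$-tensor $\nabla_j \Psi$. Hence computing $\lie_{\dtau}$ on either side of \eqref{commuting-lie-nabla} reduces to applying $\partial_\tau$ to the coordinate components.

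Second I will expand
\[
\nabla_j \Psi_{a_1 \cdots a_k} = \partial_j \Psi_{a_1 \cdots a_k} - \sum_{i=1}^k \Gamma^p_{j a_i} \Psi_{a_1 \cdots p \cdots a_k},
\]
apply $\partial_\tau$, and use $[\partial_\tau,\partial_j]=0$ together with the Leibniz rule. The surviving $\partial_j(\partial_\tau \Psi)$ term combines with $\sum_i \Gamma^p_{j a_i}\,\partial_\tau \Psi_{\cdots p \cdots}$ to reconstitute $\nabla_j \lie_{\dtau} \Psi_{a_1 \cdots a_k}$, leaving precisely
\[
\lie_{\dtau} \nabla_j \Psi_{a_1 \cdots a_k} = \nabla_j \lie_{\dtau} \Psi_{a_1 \cdots a_k} - \sum_{i=1}^k (\partial_\tau \Gamma^p_{j a_i})\,\Psi_{a_1 \cdots p \cdots a_k}.
\]

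The last step is the identification $\partial_\tau \Gamma^p_{j a_i} = \lie_{\dtau} \Gamma^p_{j a_i}$, where the right-hand side is the tensor field given by \eqref{dtau-Gamma}. This is the familiar fact that, although $\Gamma$ itself is not tensorial, its variation along a vector field is — equivalently, the difference of two connections is a tensor. It can be verified directly by writing out \eqref{dtau-Gamma} in the chart $(\tau,x^i)$ and using $\lie_{\dtau} g_{ij}=\partial_\tau g_{ij}$; in fact the right-hand side of \eqref{dtau-Gamma} is exactly $\partial_\tau \Gamma^a_{ij}$ expressed invariantly. Substituting then gives \eqref{commuting-lie-nabla}. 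There is no serious obstacle here: the computation is mechanical, and the only conceptual point — the tensoriality of $\lie_{\dtau} \Gamma$ — is already encoded in the formula \eqref{dtau-Gamma} provided in the setup.
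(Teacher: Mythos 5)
Your proof is correct, and it is precisely the ``straightforward calculation'' the paper alludes to without writing out: the paper offers no proof beyond that remark, so there is nothing to contrast. Each of your three steps checks out, including the identification of $\partial_\tau \Gamma^a_{ij}$ with the tensorial expression \eqref{dtau-Gamma}, which is the standard first-variation formula for the Levi-Civita connection applied with $\delta g_{ij}=\lie_{\dtau}g_{ij}$.
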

This lemma can be proved by straightforward calculations.
An application of Lemma \ref{lem-commu-lie} to $\nabla_{I_l}  \psi$, taking \eqref{eq-evolution-1} into account, gives the following lemma.
\begin{lemma}\label{lemma-commuting-application}%[Commuting lemma for scalar field]
Let $l \geq 1$,  then for any scalar field $\psi$
\begin{align}
  \lie_{\dtau}  \nabla_{I_l} \psi &=  \nabla_{I_l} \left( \dtau  \psi \right) +  [ \dtau, \nabla_{I_l}](\psi), \label{id-commuting-nabla-N-T-l-simplify}
\end{align}
where $[ \dtau, \nabla_{I_1}](\psi)=0,$ and 
\begin{align}
  [\dtau, \nabla_{I_l}](\psi) &= \sum_{a+2 +b  = l}  \nabla_{I_a}  \left( \nabla \Sigma_{i j}, \, \nabla \eta \right) * \nabla_{I_{b}} \nabla \psi, \quad l \geq 2. \label{def-commuting-KN-l}
\end{align}
\end{lemma}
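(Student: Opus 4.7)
The plan is to prove the identity by induction on $l$, combining the general commutator formula of Lemma \ref{lem-commu-lie} with the explicit expression \eqref{dtau-Gamma} for $\lie_{\dtau}\Gamma$ and the evolution equation \eqref{eq-evolution-1}.

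For the base case $l=1$, since $\psi$ is a scalar one has $\lie_{\dtau}\psi = \dtau\psi$ and $\nabla_j\psi = \partial_j\psi$; the commutator of $\dtau$ and $\partial_j$ on a scalar vanishes, so $\lie_{\dtau}\nabla_j\psi = \nabla_j\dtau\psi$ and $[\dtau,\nabla_{I_1}](\psi)=0$, matching the claim. Equivalently this follows from \eqref{commuting-lie-nabla} applied to the scalar ($k=0$), where the $\lie_{\dtau}\Gamma$ correction is simply absent.

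For the inductive step, suppose the identity holds at order $l-1$. Write $\nabla_{I_l}\psi = \nabla_{i_1}\nabla_{I_{l-1}}\psi$ and apply Lemma \ref{lem-commu-lie} to the $(0,l-1)$-tensor $\Psi=\nabla_{I_{l-1}}\psi$:
\[
\lie_{\dtau}\nabla_{I_l}\psi \;=\; \nabla_{i_1}\lie_{\dtau}\nabla_{I_{l-1}}\psi \;-\; \sum_{j=2}^{l}\lie_{\dtau}\Gamma^{p}_{\,i_1 i_j}\,\nabla_{i_2\cdots p\cdots i_l}\psi.
\]
Substituting the inductive hypothesis $\lie_{\dtau}\nabla_{I_{l-1}}\psi = \nabla_{I_{l-1}}\dtau\psi + [\dtau,\nabla_{I_{l-1}}](\psi)$ produces the desired leading term $\nabla_{I_l}\dtau\psi$, while the remaining pieces will make up $[\dtau,\nabla_{I_l}](\psi)$.

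It remains to check the schematic form. From \eqref{dtau-Gamma} and \eqref{eq-evolution-1}, $\lie_{\dtau}g_{ij} = -2\eta g_{ij} - 2\Sigma_{ij}$, hence
\[
\lie_{\dtau}\Gamma^{a}_{\,ij} \;=\; g^{ab}\bigl(-\nabla_i(\eta g_{jb}+\Sigma_{jb}) - \nabla_j(\eta g_{ib}+\Sigma_{ib}) + \nabla_b(\eta g_{ij}+\Sigma_{ij})\bigr),
\]
which is schematically $(\nabla\eta,\nabla\Sigma)$. Therefore each term $\lie_{\dtau}\Gamma^p_{i_1 i_j}\,\nabla_{i_2\cdots p\cdots i_l}\psi$ fits the template with $a=0$, $b=l-2$, giving $a+2+b=l$. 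The remaining contribution $\nabla_{i_1}[\dtau,\nabla_{I_{l-1}}](\psi)$ is, by the inductive hypothesis, a sum of terms $\nabla_{i_1}\bigl(\nabla_{I_{a'}}(\nabla\eta,\nabla\Sigma)\ast\nabla_{I_{b'}}\nabla\psi\bigr)$ with $a'+2+b'=l-1$; expanding by the Leibniz rule redistributes the derivative into terms of the same form with $a+2+b=l$. Collecting everything yields the stated formula for $[\dtau,\nabla_{I_l}](\psi)$.

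The only genuine obstacle is organizational: keeping the multi-index bookkeeping consistent so that every correction produced at each induction step is absorbed into the schematic sum over $a+2+b=l$. Since we work only with the structural form $A\ast B$ and do not track precise coefficients, this reduces to a routine check that both the direct $\lie_{\dtau}\Gamma$ terms and the inductive commutator contributions have derivative counts that match, which they do by the index accounting above.
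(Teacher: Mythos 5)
Your proof is correct and follows essentially the same route as the paper, which simply states that the lemma follows from an application of Lemma \ref{lem-commu-lie} to $\nabla_{I_l}\psi$ together with the evolution equation \eqref{eq-evolution-1}. Your induction on $l$ is just the natural way to organize that application, and the index accounting (the direct $\lie_{\dtau}\Gamma$ terms landing at $a=0$, $b=l-2$, and the Leibniz redistribution of the inductive contributions) checks out.
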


We also present a commuting identity between $\nabla$ and $\Delta$, which can be proved by induction. 
\begin{lemma}\label{lemma-commuting-nabla-laplacian}
For any scalar field $\psi$ and  $l \geq 1$,
\begin{align}
\Delta \nabla_{I_l} \psi &=  \nabla_{I_l} \Delta \psi  +  \sum_{a+b= l-1} \nabla_{I_a} R_{imjn} * \nabla_{I_b} \nabla \psi. \label{def-R-l-commute-nabla-laplacian}
\end{align}
In general, for any $(0, n)$-tensor $\Psi_{J_n},$
\begin{align}
 \Delta \nabla_{I_l} \Psi_{J_n} &= \nabla_{I_l} \Delta \Psi_{J_n} +  \sum_{a+b = l} \nabla_{I_a} R_{imjn} * \nabla_{I_b} \Psi_{J_n}. \label{Def-R-Psi-ij-commute-nabla-laplacian}
\end{align}
In particular,
\al{comm-2}{
&\nabla_a \Delta \Psi_{I_n} = \Delta \nabla_a \Psi_{I_n} - R_a^p \nabla_p \Psi_{I_n} \nnb\\
& \quad + \sum_{k=1}^n 2 R_{a p i_k}{}^{i_q} \nabla^p \Psi_{i_1 \cdots i_q \cdots i_n} + \sum_{k=1}^n  \nabla^p R_{a p i_k}{}^{i_q} \Psi_{i_1 \cdots i_q \cdots i_n}.
}
\end{lemma}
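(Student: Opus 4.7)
The plan is to prove the three identities in the order \eqref{E:comm-2} first, then \eqref{Def-R-Psi-ij-commute-nabla-laplacian} by induction on $l$ using \eqref{E:comm-2} as the base case, and finally deduce \eqref{def-R-l-commute-nabla-laplacian} as a specialization. The only analytic tool needed is the Ricci identity
\als{
[\nabla_i,\nabla_j]\Psi_{k_1\cdots k_m}=-\sum_{\ell=1}^m R_{ijk_\ell}{}^{r}\,\Psi_{k_1\cdots r\cdots k_m},
}
applied to scalars, vectors, and higher rank tensors; the rest is bookkeeping.

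First I would prove the explicit formula \eqref{E:comm-2}. Writing $\Delta=g^{pq}\nabla_p\nabla_q$ and pulling $\nabla_a$ through, one commutes past $\nabla_p$ first (treating $\nabla_q\Psi_{I_n}$ as a $(0,n{+}1)$-tensor in the indices $(q,i_1,\dots,i_n)$), and then past $\nabla_q$ (treating $\Psi_{I_n}$ as a $(0,n)$-tensor). Each commutator uses the Ricci identity above; contracting with $g^{pq}$ collapses $g^{pq}R_{apq}{}^r$ to the Ricci tensor $R_a{}^r$, producing the $-R_a^p\nabla_p\Psi_{I_n}$ term. The two remaining commutator applications give structurally identical contributions after renaming dummy indices (this is where the factor of $2$ appears), yielding the $\sum_k 2 R_{ap i_k}{}^{i_q}\nabla^p\Psi$ sum, while moving $\nabla_p$ past the second Riemann factor via the Leibniz rule throws off the $\sum_k\nabla^p R_{ap i_k}{}^{i_q}\Psi$ term.

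Next, I would establish \eqref{Def-R-Psi-ij-commute-nabla-laplacian} by induction on $l$. The case $l=0$ is trivial; the case $l=1$ is precisely \eqref{E:comm-2}, where each of the terms $-R_a^p\nabla_p\Psi$, $2R_{ap i_k}{}^{i_q}\nabla^p\Psi$, and $\nabla^p R_{ap i_k}{}^{i_q}\Psi$ falls into the schematic sum $\sum_{a+b=1}\nabla_{I_a}R_{imjn}*\nabla_{I_b}\Psi_{J_n}$ (the Ricci contractions are absorbed by the $*$ notation). For the inductive step, write $\nabla_{I_l}=\nabla_{i_1}\nabla_{I_{l-1}}$, apply $\Delta$, and commute the outermost $\nabla_{i_1}$ out using the $l=1$ case applied to the $(0,n{+}l{-}1)$-tensor $\nabla_{I_{l-1}}\Psi_{J_n}$. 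Then apply the inductive hypothesis to the resulting $\Delta\nabla_{I_{l-1}}\Psi_{J_n}$ and distribute the leftover $\nabla_{i_1}$ across the schematic sum using Leibniz; every term produced is of the form $\nabla_{I_a}R*\nabla_{I_b}\Psi$ with $a+b=l$, as claimed.

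Finally, \eqref{def-R-l-commute-nabla-laplacian} follows by applying \eqref{Def-R-Psi-ij-commute-nabla-laplacian} to the $(0,1)$-tensor $\Psi_j=\nabla_j\psi$ with order $l-1$, i.e.\ by writing $\nabla_{I_l}\psi=\nabla_{I_{l-1}}(\nabla\psi)$. The summation range then shifts from $a+b=l-1$ (on the derivative count of $\nabla\psi$) to the stated $\sum_{a+b=l-1}\nabla_{I_a}R_{imjn}*\nabla_{I_b}\nabla\psi$, reflecting the fact that for scalars no Riemann term appears at the very first commutation. The principal obstacle in the argument is purely combinatorial: keeping a clean accounting of the Leibniz expansions and of the many index contractions so that each Riemann contribution is verified to sit inside the advertised schematic sum; the schematic $*$ notation is precisely what makes this bookkeeping tractable without enumerating all contraction patterns.
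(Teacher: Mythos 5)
Your proposal is correct and follows exactly the route the paper intends: the paper offers no written proof of this lemma, stating only that it ``can be proved by induction,'' and your argument (Ricci identity to get \eqref{E:comm-2}, induction on $l$ with Leibniz bookkeeping for \eqref{Def-R-Psi-ij-commute-nabla-laplacian}, then specialization to $\Psi_j=\nabla_j\psi$ for the scalar case) supplies precisely the omitted details. The only point worth making explicit is the final commutation $\nabla_{I_{l-1}}\Delta\nabla\psi=\nabla_{I_l}\Delta\psi+\nabla_{I_{l-1}}(\mathrm{Ric}*\nabla\psi)$ needed to convert the tensor formula into the stated scalar one, which you gesture at but do not write out.
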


Using these commuting identities, we can prove Proposition \ref{prop-elliptic-Delta}.
\begin{proof}[Proof of Proposition \ref{prop-elliptic-Delta}]
When $k=1$, it holds automatically.

When $k=2$, we take a $(0,1)$-tensor $\Psi$ for instance.
 \begin{align*}
& \int_{M} |\nabla^2 \Psi |^2\,  \di \mu_g = -  \int_{M} \nabla^j \Psi^p \Delta \nabla_j  \Psi_p \, \di \mu_g \\
& \stackrel{\eqref{E:comm-2}}{=} -  \int_{M} \nabla^j \Psi^p \left( \nabla_j \Delta \Psi_{p } + R_{j}^i \nabla_i \Psi_p - 2 R_{j a p}{}^{b} \nabla^a \Psi_{b } -  \nabla^a R_{j a p}{}^{b} \Psi_{b} \right) \di \mu_g \\
& =  \int_{M} \left( |\Delta \Psi|^2 -  R_{j}^i \nabla^j \Psi  \nabla_i \Psi +  R_{j a p}{}^{b} \nabla^a \Psi_{b } \nabla^j \Psi^p - R_{j a p}{}^{b}  \Psi_{b}  \nabla^a \nabla^j \Psi^p \right) \di \mu_g \\
& \lesssim  \int_{M} |\Delta \Psi|^2 \, \di \mu_g + \|R_{imjn}\|_{L^\infty} \|\nabla \Psi\|^2_{L^2} + \| R_{imjn} \|_{L^\infty} \| \Psi\|_{L_2} \|\nabla^2 \Psi\|_{L^2},
 \end{align*}
 where we used integration by parts in the last identity.
 That is,
  \begin{align*}
\|\nabla^2 \Psi\|^2_{L^2} 
%& \lesssim   \|\Delta \Psi\|^2_{L^2} + ( \|R_{imjn}\|_{L^\infty} + \|\nabla R_{imjn} \|_{L^2} ) \| \Psi\|^2_{H_1} \\
%&\quad  +   \|\nabla R_{imjn} \|_{L^2} \| \nabla^2 \Psi\|_{L^2} \| \Psi\|_{H_1} \\
& \leq   \|\Delta \Psi\|^2_{L^2} +  C \|R_{imjn}\|_{L^\infty} \|\Psi\|^2_{H_1} \\
&\quad  +C(  a^{-1} \| R_{imjn} \|^2_{L^\infty}  \| \Psi\|^2_{L^2} + a \| \nabla^2 \Psi\|^2_{L^2} ).
 \end{align*}
 We take the constant $a$ such that $a C <\frac{1}{2}$ to derive
   \be\label{Bochner-2}
\|\nabla^2 \Psi\|^2_{L^2} \lesssim  \|\Delta \Psi\|^2_{L^2} +  ( \|R_{imjn}\|_{L^\infty} + \| R_{imjn} \|^2_{L^\infty} ) \| \Psi\|^2_{H_1}.
 \ee

 In general, for $k \geq 3$, we derive the identity
\begin{align*}
 & \quad \int_{M} \nabla_{I_k} \Psi \nabla^{I_k} \Psi \, \di \mu_g  = -  \int_{M} \nabla_{I_{k-1}} \Psi \Delta \nabla^{I_{k-1}} \Psi \, \di \mu_g \\
& \stackrel{\eqref{Def-R-Psi-ij-commute-nabla-laplacian}}{=} -  \int_{M} \nabla_{I_{k-1}} \Psi \left(  \nabla^{I_{k-1}}  \Delta \Psi +  \sum_{a+b= k-1} \nabla_{I_a} R_{imjn} * \nabla_{I_b}  \Psi  \right) \di \mu_g \\
& =   \int_{M} \Delta \nabla_{I_{k-2}} \Psi  \nabla^{I_{k-2}}  \Delta \Psi \, \di \mu_g + \int_M  \sum_{a+b= k-1} \nabla_{I_a} R_{imjn} * \nabla_{I_b}  \Psi * \nabla_{I_{k-1}} \Psi \, \di \mu_g \\
& \stackrel{\eqref{Def-R-Psi-ij-commute-nabla-laplacian}}{=}   \int_{M} |\nabla^{I_{k-2}}  \Delta \Psi |^2 \, \di \mu_g + \int_M \sum_{a+b= k-2} \nabla_{I_a} R_{imjn} * \nabla_{I_b}  \Psi * \nabla^{I_{k-2}}  \Delta \Psi \, \di \mu_g \\
& \quad + \int_M \sum_{a+b= k-1} \nabla_{I_a} R_{imjn} * \nabla_{I_b}  \Psi * \nabla_{I_{k-1}} \Psi \, \di \mu_g.
\end{align*}
Applying integration by parts again  to $$\int_M \sum_{a+b= k-2} \nabla_{I_a} R_{imjn} * \nabla_{I_b}  \Psi * \nabla^{I_{k-2}}  \Delta \Psi \, \di \mu_g,$$ we have
\begin{align*}
\|\nabla_{I_k} \Psi\|^2_{L^2}  
& =  \| \nabla_{I_{k-2}} \Delta \Psi \|^2_{L^2}  + \int_M \sum_{a+b= k-1} \nabla_{I_a} R_{imjn} * \nabla_{I_b}  \Psi * \nabla^{I_{k-3}}  \Delta \Psi \, \di \mu_g \\
& \quad + \int_M \sum_{a+b= k-1} \nabla_{I_a} R_{imjn} * \nabla_{I_b}  \Psi * \nabla_{I_{k-1}} \Psi  \, \di \mu_g.
\end{align*}
For terms like $$\int_M \nabla_{I_{k-1}} R_{imjn} *  \Psi * \nabla_{I_{k-1}} \Psi \, \di \mu_g,$$ we apply integration by parts to obtain
\begin{align*}
 & \int_M \nabla_{I_{k-1}} R_{imjn} *  \Psi * \nabla_{I_{k-1}} \Psi \, \di \mu_g \\
 = & \int_M \nabla_{I_{k-2}} R_{imjn} * \nabla \Psi * \nabla_{I_{k-1}} \Psi + \nabla_{I_{k-2}} R_{imjn} * \Psi * \nabla_{I_{k}} \Psi \, \di \mu_g  \\
  = & \int_M \nabla_{I_{k-3}} \nabla R_{imjn} * \nabla \Psi * \nabla_{I_{k-1}} \Psi + \nabla_{I_{k-3}} \nabla R_{imjn} * \Psi * \nabla_{I_{k}} \Psi \, \di \mu_g.
\end{align*}
Thus, for $k \geq 3$, we have 
\begin{align*}
 \|\nabla_{I_k} \Psi \|^2_{L^2} & \lesssim  \| \nabla_{I_{k-2}} \Delta \Psi \|^2_{L^2}  +  \|R_{imjn} \ast  \nabla_{I_{k-1}} \Psi \ast  \nabla_{I_{k-1}} \Psi \|_{L^1}  \\
 & + \sum_{a+b=k-3}  \|\nabla_{I_a} \nabla R_{imjn} \ast \nabla_{I_b} \nabla \Psi \ast  \nabla_{I_{k-1}} \Psi \|_{L^1}    \\
 & + \| \nabla_{I_{k-3}} \nabla R_{imjn} * \Psi * \nabla_{I_{k}} \Psi \|_{L^1}.
\end{align*}
%Note that 
%\als{
%&\sum_{a+b=k-2}  \|\nabla_{I_a} \nabla R_{imjn} \cdot \nabla_{I_b} \Psi \|_{L^2}  \\
%\leq & \sum_{a+b=k-3}  \|\nabla_{I_a} \nabla R_{imjn} \cdot \nabla_{I_b} \nabla \Psi \|_{L^2}  +  \|\nabla_{I_{k-1}} R_{imjn} \cdot \Psi \|_{L^2} 
%}
Then
\begin{align*}
  \|\nabla_{I_k} \Psi \|^2_{L^2} & \lesssim  \| \nabla_{I_{k-2}} \Delta \Psi \|^2_{L^2} +  \|R_{imjn}\|_{L^\infty}  \| \Psi \|^2_{H_{k-1}} \\
 & + \sum_{a+b=k-3, \, a < k-3}  \|\nabla_{I_a} \nabla R_{imjn} \|_{L^4} \| \nabla_{I_b} \nabla \Psi \|_{L^4} \|\nabla_{I_{k-1}} \Psi\|_{L^2}   \\
 & + \| \nabla_{I_{k-2}} R_{imjn} \|_{L^2} \| \nabla \Psi \|_{L^4} \|\nabla_{I_{k-1}} \Psi\|_{L^4}  \\
 & +  \|\nabla_{I_{k-2}} R_{imjn} \|_{L^2} \|\Psi \|_{L^\infty}  \| \nabla_{I_{k}} \Psi \|_{L^2}.
\end{align*}
Using the Sobolev inequalities (noting that $k \geq 3$),
\begin{align*}
 \|\nabla_{I_k} \Psi \|^2_{L^2} & \lesssim  \| \nabla_{I_{k-2}} \Delta \Psi \|^2_{L^2} +  \|R_{imjn}\|_{L^\infty}  \| \Psi \|^2_{H_{k-1}} \\
 & +   \|  \nabla R_{imjn} \|_{H_{k-3}} \| \nabla \Psi \|_{H_{k-2}}  \| \Psi \|_{H_{k-1}} \\
 & +  a^{-1} \|\nabla_{I_{k-2}}  R_{imjn} \|^2_{L^2} \|\Psi \|^2_{H_2} + a \| \nabla_{I_{k}} \Psi \|^2_{L^2}.
\end{align*}
Choosing $a$ to be small so that $a \| \nabla_{I_{k}} \Psi \|^2_{L^2}$ can be absorbed by the left hand side of the above inequality, for $k\geq 3$, we have, 
\als{
  \|\nabla_{I_k} \Psi \|^2_{L^2} & \lesssim  \| \nabla_{I_{k-2}} \Delta \Psi \|^2_{L^2}  + \|R_{imjn}\|_{L^\infty}   \| \Psi \|^2_{H_{k-1}} \nnb \\
 & \quad + \left( \| \nabla R_{imjn} \|_{H_{k-3}} + \| \nabla R_{imjn} \|^2_{H_{k-3}} \right)  \| \Psi \|^2_{H_{k-1}}.
}
By induction,
\al{Bochner-k}
 {\|\nabla_{I_k} \Psi \|^2_{L^2}  \lesssim {}& \| \nabla^{\mathring{k}} \Delta^{ [\frac{k}{2}] } \Psi \|_{L^2} \nnb\\
 &+ C \left( \|R_{imjn}\|_{L^\infty}, \, \| \nabla R_{imjn} \|_{H_{N-1}} \right)  \| \Psi \|^2_{H_{k-1}},
}
 for all $0 \leq k\leq N+2$, where $C \left( \|R_{imjn}\|_{L^\infty}, \, \| \nabla R_{imjn} \|_{H_{N-1}} \right)$ is a constant depending on $\|R_{imjn}\|_{L^\infty}$ and $\| \nabla R_{imjn} \|_{H_{N-1}}$.

\end{proof}

\subsection{An identity for the Bianchi equations}\label{sec-id-Bianchi}
We will prove an identity for the Bianchi equations (Lemma \ref{lemma-div-curl}), which is crucial in high order energy estimates for the $1+3$ Bianchi equations.

\begin{proof}[Proof of Lemma \ref{lemma-div-curl}]
Appealing to the commuting identity \eqref{E:comm-2}, we can prove this lemma by induction.

For the proof, it suffices to keep track of the principle part of $R_{imjn}$ (without derivatives). By Gauss equation \eqref{E:Gauss-Riem-hat-k}, we know that the principle part of $R_{imjn}$ is given by $-\left(g_{ij} g_{mn} -g_{in} g_{mj} \right).$ %In what follows, we shall introduce the notation $\simeq$, which means equalling in the principle part. For example, $R_{imjn}\simeq -\left(g_{ij} g_{mn} -g_{in} g_{mj} \right).$

For $k=0$, we have
\begin{align}\label{pre-case}
& \quad \curl \nabla^{\mathring{k}} \Delta^{ [\frac{k}{2}] } H \cdot \nabla^{\mathring{k}} \Delta^{ [\frac{k}{2}] } E  -  \curl \nabla^{\mathring{k}} \Delta^{ [\frac{k}{2}] } E  \cdot \nabla^{\mathring{k}} \Delta^{ [\frac{k}{2}] } H \nnb \\
& =  g^{ab} \epsilon_i{}^{\! pq}   \nabla_q  \nabla^{\mathring{k}}_a \Delta^{ [\frac{k}{2}] } H_{pj} \cdot  \nabla^{\mathring{k}}_b \Delta^{ [\frac{k}{2}] } E^{ij}  - g^{ab} \epsilon_i{}^{\! pq} \nabla_q   \nabla^{\mathring{k}}_a \Delta^{ [\frac{k}{2}] } E_{pj} \cdot  \nabla^{\mathring{k}}_b \Delta^{ [\frac{k}{2}] }  H^{ij}  \nnb \\
&=  \nabla_q \left(g^{ab} \epsilon_i{}^{\! pq}  \nabla^{\mathring{k}}_a \Delta^{ [\frac{k}{2}] }  H_{pj} \cdot  \nabla^{\mathring{k}}_b \Delta^{ [\frac{k}{2}] } E^{ij} \right)   - g^{ab} \epsilon^{ipq} \nabla_q  \nabla^{\mathring{k}}_a \Delta^{ [\frac{k}{2}] } E_{ij} \cdot   \nabla^{\mathring{k}}_b \Delta^{ [\frac{k}{2}] } H_p^{j}  \nnb\\
& \quad \quad -  g^{ab} \epsilon^{ipq} \nabla_q  \nabla^{\mathring{k}}_a \Delta^{ [\frac{k}{2}] } E_{pj} \cdot  \nabla^{\mathring{k}}_b \Delta^{ [\frac{k}{2}] }H_i^{j}  \nnb  \\
&=   \nabla_q \left( g^{ab} \epsilon_i{}^{\! pq}  \nabla^{\mathring{k}}_a \Delta^{ [\frac{k}{2}] } H_{pj} \cdot  \nabla^{\mathring{k}}_b \Delta^{ [\frac{k}{2}] } E^{ij} \right).
\end{align}
To prove the case $k=1$, note that
\als{
 & \epsilon_i{}^{\! pq} \nabla_{i_1} \nabla_q  H_{pj} \cdot \nabla^{i_1} E^{ij}  - \epsilon_i{}^{\! pq} \nabla_{i_1} \nabla_q  E_{pj} \cdot \nabla^{i_1} H^{ij} \\
 ={}&  \epsilon_i{}^{\! pq}  \nabla_q \nabla_{i_1} H_{pj} \cdot \nabla^{i_1} E^{ij}  - \epsilon_i{}^{\! pq} \nabla_q   \nabla_{i_1} E_{pj} \cdot \nabla^{i_1} H^{ij} \\
 & +   \epsilon_i{}^{\! pq}  R_{i_1 q p}{}^{m} H_{m j} \cdot \nabla^{i_1} E^{ij} +   \epsilon_i{}^{\! pq}  R_{i_1 q j}{}^{n} H_{p n} \cdot \nabla^{i_1} E^{ij} \\
  &-   \epsilon_i{}^{\! pq}  R_{i_1 q p}{}^{m} E_{m j} \cdot \nabla^{i_1} H^{ij} -  \epsilon_i{}^{\! pq}  R_{i_1 q j}{}^{n} E_{p n} \cdot \nabla^{i_1} H^{ij} \\
= {} &  \epsilon_i{}^{\! pq}  \nabla_q \nabla_{i_1} H_{pj} \cdot \nabla^{i_1} E^{ij}  - \epsilon_i{}^{\! pq} \nabla_q   \nabla_{i_1} E_{pj} \cdot \nabla^{i_1} H^{ij} \\
  & - \epsilon_{i i_1}{}^{\! a}  H_{a j}  \nabla^{i_1} E^{ij} + \epsilon_{i i_1}{}^{\! a} E_{a j} \nabla^{i_1} H^{ij} \\
  & + O_{a b m n} * H * \nabla E + O_{a b m n} * E * \nabla H \\
= {} &   \nabla_q \left(  \epsilon_i{}^{\! pq}  \nabla_a  H_{pj} \cdot  \nabla^a  E^{ij} \right) + \nabla^{q} \left( \epsilon_{i q}{}^{\! a} E_{a j}  H^{ij} \right) \\
& + O_{a b m n} * H * \nabla E + O_{a b m n} * E * \nabla H,
}
where we used \eqref{pre-case}  in the last equality.

In general, suppose the conclusion holds  for $k \in \mathbb{Z}_+$,
\al{indu-k-1}{
& \epsilon_i{}^{\! pq}  \Delta^{ k } \nabla_q   H_{pj} \cdot  \Delta^{k } E^{ij}  -  \epsilon_i{}^{\! pq} \Delta^{ k } \nabla_q   E_{pj} \cdot   \Delta^{k }  H^{ij}  
}
and
\al{indu-k-2}{
&\quad  \epsilon_i{}^{\! pq}     \nabla_a \Delta^{k } \nabla_q  H_{pj} \cdot  \nabla^a \Delta^{ k} E^{ij}  -   \epsilon_i{}^{\! pq}   \nabla_a \Delta^{ k } \nabla_q  E_{pj} \cdot  \nabla^a \Delta^{ k}  H^{ij},
}
with arbitrary symmetric tensors $E, \, H$ on $M$. Then we will prove it holds with $k$ replaced by $k+1$. 
 
{\bf Step I.} By the commuting identity \eqref{E:comm-2}, 
\als{
& \epsilon_i{}^{\! pq}  \Delta^{ k+1 } \nabla_q   H_{pj} \cdot  \Delta^{k+1 } E^{ij}   -  \epsilon_i{}^{\! pq} \Delta^{ k+1 } \nabla_q   E_{pj} \cdot   \Delta^{k+1 }  H^{ij}  \\
={}& \epsilon_i{}^{\! pq} \Delta^{ k } \nabla_q  \Delta H_{pj} \cdot  \Delta^{k+1 } E^{ij}  -  \epsilon_i{}^{\! pq} \Delta^{ k} \nabla_q  \Delta E_{pj} \cdot   \Delta^{k+1 }  H^{ij} \\
&- \epsilon_i{}^{\! pq} \Delta^k  (2R_{q a p}{}^{b} \nabla^a H_{bj} + 2R_{qaj}{}^{b} \nabla^a H_{p b} - R_q^a \nabla_a H_{p j}) \cdot  \Delta^{k+1 } E^{ij} \\
&+ \epsilon_i{}^{\! pq} \Delta^k  (2R_{q a p}{}^{b} \nabla^a E_{bj} + 2R_{qaj}{}^{b} \nabla^a E_{p b} - R_q^a \nabla_a E_{p j}) \cdot  \Delta^{k+1 } H^{ij} \\
&+ \Delta^k (\nabla R_{a b m n} * H) * \Delta^{k+1} E + \Delta^k (\nabla R_{a b m n} * E) * \Delta^{k+1} H \\
={}& \epsilon_i{}^{\! pq}   \Delta^{ k } \nabla_q ( \Delta H_{pj} ) \cdot  \Delta^{k  } ( \Delta E^{ij} )  -  \epsilon_i{}^{\! pq} \Delta^{ k} \nabla_q  (\Delta E_{pj}) \cdot   \Delta^{k} ( \Delta H^{ij}) \\
& - 4 \epsilon_i{}^{\! pq} \Delta^k \nabla_p H_{q j} \cdot  \Delta^{k+1 } E^{ij} + 4 \epsilon_i{}^{\! pq} \Delta^k \nabla_p E_{q j} \cdot  \Delta^{k+1 } H^{ij} \\
&+ \Delta^k \nabla ( R_{a b m n} * H) * \Delta^{k+1} E + \Delta^k \nabla ( R_{a b m n} * E) * \Delta^{k+1} H.
}
By induction \eqref{E:indu-k-1}, the first line on the right hand side of the last equality $$\epsilon_i{}^{\! pq} \Delta^{ k } \nabla_q ( \Delta H_{pj} ) \cdot  \Delta^{k  } ( \Delta E^{ij} )  -  \epsilon_i{}^{\! pq} \Delta^{ k} \nabla_q  (\Delta E_{pj}) \cdot   \Delta^{k} ( \Delta H^{ij})$$ can be expressed in the expected form, and the second line
\als{
& - 4 \epsilon_i{}^{\! pq} \Delta^k \nabla_p H_{q j} \cdot  \Delta^{k+1 } E^{ij} + 4 \epsilon_i{}^{\! pq} \Delta^k \nabla_p E_{q j} \cdot  \Delta^{k+1 } H^{ij} \\
={}& - \nabla_m (4 \epsilon_i{}^{\! pq} \Delta^k \nabla_p H_{q j} \cdot  \nabla^m \Delta^{k} E^{ij}) + \nabla_m (4 \epsilon_i{}^{\! pq} \Delta^k \nabla_p E_{q j} \cdot  \nabla^m \Delta^{k} H^{ij})  \\
&+ 4 \epsilon_i{}^{\! pq}  \nabla_m  \Delta^k \nabla_p H_{q j} \cdot  \nabla^m \Delta^{k} E^{ij} - 4 \epsilon_i{}^{\! pq}  \nabla_m  \Delta^k \nabla_p E_{q j} \cdot  \nabla^m \Delta^{k} H^{ij},
}
where the last line is as well in the expected form by induction \eqref{E:indu-k-2}.

{\bf Step II.}
\als{
&  \epsilon_i{}^{\! pq} \nabla_a \Delta^{k+1 } \nabla_q  H_{pj} \cdot  \nabla^a \Delta^{ k+1 } E^{ij} -   \epsilon_i{}^{\! pq}   \nabla_a \Delta^{ k +1 } \nabla_q  E_{pj} \cdot  \nabla^a \Delta^{ k+1}  H^{ij} \\
={}& \epsilon_i{}^{\! pq}  \nabla_a \Delta^{ k } \nabla_q  \Delta H_{pj} \cdot \nabla^a \Delta^{k+1 } E^{ij}  -  \epsilon_i{}^{\! pq} \nabla_a \Delta^{ k} \nabla_q  \Delta E_{pj} \cdot \nabla^a  \Delta^{k+1 }  H^{ij} \\
&- \epsilon_i{}^{\! pq} \nabla_a \Delta^k  (2R_{q a p}{}^{b} \nabla^a H_{bj} + 2R_{qaj}{}^{b} \nabla^a H_{p b} - R_q^a \nabla_a H_{p j}) \cdot \nabla^a \Delta^{k+1 } E^{ij} \\
&+ \epsilon_i{}^{\! pq} \nabla_a \Delta^k  (2R_{q a p}{}^{b} \nabla^a E_{bj} + 2R_{qaj}{}^{b} \nabla^a E_{p b} - R_q^a \nabla_a E_{p j}) \cdot \nabla^a \Delta^{k+1 } H^{ij} \\
&+ \nabla \Delta^k (\nabla R_{a b m n} * H) * \nabla \Delta^{k+1} E + \nabla \Delta^k (\nabla R_{a b m n} * E) * \nabla \Delta^{k+1} H \\
={}& \epsilon_i{}^{\! pq}   \nabla_a \Delta^{ k } \nabla_q ( \Delta H_{pj} ) \cdot \nabla^a \Delta^{k  } ( \Delta E^{ij} )  -  \epsilon_i{}^{\! pq} \nabla_a \Delta^{ k} \nabla_q  (\Delta E_{pj}) \cdot  \nabla^a \Delta^{k} ( \Delta H^{ij}) \\
& - 4 \epsilon_i{}^{\! pq} \nabla_a \Delta^k \nabla_p H_{q j} \cdot \nabla^a \Delta^{k+1 } E^{ij} + 4 \epsilon_i{}^{\! pq} \nabla_a \Delta^k \nabla_p E_{q j} \cdot \nabla^a \Delta^{k+1 } H^{ij} \\
&+ \nabla \Delta^k \nabla ( O_{a b m n} * H) * \nabla \Delta^{k+1} E + \nabla \Delta^k \nabla ( O_{a b m n} * E) * \nabla \Delta^{k+1} H,
}
where by induction \eqref{E:indu-k-2}, the first line $$\epsilon_i{}^{\! pq}   \nabla_a \Delta^{ k } \nabla_q ( \Delta H_{pj} ) \cdot \nabla^a \Delta^{k  } ( \Delta E^{ij} )  -  \epsilon_i{}^{\! pq} \nabla_a \Delta^{ k} \nabla_q  (\Delta E_{pj}) \cdot  \nabla^a \Delta^{k} ( \Delta H^{ij})$$ can be recast into the expected form. Now we are left with the second line:
\als{
& - 4 \epsilon_i{}^{\! pq} \nabla_a \Delta^k \nabla_p H_{q j} \cdot \nabla^a \Delta^{k+1 } E^{ij} + 4 \epsilon_i{}^{\! pq} \nabla_a \Delta^k \nabla_p E_{q j} \cdot \nabla^a \Delta^{k+1 } H^{ij} \\
={}& - \nabla^a (4 \epsilon_i{}^{\! pq} \nabla_a \Delta^k \nabla_p H_{q j} \cdot  \Delta^{k+1} E^{ij}) + \nabla^a (4 \epsilon_i{}^{\! pq} \nabla_a \Delta^k \nabla_p E_{q j} \cdot  \Delta^{k+1} H^{ij})  \\
&+ 4 \epsilon_i{}^{\! pq}   \Delta^{k+1} \nabla_p H_{q j} \cdot  \Delta^{k+1} E^{ij} - 4 \epsilon_i{}^{\! pq}   \Delta^{k+1} \nabla_p E_{q j} \cdot \Delta^{k+1} H^{ij}
}
which has been already confirmed in the previous step.
\end{proof}

\end{document}